\documentclass[11pt]{article}
\usepackage{amsmath,amsthm,amssymb,amscd}
\usepackage{xurl}
\usepackage[hidelinks]{hyperref}
\usepackage{geometry}
\usepackage{tikz-cd}
\usepackage{booktabs} 
\usepackage{graphicx} 
\usepackage{multirow}
\usepackage{tikz}
\usetikzlibrary{calc,positioning,3d}
\usepackage[normalem]{ulem} 

\usepackage{float}

\usepackage{subcaption}
\usepackage{caption}

\usepackage{makecell} 

\geometry{margin=1in}
\parindent=0pt
\parskip=6pt

 \usepackage{threeparttable}

\newtheoremstyle{nature}
{6pt}          
{6pt}          
{\normalfont}  
{}             
{\itshape}     
{.}            
{0.5em}        
{}

\theoremstyle{nature}
\newtheorem{theorem}{Theorem}

\newtheorem{proposition}{Proposition}

\theoremstyle{definition}

\title{CAKR: Commutative algebra k-mer representations for genomics}
\author{
	Faisal Suwayyid$^{1,2}$,
	Yuta Hozumi$^{2,\dagger}$,
	Mushal Zia$^{2}$, \\
	JunJie Wee$^{2}$,
	Hongsong Feng$^{3}$,
	and Guo-Wei Wei$^{2,4,5,*}$ \\[4pt]
	$^{1}$Mathematics Department,\\
	King Fahd University of Petroleum and Minerals, Dhahran 31261, KSA.\\
	$^{2}$Department of Mathematics,\\
	Michigan State University, East Lansing, MI 48824, USA.\\
	$^{3}$Department of Mathematics and Statistics,\\
	University of North Carolina at Charlotte, Charlotte, NC 28223, USA.\\
	$^{4}$Department of Mathematics,\\
	University of Georgia, Athens, GA 30602, USA.\\
	$^{5}$Department of Biochemistry and Molecular Biology,\\
	University of Georgia, Athens, GA 30602, USA.\\[4pt]
	$^{\dagger}$Current address: School of Mathematics,
	Georgia Institute of Technology, Atlanta, GA, USA.\\
	$^{*}$Correspondence: Guo-Wei Wei
	(\href{mailto:guowei.wei@uga.edu}{guowei.wei@uga.edu}).
}

\date{\small Accepted for publication in \textit{Nature Communications}.}

\begin{document}

 \maketitle

\begin{abstract}
Despite the availability of various sequence analysis models, comparative genomic analysis remains a challenge in genomics, genetics, and phylogenetics. Commutative algebra, a fundamental tool in algebraic geometry and number theory, has rarely been used in data and biological sciences. In this study, we introduce commutative algebra $k$-mer representations as a nonlinear algebraic framework for analyzing genomic sequences. This representation bridges commutative algebra, algebraic topology, combinatorics, and machine learning to establish a mathematical framework for comparative genomic analysis. We evaluate its effectiveness on three tasks including genetic variant classification, phylogenetic tree reconstruction, and viral classification, typically requiring alignment-based, alignment-free, and machine-learning approaches, respectively. In this work, we show that commutative algebra $k$-mer representations outperform five state-of-the-art sequence analysis methods across twelve primary datasets, with two additional supplementary fragment-placement benchmarks, especially in viral classification, and maintain relatively stable predictive accuracy as dataset size increases, underscoring scalability and robustness.

\end{abstract}

\clearpage
\section{Introduction}
\label{sec:introduction}

Comparative genomics examines genetic variation across species and populations to study evolution, identify functional elements, assess diversity, and reconstruct phylogeny~\cite{rubin2000comparative, frazer2004vista}. Comparative analysis of genomic sequences includes phylogenetic inference, functional annotation, and phenotype classification~\cite{nei1996phylogenetic, bellgard1999dynamic}. These analyses require a genome space, a metric space whose points represent genomes and whose distances capture biologically meaningful similarities. An effective metric should reflect structural, functional, and evolutionary relationships, enabling robust comparison and downstream analyses.

Traditional approaches rely on alignment-based methods that identify substitutions, insertions, and deletions through global or local optimization. Tools such as Clustal Omega~\cite{sievers2011fast}, MAFFT~\cite{katoh2013mafft}, and MUSCLE~\cite{edgar2004muscle} are effective for closely related sequences~\cite{patino2021recombination}. However, their computational cost scales poorly with sequence length and dataset size, and their accuracy deteriorates for highly divergent sequences, limiting their applicability in phylogenetics~\cite{edgar2004muscle}.

Alignment-free methods address these limitations by mapping sequences to fixed-length vectors~\cite{vinga2014alignment}, enabling scalability~\cite{zielezinski2017alignment, bonham2014alignment} and whole-genome analysis~\cite{bernard2016alignment, zielezinski2019benchmarking}. Frequency-based approaches use \(k\)-mer count vectors~\cite{blaisdell1986measure}, while others employ entropy~\cite{tribus1971energy}, Lempel–Ziv~\cite{otu2003new}, or Kolmogorov complexity~\cite{li2008kolmogorov} measures. Model-based alignment-free approaches, such as CAFE~\cite{lu2017cafe}, further derive explicit evolutionary distances from $k$-mer frequency statistics under substitution assumptions. Although efficient, many neglect positional or structural information, limiting their performance in genetic variant analysis~\cite{zielezinski2017alignment}.

Recent advances aim to enrich such feature representations. The natural vector method (NVM) encodes statistical moments of (k)-mer positions~\cite{yu2013real, deng2011novel}. Chaos game representation (CGR) maps sequences into fractal images~\cite{jeffrey1990chaos, randic2013milestones}, while Fourier power spectrum (FPS) analysis extracts dominant periodicities~\cite{hoang2015fourier, yin2014fourier}. Multi-scale integration has also been achieved via fuzzy integrals~\cite{saw2019alignment, yu2010novel}. Persistent challenges include sensitivity to parameter choices, such as (k), weights, and dimensions~\cite{yu2024optimal, sims2009alignment}, and a limited capacity to detect biologically significant variants. These challenges motivate novel computational approaches to genomics.

Commutative algebra, the study of commutative rings, ideals, and modules, underpins key areas of modern mathematics, including algebraic geometry, number theory, and homological algebra. Despite its foundational role in pure mathematics, it has hardly been applied in science and technology due to its abstractness and lack of metric. Recently, Suwayyid and Wei introduced multi-scale analysis to commutative algebra, enabling the potential application of abstract nonlinear algebra to data science and learning \cite{suwayyid2025persistent}.

In this work, we introduce commutative algebra to genomics. By leveraging persistent Stanley-Reisner theory (PSRT)~\cite{suwayyid2025persistent}, we propose commutative algebra \(k\)-mer representations (CAKR) to integrate \(k\)-mers representations of sequences~\cite{hozumi2024revealing} with persistence modules arising from Stanley-Reisner constructions. CAKR is evaluated on twelve primary datasets: one for genetic variant classification, seven for phylogenetic tree reconstruction, and four for viral classification derived from the National Center for Biotechnology Information (NCBI) Virus database. In addition, we include two supplementary fragment-placement benchmarks for barcode-based influenza query matching and comparison with SEPP.

We systematically compare CAKR against five state-of-the-art alignment-free methods: the Natural Vector Method (NVM)~\cite{sun2021geometric}, the Markov \(k\)-string model (MKS)~\cite{QiWangHao2004}, Feature Frequency Profile with Jensen-Shannon (FFP-JS) divergence and with Kullback-Leibler (FFP-KL) divergence \cite{sims2009alignment, jun2010whole}, and  Fourier Power Spectrum (FPS)~\cite{hoang2015fourier} and an alignment-based method, MAFFT~\cite{katoh2013mafft}.
Section~\ref{sec:results} presents the experimental results. Section ~\ref{sec:discussion} discusses performance, limitations, and generalization. Section~\ref{sec:methods} describes the proposed CAKR methodology and introduces a new purity metric for evaluating the quality of phylogenetic tree reconstruction.

\section{Results}
\label{sec:results}

\subsection*{An overview of CAKR}

\begin{figure}[!h]
	\centering
	\begin{subfigure}[b]{0.99\textwidth}
		\includegraphics[width=\textwidth]{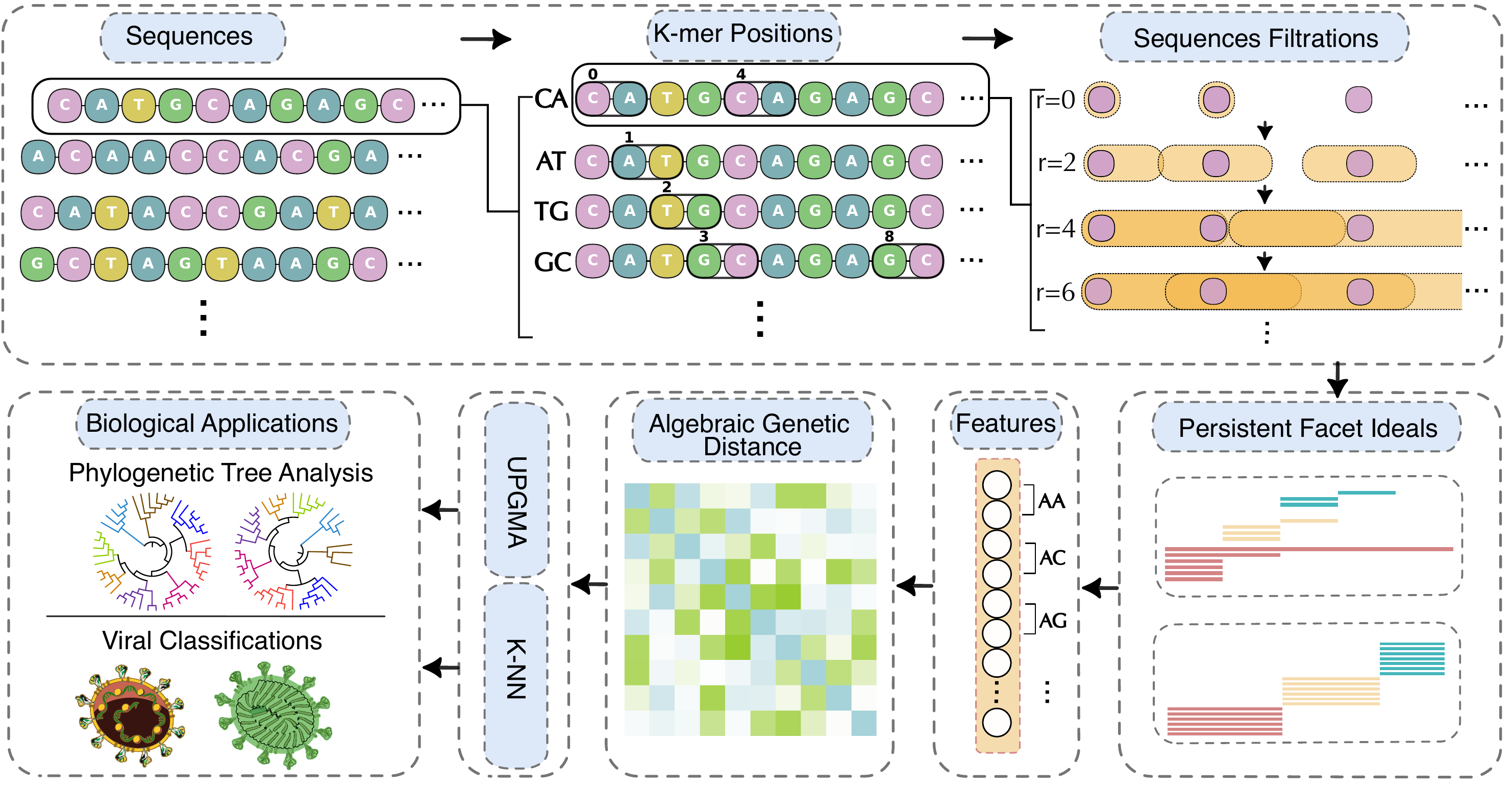}
	\end{subfigure}
\caption{\textbf{Illustration of the CAKR workflow.}
	Given a query sequence, \(k\)-mers are first extracted. For each \(k\)-mer,
	the set of its occurrence positions within the sequence is treated as a
	sequence of integers. The persistent facet numbers associated with these
	sequences of integers are then computed and used to represent the
	corresponding \(k\)-mer. The feature vectors of all \(k\)-mers of the same
	size are concatenated to construct an algebraic representation. Pairwise
	distances between these sequences are subsequently defined and used for
	tasks such as genome variant classification, phylogenetic tree
	reconstruction, and genome classification. SARS-CoV-2 artwork adapted
	from NIAID NIH BIOART Source BIOART-000465
	\cite{niaid2024sarscov2_bioart} (public domain; courtesy of NIAID);
	figure assembled by the authors using Inkscape.
	CAKR, commutative algebra \(k\)-mer representation; UPGMA, unweighted
	pair group method with arithmetic mean; K-NN, \(K\)-nearest neighbors.
	The symbol \(r\) denotes the filtration radius. Nucleotide colours denote
	A (teal), C (purple), G (blue), and T (yellow); the remaining colours
	distinguish schematic components and carry no quantitative meaning.}
\label{fig:workflow}
\end{figure}

To evaluate the effectiveness of the proposed CAKR method, illustrated by the workflow in Fig.~\ref{fig:workflow}, we consider three key applications: genetic variant classification, phylogenetic tree reconstruction, and viral classification. Genetic variant classification is an important application in genetics and bioinformatics. It tracks genetic variations in selected gene lists associated with specific diseases, phenotypes, and populations, for which alignment methods are typically favored. Additionally, phylogenetic tree reconstruction of genetic sequences plays a fundamental role in elucidating evolutionary relationships both among and within species, where alignment-free methods have advantages \cite{nei1996phylogenetic, bellgard1999dynamic}. Finally, viral classification is a general machine learning approach for the genetic analysis and prediction of unknown viral sequences. It is challenging to design a unified approach for these diverse tasks.

\subsection*{Genetic variant classification} \label{subsec:viral_variant}

\begin{figure}[!h]
	\centering
	\begin{subfigure}[b]{1.0\textwidth}
		\includegraphics[width=\textwidth]{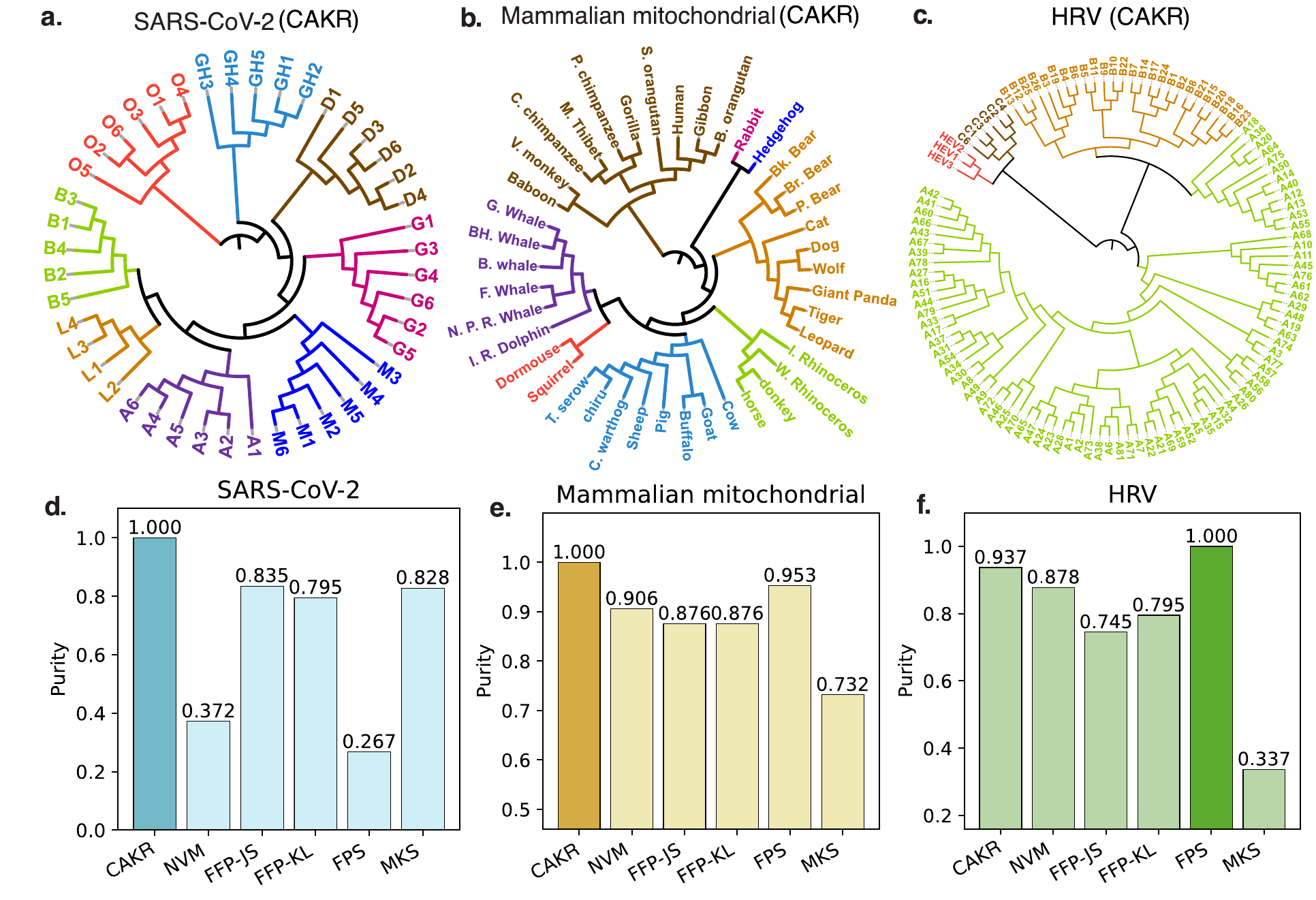}
	\end{subfigure}
\caption{\textbf{Illustration and comparison of the proposed
		CAKR model with other methods in variant classification and phylogenetic
		tree reconstruction.}
	\textbf{a,} CAKR classification of SARS-CoV-2 variants.
	\textbf{b--c,} CAKR phylogenetic tree reconstructions of the mammalian
	mitochondrial genome and HRV datasets.
	\textbf{d--f,} Comparison of prediction accuracies of different methods
	on the SARS-CoV-2, mammalian mitochondrial, and HRV datasets,
	respectively.
	CAKR, commutative algebra \(k\)-mer representation; NVM, natural vector
	method; FFP-JS, feature frequency profile with Jensen--Shannon divergence;
	FFP-KL, feature frequency profile with Kullback--Leibler divergence; FPS,
	Fourier power spectrum; MKS, Markov \(k\)-string model; HRV, human
	rhinovirus. Branch and label colours in \textbf{a--c} denote the known
	variant or taxonomic groups shown by the corresponding labels. In
	\textbf{d--f}, the darker bar indicates the highest-performing method,
	whereas the paler bars indicate the remaining methods. The methods compared
	in \textbf{d--f} are identified beneath the corresponding bars. Source data
	are provided as a Source Data file.}
\label{fig:tree-accuracies}
\end{figure}

The emergence of SARS-CoV-2 variants during the COVID-19 pandemic posed critical challenges for monitoring viral evolution, guiding public health measures, and informing vaccine and therapeutic design~\cite{tao2021biological}. Variant differences can be subtle, often involving only a few mutations in the spike protein receptor-binding domain~\cite{chen2022omicron}, despite genome lengths of approximately 29.9~kb.
For the genetic variant classification task, we analyzed 44 complete SARS-CoV-2 genomes from GISAID~\cite{hozumi2024revealing, Li2024GenomeGrassmann, khare2021gisaid}.
This dataset follows the curated benchmark of Li~et al.~\cite{Li2024GenomeGrassmann}, where representative complete genomes were selected to cover major variant lineages (Alpha, Beta, Gamma, Delta, Lambda, Mu, GH/490R, and Omicron) while maintaining a compact evaluation set. Each genome is labeled according to its assigned lineage, and branches are annotated and color-coded accordingly in our phylogenetic trees. We employed CAKR with \(k=6\), as determined by our data-driven scale selection framework, which identifies an optimal \(k\)-mer size through a weighted consensus of coverage, information-sparsity, and stability-adjusted entropy criteria applied to the dataset. Because SARS-CoV-2 genomes are highly similar across lineages, we use a slightly larger \(k\) to capture subtle variant-defining patterns while remaining computationally practical. NVM was run with \(k=5\), while FFP-KL, FFP-JS, and MKS used \(k=3\); FPS required no \(k\)-mer parameter.
Among six alignment-free methods, only CAKR achieved perfect clustering, producing well-defined monophyletic groups (Fig.~\ref{fig:tree-accuracies}a, Supplementary Fig.~1) and accurately resolving inter-variant relationships. FFP-JS, FFP-KL, and MKS attained moderate accuracy but mis-clustered some Delta, Gamma, GH/490R, and Omicron samples, whereas NVM and FPS failed to produce meaningful phylogenetic structure. Performance was quantified using the label-based purity metric given by equation~\eqref{eq:purity_metric}, which measures the proportion of samples with identical labels grouped under a common ancestor. Overall, CAKR achieved the highest purity score among all methods (Fig.~\ref{fig:tree-accuracies}d).

\subsection*{Phylogenetic tree reconstruction}
\label{subsec:phylogenetic}

Phylogenetic reconstruction is central to elucidating evolutionary relationships among taxa~\cite{nei1996phylogenetic, bellgard1999dynamic}. We assessed CAKR on six benchmark datasets from~\cite{hozumi2024revealing}, encompassing complete genomes and gene sequences with established taxonomic labels. Sequence lengths range from $\sim$2{,}000 nt for influenza \textit{HA} genes to $\sim$17{,}000 nt for mammalian mitochondrial and Ebola virus genomes, and up to several megabases for bacterial genomes.

Phylogenetic trees were inferred using CAKR and compared with those from representative alignment-free methods. Following~\cite{hozumi2024revealing}, we set \(k = 3\) for FFP-KL, FFP-JS, and MKS models, \(k = 5\) for NVM, and selected \(k=4\) for CAKR using our data-driven $k$-mer scale selection framework; FPS required no \(k\)-mer parameter. All trees were built with the UPGMA algorithm and visualized in Interactive Tree Of Life (iTOL) v6~\cite{letunic2024interactive}. As a supplementary sensitivity analysis, we also compared CAKR trees built with UPGMA and Neighbor-Joining (NJ). UPGMA achieved perfect purity on the rhinovirus, mammalian, and SARS-CoV-2 datasets, whereas NJ reduced purity for rhinovirus and mammalian data while matching UPGMA on SARS-CoV-2 (Supplementary Section~7; Supplementary Fig.~15).

In the mammalian mitochondrial dataset, the goal was to assess how accurately each method recovers clades consistent with known host species classifications. CAKR achieved perfect concordance, producing monophyletic clades for all major mammalian orders, including delineated Primates, Cetacea, and Artiodactyla. It also preserved internal coherence within Carnivora and Perissodactyla (Fig.~\ref{fig:tree-accuracies}b; Supplementary Fig.~3). Furthermore, CAKR attained the highest purity score among all methods (Fig.~\ref{fig:tree-accuracies}e).

Other methods showed varying levels of discordance: FPS partially recovered major orders but split Artiodactyla; NVM misplaced several taxa and fragmented Carnivora; FFP-JS and FFP-KL preserved many lineages but mis-grouped smaller orders and split Carnivora; and MKS produced the weakest resolution, fragmenting multiple orders entirely.

In the HRV dataset, CAKR separates HRV-B and HRV-C into coherent clusters and places HEV as an outgroup (Fig.~\ref{fig:tree-accuracies}c), while HRV-A shows less consistent clustering, with a small number of HRV-A taxa occurring near the HRV-B region of the tree. On the other hand, FPS achieved perfect classification, cleanly separating HRV-A, HRV-B, and HRV-C clades, with the outgroup (HEV) distinctly isolated (Supplementary Fig.~4). NVM, FFP-JS, and FFP-KL produced moderate results, partially recovering the three clades but misplacing several HRV-A genomes within HRV-B subtrees, indicating reduced sensitivity to closely related subtypes. MKS performed poorest, failing to recover a coherent subgroup structure and producing extensive intermixing among clades, along with poor outgroup resolution.

In the HEV dataset, CAKR, FFP-JS, FFP-KL, and NVM all achieved perfect genotypic clustering (Supplementary Fig.~5). MKS misclassified one Group~3 genome, while FPS performed worst, failing to separate Groups~3 and~4, though correctly clustering Group~1.

In the influenza \textit{HA} dataset, CAKR, FFP-JS, and FFP-KL all achieved perfect subtype classification (Supplementary Fig.~6), though FFP-JS and FFP-KL grouped H3N2 and H2N2 under a shared node, unlike CAKR. NVM failed to cluster H1N1 cohesively and misclassified one H2N2 sequence, while MKS misclassified an H1N1. FPS performed worst, correctly clustering only the H7N9 subtype.

In the ebolavirus dataset, all methods correctly separated the five known species (Supplementary Fig.~7). NVM, FFP-JS, FFP-KL, and MKS distinguished epidemic lineages except for the 1996 outbreak, with NVM producing the longest inter-clade branches. FFP-JS and FFP-KL showed shorter branches, indicating weaker sensitivity, while MKS grouped EBOV and RESTV under a common ancestor. FPS failed to resolve the epidemic-level structure within EBOV.

To evaluate performance on longer sequences, we applied CAKR to 30 complete bacterial genomes ranging from 0.9 to 6.5~Mb. Despite the substantial increase in sequence size and complexity, all methods except NVM and FPS recovered correct family-level groupings (Supplementary Fig.~8). Overall, CAKR consistently delivered accurate and biologically coherent phylogenies, outperforming existing alignment-free methods across diverse genomic datasets. A qualitative cross-dataset summary of phylogenetic reconstruction behavior is provided in Supplementary Table~2.

To quantify phylogenetic consistency throughout our experiments, we use the purity metric in Eq.~(\ref{eq:purity_metric}) as our primary evaluation measure (Fig.~\ref{fig:tree-accuracies}). Purity summarizes the extent to which sequences sharing the same taxonomic label form coherent (approximately monophyletic) groups in the inferred tree, providing a label-based assessment aligned with known classifications.

In addition, we report the normalized Robinson-Foulds (nRF) distance between CAKR trees and the MAFFT-based reference trees as a complementary, topology-only measure of tree-to-tree similarity. Averaged across datasets, CAKR attained an nRF of $0.39$. Further details are provided in Supplementary Section~3.

To further assess discrimination in a narrow-clade setting, we applied CAKR and the competing baselines to two closely related Salmonella datasets. In a focused dataset of $294$ genomes, comprising S. enterica subspecies arizonae ($n=35$), diarizonae ($n=17$), enterica ($n=194$), and salamae ($n=48$), CAKR achieved the highest average purity ($0.8885$), exceeding MKS ($0.8831$), FFP-KL and FFP-JS (both $0.7417$), FPS ($0.6920$), and NVM ($0.5938$). We then expanded the analysis to a broader dataset of $519$ closely related Salmonella genomes. On this more extensive narrow-clade benchmark, MKS and FFP remained competitive, while NVM and FPS exhibited substantially reduced resolution. Moreover, CAKR again achieved the highest purity, outperforming MKS and FFP by at least $3.9\%$.

\begin{table}[H]
	\centering
	\caption{\textbf{One-nearest-neighbor viral classification accuracy.}
		CAKR was compared with NVM, FFP-JS, FFP-KL, FPS and MKS on four NCBI viral genome datasets. Values report 1-NN classification accuracy. Source data are provided as a Source Data file.}
	\label{tab:1nn_accuracy}
	\begin{threeparttable}
		\begin{tabular}{lcccccc}
			\toprule
			\textbf{Data} & \textbf{CAKR} & \textbf{NVM} & \textbf{FFP-JS} & \textbf{FFP-KL} & \textbf{FPS} & \textbf{MKS} \\
			\midrule
			NCBI 2020     & \textbf{0.932} & 0.879 & 0.862 & 0.862 & 0.732 & 0.734 \\
			NCBI 2022     & \textbf{0.920} & 0.875 & 0.870 & 0.870 & 0.732 & 0.735 \\
			NCBI 2024     & \textbf{0.891} & 0.829 & 0.825 & 0.826 & 0.656 & 0.637 \\
			NCBI 2024 All & \textbf{0.892} & 0.825 & 0.832 & 0.832 & 0.647 & 0.647 \\
			\bottomrule
		\end{tabular}
		\begin{tablenotes}[flushleft]
			\footnotesize
			\item Note: Boldface indicates the highest accuracy within each row.
		\end{tablenotes}
	\end{threeparttable}
\end{table}

\subsection*{Viral classification }
\label{subsec:viral_classification}

\begin{figure}[!h]
	\centering
	\begin{subfigure}[b]{1\textwidth}
		\includegraphics[width=\textwidth]{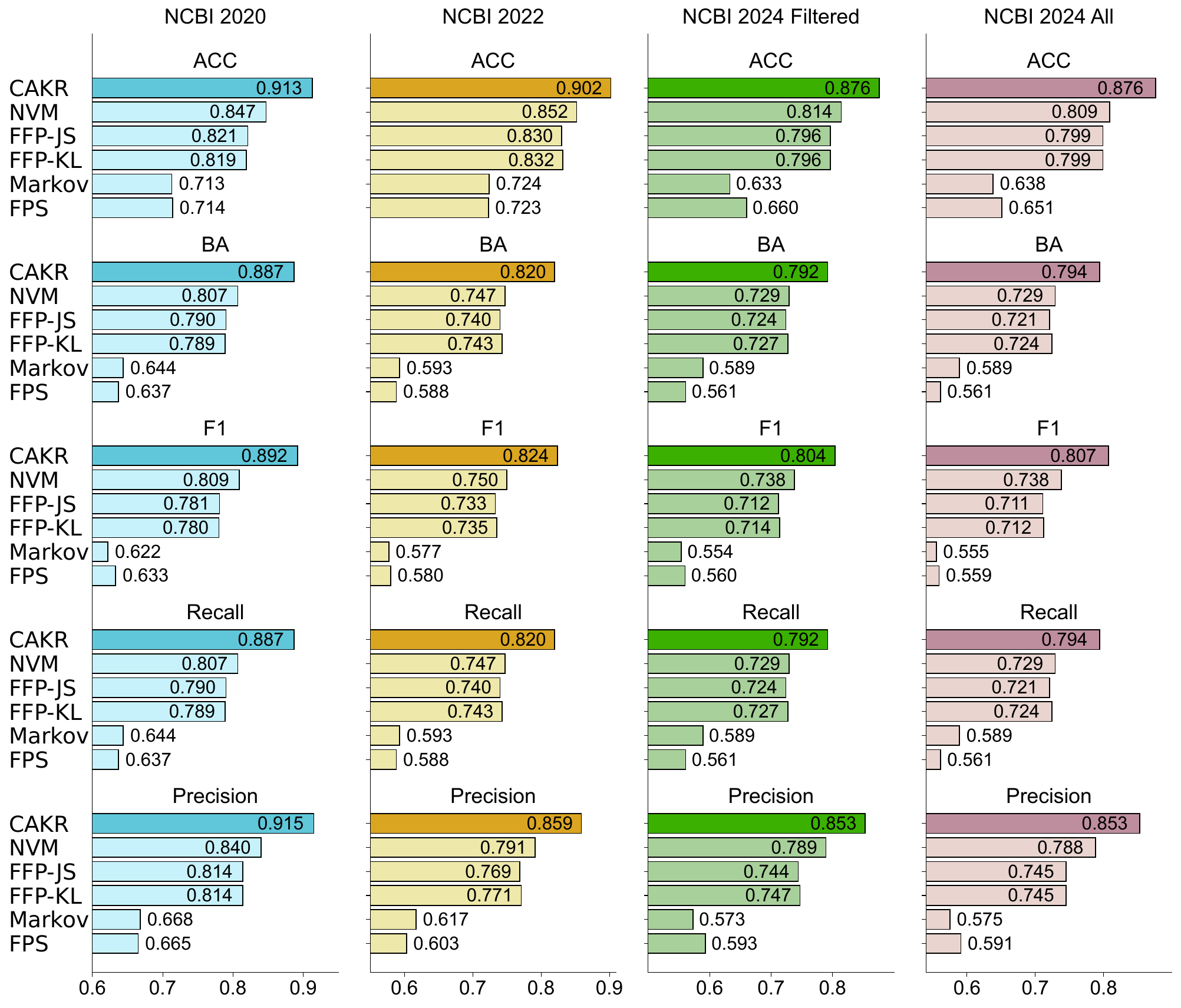}
	\end{subfigure}
\caption{\textbf{Performance of CAKR in 5-nearest-neighbor
		viral classification.}
	CAKR was compared with five alignment-free methods, NVM, FFP-JS, FFP-KL,
	FPS, and MKS, on four NCBI viral genome datasets: NCBI 2020, NCBI 2022,
	NCBI 2024, and NCBI 2024 All. Classification was performed using a
	5-nearest-neighbor classifier with stratified 5-fold cross-validation
	repeated over 30 random seeds. Performance was evaluated using accuracy
	(ACC), balanced accuracy (BA), macro-F1 score, recall, and precision.
	Macro-averaged scores are reported to account for class imbalance among
	viral families. CAKR achieved the highest score across all datasets and
	evaluation metrics.
	CAKR, commutative algebra \(k\)-mer representation; NVM, natural vector
	method; FFP-JS, feature frequency profile with Jensen--Shannon divergence;
	FFP-KL, feature frequency profile with Kullback--Leibler divergence; FPS,
	Fourier power spectrum; MKS, Markov \(k\)-string model; NCBI, National
	Center for Biotechnology Information; ACC, accuracy; BA, balanced
	accuracy; F1, macro-averaged F1 score. Panel colours distinguish the four
	dataset versions: blue, gold, green, and pink represent NCBI 2020,
	NCBI 2022, NCBI 2024, and NCBI 2024 All, respectively. Within each panel, the darker bar in each metric group indicates the
	highest-performing method, whereas the paler bars indicate the remaining
	methods. Method names are indicated beside the bars in panel \textbf{a}
	and apply in the same order to panels \textbf{b--d}. Source data are
	provided as a Source Data file.}
\label{fig:NCBI}
\end{figure}

In this task, we conduct viral classification experiments on the four NCBI datasets described in Supplementary Section 1, following the experimental design outlined by Hozumi et al.~\cite{hozumi2024revealing}. In particular, we adopt the same comparative benchmarking strategies to ensure consistency. The labels assigned to viral sequences in the NCBI Virus Database are regularly revised, as the International Committee on Taxonomy of Viruses (ICTV) continuously updates viral classifications based on new scientific findings. This ongoing process reflects the complexity of the classification problem and emphasizes the importance of using methods that remain reliable despite changes in biological taxonomy. A summary of the dataset versions, filtering criteria, and sample sizes is provided in Supplementary Table~1.
Moreover, many of the NCBI genomes analyzed in this study already include ambiguous nucleotides (e.g., \texttt{N}, \texttt{R}, \texttt{W}, \texttt{Y}), reflecting uncertainty or partial information that commonly arises in real sequencing/assembly pipelines. The stable performance observed across these datasets therefore provides empirical evidence that CAKR is robust to a moderate level of incompleteness.

Two classification tasks are considered. The first employs a 1-nearest neighbor (1-NN) classifier on feature representations derived from the persistent facet ideals featurization, following the methodology introduced by Sun et al.~\cite{sun2021geometric}. A test sample is deemed correctly classified if its nearest neighbor, under the algebraic distance metric, shares the same viral family label. This protocol models realistic scenarios in which newly sequenced viral genomes are annotated based on proximity to previously characterized reference genomes.

The second task involves a 5-nearest neighbors (5-NN) classifier, using the same experimental protocol described in~\cite{hozumi2024revealing}. Performance is assessed via 5-fold cross-validation repeated over 30 random seeds to ensure statistical robustness. To control for class imbalance and maintain classification reliability, the evaluation is restricted to viral families with at least 15 representative sequences. In both tasks, we empirically observed that increasing the value of \( k \) in the \( k \)-mer algebraic representations generally leads to a decline in model performance. This behavior is expected, as larger values of \( k \) result in more distinct representations for each genome, irrespective of their biological classification, thereby reducing the amount of shared structural information that can be effectively used for grouping. The model exhibits consistently strong performance for \( k = 3, 4, 5 \), with \( k = 4 \) selected for use in both tasks.

For each dataset, stratified 5-fold cross-validation was conducted using 30 independent random seeds to obtain performance metrics. Classification performance was assessed using accuracy (ACC), balanced accuracy (BA), macro-F1 score (F1), recall, and precision. All metrics were computed using the macro-averaging scheme to ensure equal weight across viral families, regardless of class imbalance.

All methods exhibit a consistent decrease in accuracy from 2020 to 2024 across both classification tasks. As summarized in Table~\ref{tab:1nn_accuracy}, our model exhibits a strong predictive performance under the 1-nearest neighbor (1-NN) classification protocol. On the NCBI 2020 dataset, consisting of 6,993 samples, the model achieved an accuracy of \( 0.932 \). For the larger NCBI 2022 dataset comprising 11,428 samples, an accuracy of \( 0.920 \) was obtained. For the NCBI 2024 dataset, we evaluated performance under two settings: for the filtered set of 12,154 samples, an accuracy of \( 0.891 \) was obtained, while for the complete set containing 13,645 samples, the model achieved an accuracy of  \( 0.892 \).

Furthermore, using 5-nearest neighbors classification with 5-fold cross-validation, as summarized in Fig.~\ref{fig:NCBI} and Supplementary Table~3, our proposed model demonstrates higher predictive performance compared to existing state-of-the-art approaches across all datasets. Specifically, on the NCBI 2020 dataset, our model achieved an accuracy of \( 0.913 \). On the larger NCBI 2022 dataset, the model attained an accuracy of \( 0.902 \). On the filtered NCBI 2024 dataset, the model achieved an accuracy of \( 0.876 \), while on the full NCBI 2024 dataset containing all samples, an accuracy of \( 0.876 \) was recorded. These results underscore the robustness and effectiveness of our method in modeling complex viral genome spaces and establishing reliable predictive frameworks for viral classification tasks.

As a further assessment of robustness, we quantified the stability of method rankings using Kendall’s coefficient of concordance ($W$). By treating evaluation metrics as judges within each dataset and datasets as judges within each metric, we consistently observed high agreement ($W=0.965$--$0.982$ and $W=0.964$--$0.979$, respectively; Supplementary Tables~4 and~5). The associated $\chi^2$ tests confirmed that this concordance is statistically significant, indicating that the relative ordering of methods is largely stable with respect to both the choice of performance metric and the specific dataset version. In particular, the top-ranked position of CAKR remained unchanged across all metrics and datasets, demonstrating that the observed performance hierarchy is statistically robust rather than metric- or dataset-dependent. For further details, see Supplementary Section~8.

\section{Discussion}
\label{sec:discussion}

\paragraph{Overall performance.}
Across all benchmark evaluations, CAKR demonstrates consistently strong overall performance relative to the five baseline alignment-free methods. On six phylogenetic tree\textendash construction datasets, it achieves the highest overall tree accuracy, with the purity index~\eqref{eq:purity_metric} exceeding that of every competitor by at least \(4\) percentage points on average (absolute difference; Fig.~\ref{fig:tree-accuracies}). On the four extensive NCBI collections curated in~\cite{yu2024optimal, hozumi2024revealing, sun2021geometric}, CAKR attains the top macro-averaged scores for accuracy (ACC), balanced accuracy (BA), \(F_{1}\), recall, and precision, surpassing the runner-up by \(4\text{-}7\) percentage points across all datasets (Fig.~\ref{fig:NCBI}).

This improvement stems from CAKR's explicit encoding of the spatial distribution of \(k\)-mers through locality-sensitive features, in contrast to most alignment\textendash free methods, which rely primarily on \(k\)-mer frequency distributions with limited positional or spatial information. This positional awareness enables CAKR to resolve subtle yet biologically consequential sequence variants, thereby enhancing performance in large-scale viral classification, as well as in phylogenetic tree reconstruction and genetic variant classification.

\paragraph{Comparable to alignment methods for variant classification}
Alignment-based tools such as MAFFT are highly effective for within-species variant inference, where sequences are sufficiently similar and share reliable positional homology. As divergence increases or when heterogeneous genomic content reduces the stability or interpretability of a single global alignment across taxa, alignment-based comparisons may become less robust; in such remote-homology regimes, profile-HMM approaches (e.g., HMMER) are often used to detect conserved gene/protein-family relationships. To compare CAKR and MAFFT under conditions favorable to alignment methods, we analyzed three benchmark datasets, such as SARS-CoV-2, mammalian mitochondrial genomes, and human rhinovirus (HRV). The trees produced by CAKR and MAFFT show strong agreement: both recover the principal SARS-CoV-2 variant groups (Alpha, Beta, Gamma, Delta, Lambda, Mu, GH/490R, Omicron), correctly group Primates, Carnivora, and Cetacea in the mammalian set, and delineate HRV-A, HRV-B, HRV-C with the HEV outgroup (Fig.~\ref{fig:tree-accuracies}; Supplementary Fig.~9). Hence, CAKR attains alignment-level agreement in regimes where alignment is presumed strongest for variant inference, while retaining robustness and efficiency for heterogeneous or multi-species data. On the other hand, CAKR can directly compare the similarity of  entirely different sequences, such as those of HRV, HIV, and SARS-CoV-2, for which MAFFT, Clustal Omega, and other alignment methods do not work.

\paragraph{Robustness.}
CAKR maintains high accuracy as the dataset size and quality vary. In a 5–NN setting, its accuracy declines only modestly, from \(91.3\%\) on the NCBI 2020 dataset consisting of 6993 sequences to \(87.6\%\) on the 13\,645-sequence {NCBI 2024 All} dataset, exhibiting similar stability when error-containing reads are included. A 1–NN evaluation shows the same pattern, with performance decreasing from \(93.2\%\) to \(89.2\%\) across the same data progression. Empirically, \(k\in\{3,4,5, 6\}\) suffices for strong results, with \(k=4\) giving the best single-model accuracy; combining these values in an ensemble further boosts performance with negligible additional cost.

In addition to dataset-scale robustness, Supplementary Section~10 evaluates robustness to incomplete and fragmented sequence inputs. Under simulated truncation, contiguous deletion, and fragmentation, CAKR retained strong clustering purity for most benchmark datasets, with the retained fraction of sequence information having the largest effect on performance. The average-purity trends across \(k\) under simulated incompleteness are summarized in Supplementary Fig.~19. A complementary barcode-based fragment placement experiment with external influenza queries further showed that partial sequences can still be matched to the correct viral type and homologous gene segment in an alignment-free manner.

To assess whether performance is driven mainly by simple compositional effects, we further performed sequence-bias control experiments involving low-complexity masking, repeat masking, GC normalization, and mono- and dinucleotide shuffling. Across datasets, the strongest degradations occurred under shuffling, supporting the view that CAKR captures higher-order sequence organization rather than simple composition alone (Supplementary Section~11; Supplementary Figs.~20-25).

\paragraph{Interpretability of CAKR.}
Rational learning requires explainable features and an interpretable neural network design.
Fig.~\ref{fig:psrt_ex} illustrates the interpretability of CAKR through two representative examples, demonstrating how the Persistent Stanley-Reisner Theory (PSRT) encodes the filtration structure via the algebraic decomposition of facet ideals.
Panels~(a-d) depict a synthetic example of eight points uniformly placed on a circle of radius~1. As the filtration parameter increases, simplicial complexes are constructed by adding a $k$-simplex whenever all its vertices lie within the closed ball centered at at least one of its vertices. Panel~(b) shows the barcodes of persistent facet ideals: $\mathcal{P}_0$ encodes vertex lifespans (isolation), $\mathcal{P}_1$ tracks edge persistence until subsumed into higher simplices, and $\mathcal{P}_2$, $\mathcal{P}_3$ capture 2- and 3-simplices respectively. The barcodes reveal uniform connectivity and regular geometric structure in this example.

Panels~(e-f) present a biologically motivated example, examining the distribution of the 1-mer \texttt{C} in the N1-U.S-P primer. Each occurrence of \texttt{C} is treated as an integer in 1D space. Equal-length bars in $\mathcal{P}_0$ reflect regularly spaced cytosines, while longer bars indicate isolated ones. Notably, one isolated \texttt{C} connects to its nearest neighbor at filtration radius 5, and its corresponding edge vanishes at radius 7, suggesting nearby higher-order interactions. The brief lifespans in $\mathcal{P}_2$ indicate that 2-simplices (triangles) are formed quickly and filled soon thereafter, implying tight local clustering among cytosines.

Panels~(c) and~(g) show the persistent $f$-vectors, $f_k(r)$, quantifying the number of active $k$-simplices at filtration value $r$, while Panels~(d) and~(h) show the corresponding $h$-vectors, $h_k(r)$, measuring the incremental additions of independent $k$-facets. In both examples, increases in $f_1$ signal edge formation, and any growth in $f_k$ for $k \geq 2$ is necessarily preceded by growth in $f_1$, making $f_1$ a necessary precursor for higher-order structure. The $h$-vectors further reveal when such structures are nontrivial versus when they are absorbed into larger simplices.

Altogether, CAKR’s interpretability stems from its ability to encode $k$-mer spatial configurations as algebraic signatures derived from the evolution of facet ideals across the filtration. The persistent $f$- and $h$-vectors provide structured summaries of how generators of these ideals emerge and interact at varying scales. In particular, they capture the combinatorial and geometric regularity of $k$-mer distributions, offering insights into clustering, isolation, and interaction patterns among sequence motifs in both synthetic and biological settings.

\paragraph{Generalizability.}
Because CAKR encodes a sequence purely as a word over a finite alphabet, it extends naturally beyond the DNA datasets analysed here. In practice, we transcribed RNA sequences to their DNA equivalents, but the same framework also accommodates amino-acid strings for protein sequence modeling. More broadly, any categorical sequence data built from a limited alphabet of symbols can be modeled by the proposed commutative-algebraic constructions. The proposed CAKR therefore provides a versatile mathematical foundation for sequence analysis and, by extension, for data science tasks involving symbolic or ordered data.

Although all datasets considered in this work consist of single-segment viral or bacterial genomes, the CAKR representation is defined for arbitrary DNA strings and therefore extends directly to multi-chromosomal organisms. For such genomes, one may either (i) featurize and aggregate the CAKR descriptors of all chromosomes to obtain a single organism-level representation for phylogeny and classification or (ii) analyze chromosomes individually to investigate chromosome-specific structure. In the organism-level setting, genomes from the same species would generally be expected to cluster together in the CAKR feature space. In the chromosome-level setting, chromosomes from the same species may still appear closer to one another because they often share similar compositional and mutational biases, and therefore exhibit broadly similar low-order \(k\)-mer distributions. Moreover, CAKR leverages both \(k\)-mer content and locality/positional information, which can further promote within-species similarity when such genomic signatures are shared. However, the method does not assume that all chromosomes from a given species must form a single tight cluster. It can therefore capture biologically meaningful heterogeneity associated with sex chromosomes, structural variation, repeats, or horizontal transfer.

\paragraph{Extended comparisons.}
Several well-established methods provide useful additional points of comparison with CAKR. Among alignment-free approaches, we considered CVTree~\cite{zuo2015cvtree3} and Mash~\cite{ondov2016mash}, while among widely used phylogenetic inference frameworks we included IQ-TREE~3 and RAxML-NG~\cite{wong2025iqtree3,kozlov2019raxmlng}. We evaluated these methods on representative benchmark datasets, including SARS-CoV-2, mammalian mitochondrial genomes, and rhinoviruses where applicable.

For the SARS-CoV-2 dataset, the purity scores were 0.9028 for CVTree (Supplementary Fig.~10), 0.9028 for Mash (Supplementary Fig.~11), 0.9600 for IQ-TREE~3 (Supplementary Fig.~2(a)), and 0.9028 for RAxML-NG (Supplementary Fig.~2(b)). On the mammalian dataset, CVTree and Mash achieved purity scores of 0.8438 and 0.9531, respectively, whereas on the rhinovirus dataset their purity scores were 0.7778 and 0.6485, respectively. These results show that both additional alignment-free baselines and standard phylogenetic inference frameworks can perform strongly on selected datasets, with IQ-TREE~3 performing particularly well on SARS-CoV-2. At the same time, CAKR remains competitive while maintaining a fully alignment-free, sequence-representation-based framework. The corresponding phylogenetic trees produced by CVTree, Mash, IQ-TREE~3, and RAxML-NG are shown in Supplementary Figs.~10, 11, and~2, respectively, and the detailed CAKR-SEPP fragment-placement comparison is provided in Supplementary Section~10 and Source Data file.

In addition, because fragmentary sequence placement is an important practical setting that is not fully captured by whole-tree reconstruction alone, we also compared CAKR in the Supplementary Information with SEPP, a standard HMM-based phylogenetic placement framework~\cite{mirarab2012sepp}. In the influenza B \textit{PB1} fragment-placement experiment, the two frameworks showed substantial agreement at the level of ranked candidate placements: exact Top~1 agreement was observed for 6 of 15 fragmentary queries, and 11 of 15 fragments showed overlap between the top three CAKR candidates and the top two SEPP placements. Since the two approaches rely on different scoring systems, the comparison is most meaningful in terms of agreement in ranked placements rather than raw score magnitudes. These results indicate that CAKR compares favorably with established alternatives across both whole-sequence phylogenetic reconstruction and fragment-placement settings, while retaining the practical advantage of being fully alignment free. Detailed results for this comparison are provided in Supplementary Section~10 and Source Data file.

\paragraph{Computational efficiency.}
Although CAKR consistently outperformed the competing methods in phylogenetic accuracy and viral classification, it remains computationally practical for genome-scale analyses. As expected from its richer algebraic construction, CAKR is more demanding at the feature-extraction stage. Thus, the practical trade-off is a higher one-time featurization cost in exchange for improved downstream phylogenetic and classification performance. For fixed \(k\), all methods exhibit similar runtime growth with respect to sequence length, while CAKR incurs a higher per-sequence featurization cost due to Vietoris-Rips complex construction and the computation of persistent facet invariants. For example, facet-based CAKR featurization increases from \(\approx 4.6\times 10^{-3}\,\mathrm{s}\) at length \(10\) to \(\approx 2.8\times 10^{2}\,\mathrm{s}\) at length \(3\times 10^{6}\). These timings refer to featurization only; pairwise distances are subsequently computed in the resulting feature space, where the low dimensionality at small \(k\) keeps distance calculations inexpensive. Complete wall-clock runtime comparisons against NVM, FFP-JS, FFP-KL, FPS, MKS, and MAFFT across multiple sequence lengths and \(k\)-mer sizes are provided in Supplementary Section~5 and Supplementary Fig.~12.

In terms of memory, the final CAKR feature representation is compact for the parameter choices used in this study. For \(N\) DNA sequences, \(k\)-mer size \(k\), \(F\) filtration values, and maximum facet dimension \(d_{\max}\), the dense feature matrix has size
\(
N\times \bigl((d_{\max}+1)F4^k\bigr).
\)
In our experiments, we used \(d_{\max}=0\) and \(F=3\), so the storage requirement is \(24N4^k\) bytes in double precision. Consequently, for the values \(k=3,4,5,6\) used in the benchmarks, the final stored feature arrays remain modest for the phylogenetic datasets and manageable for the larger viral classification datasets. The main computational burden is therefore the one-time featurization step, rather than the storage of the resulting features or the subsequent pairwise distance calculations. Detailed memory estimates are provided in Supplementary Section~5.

Overall, CAKR computes richer topological and algebraic descriptors than simpler \(k\)-mer frequency methods, which makes the initial feature-extraction stage more expensive. However, the resulting fixed-length representations are compact, inexpensive to compare, and can be reused for downstream distance calculation, clustering, tree reconstruction, and classification. Thus, CAKR is particularly useful when improved phylogenetic or classification accuracy is prioritized over minimizing featurization time, while simpler \(k\)-mer methods may remain preferable for very rapid exploratory screening.

\paragraph{Incomplete sequences.}
Our method is primarily designed for global, alignment-free comparison of complete sequences, since it relies on the overall \(k\)-mer distribution and the associated positional and combinatorial structure across the sequence. For this reason, CAKR is not intended as a direct replacement for local alignment, short-read mapping, or contig assembly methods, which require explicit base-level local matching.

At the same time, to assess applicability beyond ideal full-length inputs, we performed additional analyses on incomplete genomes, fragmented sequences, and fragment placement-style inference (Supplementary Section~10). In a simulated incompleteness experiment, benchmark datasets were degraded through truncation, contiguous deletion, and fragmentation to mimic partial recovery and draft-quality assemblies. Across most datasets, clustering purity remained strong over a wide range of degradation settings, with the retained fraction of sequence information emerging as the dominant determinant of performance, whereas fragmentation alone did not produce a consistent collapse in clustering quality. In particular, Ebolavirus, rhinovirus, HEV, and influenza \textit{HA} gene remained comparatively robust, whereas mammalian mitochondrial genomes showed more moderate sensitivity. For SARS-CoV-2, purity remained low across nearly all incomplete settings, indicating that the difficulty in this case is driven more by the intrinsic similarity of the sequences rather than by incompleteness itself.

We further evaluated a barcode-based fragment placement setting using external influenza query sequences. In this proof-of-concept experiment, CAKR barcodes computed from partial or complete query genes were compared against a reference barcode library derived from influenza genome segments. Each of the 14 external query sequences attained its highest cosine similarity to the correct viral type and the corresponding homologous gene segment (Source Data file), indicating that CAKR can retain meaningful discriminatory information even for previously unseen partial inputs.

These results show that CAKR does not rely exclusively on perfectly assembled full genomes and can remain effective under substantial truncation, deletion, and fragmentation, while also supporting fragment placement-style inference in an alignment-free manner. Nevertheless, when incompleteness becomes substantial, the resulting representations may deviate more strongly from those of the corresponding complete genomes, potentially introducing variability in clustering, distance estimation, and downstream inference. Large-scale benchmarking on true low-coverage assemblies or shotgun metagenomic datasets remains an important direction for future work.

\section{Methods}
\label{sec:methods}

In this section, we provide an overview of persistent Stanley-Reisner theory. Then, we provide the construction of the k-mer algebraic representations of sequences.
We also propose a new purity metric to assessing the performance of phylogenetic tree reconstruction tools.

\subsection{Persistent Stanley-Reisner theory}
\label{subsec:PSRT}

\begin{figure}[!h]
	\centering
	\begin{subfigure}[b]{0.99\textwidth}
		\includegraphics[width=\textwidth]{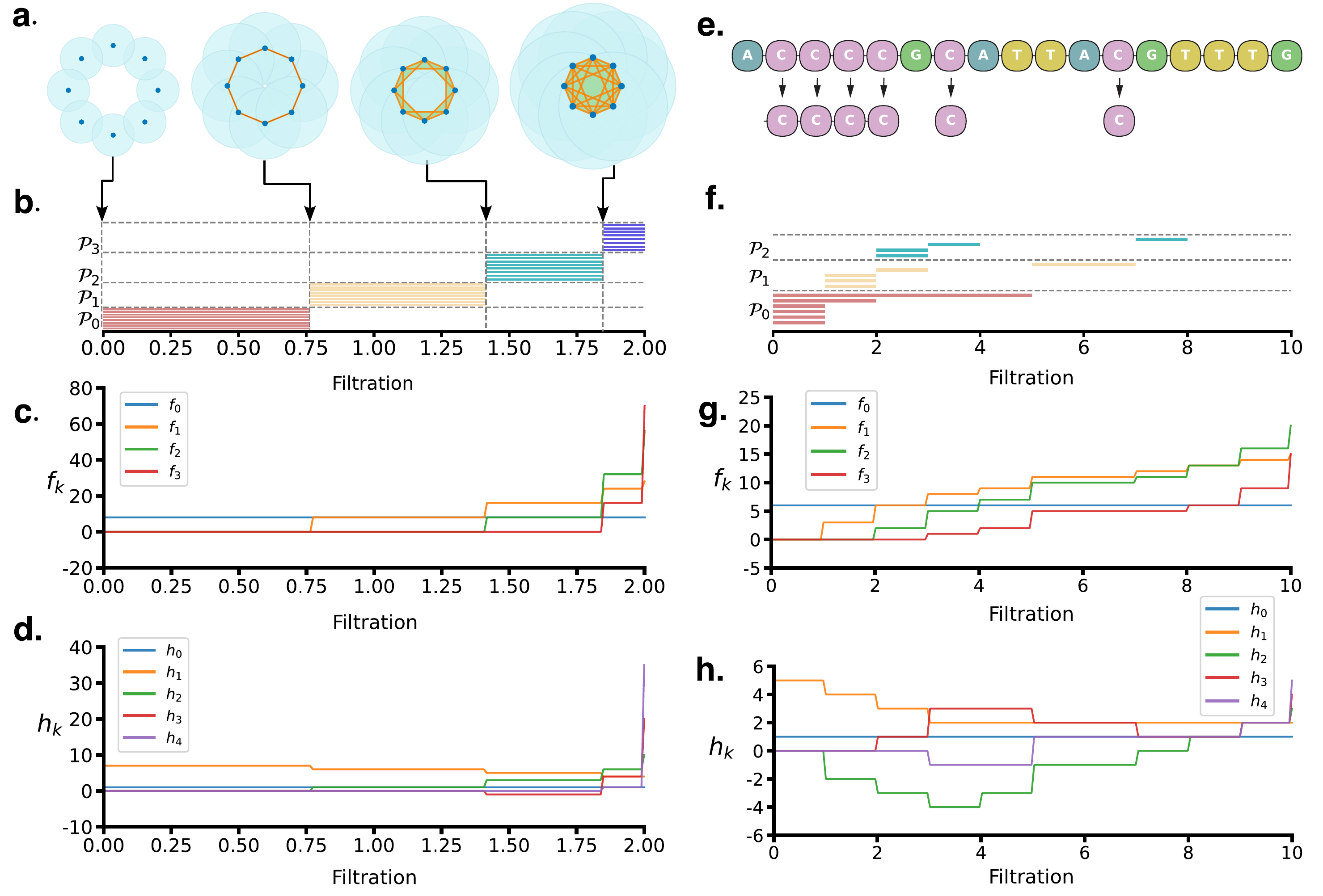}
	\end{subfigure}
\caption{\textbf{Illustrative example of persistent homology
		and persistent Stanley--Reisner invariants.}
	\textbf{a,} A filtration of a simplicial complex arising from an octagon.
	\textbf{b,} Persistent facet ideal barcodes derived from the same
	filtration, encoding combinatorial face-level activity rather than
	homology.
	\textbf{c--d,} Persistent \(f\)-vector and \(h\)-vector curves,
	respectively, derived from the same filtration.
	\textbf{e,} The N1-U.S-P primer sequence, with the positions of the
	nucleotide C marked.
	\textbf{f,} Persistent facet ideal barcodes of these positions,
	reflecting the activity of the persistent facet ideals under the induced
	filtration.
	\textbf{g--h,} Persistent \(f\)-vector and \(h\)-vector curves,
	respectively, derived from the filtration of the sequence.
	The symbol \(r\) denotes the filtration radius; \(\mathcal{P}_i\) denotes
	the collection of persistent \(i\)-dimensional facet ideals; and \(f_k\)
	and \(h_k\) denote the \(k\)th components of the persistent \(f\)- and
	\(h\)-vectors, respectively. In \textbf{a}, colours distinguish vertices,
	edges, and higher-dimensional simplices. In \textbf{b} and \textbf{f},
	barcode colours distinguish \(\mathcal{P}_0,\mathcal{P}_1,
	\mathcal{P}_2,\) and \(\mathcal{P}_3\), as indicated by the in-panel key.
	In \textbf{c--d} and \textbf{g--h}, curve colours distinguish the indexed
	\(f_k\)- and \(h_k\)-components indicated by the in-panel labels.
	Nucleotide colours in \textbf{e} denote A (teal), C (purple), G (green),
	and T (yellow). Source data are provided as a Source Data file.}
\label{fig:psrt_ex}
\end{figure}

Persistent Stanley-Reisner theory is a novel framework for algebraic data analysis, leveraging tools from commutative algebra. Unlike traditional topological data analysis, which emphasizes geometric and topological features, such as loops and voids, through persistent homology~\cite{su2025topological},
persistent Stanley-Reisner theory focuses on the algebraic and combinatorial structure of simplicial complexes, using invariants derived from commutative algebra\cite{suwayyid2025persistent}. A filtration process is then applied to these complexes to track the evolution and persistence of such features across multiple spatial or geometric scales. This approach introduces algebraic invariants such as persistent $h$-vectors, $f$-vectors, graded Betti numbers, and facet ideals, thus providing a new algebraic perspective within the broader framework of algebraic data analysis.

\subsubsection*{Persistent Stanley-Reisner structures over a filtration}

Let \(k\) be a field, and let \(\Delta\) be a simplicial complex on the finite vertex set \(V = \{x_1, \dots, x_n\}\). Suppose \(f: \Delta \to \mathbb{R}\) is a monotone function, i.e., \(f(\tau) \le f(\sigma)\) whenever \(\tau \subseteq \sigma\), which induces an increasing filtration
\[
\widetilde{f} := \left( \Delta^t \right)_{t \in \mathbb{R}}, \quad \text{where} \quad \Delta^t := \left\{ \sigma \in \Delta \,\middle|\, f(\sigma) \le t \right\}.
\]

Let \(S = k[x_1, \dots, x_n]\) be the standard graded polynomial ring over \(k\), and for each \(t \in \mathbb{R}\), define the Stanley-Reisner ideal of \(\Delta^t\) as
\[
I^t := \left\langle x_{i_1} \cdots x_{i_r} \,\middle|\, \{x_{i_1}, \dots, x_{i_r}\} \notin \Delta^t \right\rangle \subseteq S,
\]
with corresponding Stanley-Reisner ring
\[
k[\Delta^t] := S / I^t.
\]

Since the filtration is increasing, the subcomplexes satisfy \(\Delta^s \subseteq \Delta^t\) for \(s \le t\), which implies a descending chain of monomial ideals:
\[
I^s \supseteq I^t \quad \text{for all } s \le t.
\]

Each ideal \(I^t\) admits a canonical primary decomposition indexed by the facets of \(\Delta^t\):
\begin{equation}
	I^t = \bigcap_{\sigma \in \mathcal{F}(\Delta^t)} P_\sigma, \quad \text{where} \quad P_\sigma := (x_i \mid x_i \notin \sigma).
\end{equation}
We refer to the collection \(\mathcal{P}^t := \{ P_\sigma \mid \sigma \in \mathcal{F}(\Delta^t) \}\) as the facet ideals of \(\Delta^t\).

To capture the dimension-wise structure, we stratify by face dimension: for each \(i \ge 0\), where we define
\begin{equation}
	\mathcal{P}_i^t := \left\{ P_\sigma \in \mathcal{P}^t \,\middle|\, \dim(\sigma) = i \right\},
\end{equation}
so that
\[
\mathcal{P}^t = \bigsqcup_{i=0}^{\dim(\Delta^t)} \mathcal{P}_i^t.
\]

We define persistence algebraically as follows: a facet ideal \(P_\sigma \in \mathcal{P}_i^t\) is said to persist to level \(t' > t\) if \(P_\sigma \in \mathcal{P}_i^{t'}\). The set of such persistent \(i\)-dimensional primes is
\begin{equation}
	\mathcal{P}_i^{t, t'} := \mathcal{P}_i^t \cap \mathcal{P}_i^{t'}.
\end{equation}
The corresponding facet persistent number is given by
\begin{equation}
	\mathcal{F}_i^{t,t'} := \left| \mathcal{P}_i^{t, t'} \right|,
\end{equation}
which records the number of \(i\)-dimensional prime components common to \(\Delta^t\) and \(\Delta^{t'}\).

The collection \(\{\mathcal{F}^{t,t'}_i\}_{i,t,t'}\) serves as a combinatorial invariant encoding the persistence of prime facets in the Stanley-Reisner filtration, providing an algebraic analogue of topological barcodes in persistent homology.

\subsubsection{Persistent graded Betti numbers of Stanley-Reisner rings}

Let \(k\) be a field and \(S = k[x_1, \dots, x_n]\) the standard graded polynomial ring. For each filtration level \(t \in \mathbb{R}\), the Stanley-Reisner ring \(k[\Delta^t] := S / I^t\) inherits a natural \(\mathbb{Z}\)-graded \(S\)-module structure and admits a minimal graded free resolution:
\begin{equation}\label{eq:min-free-res}
	\cdots \longrightarrow
	\bigoplus_{j} S(-j)^{\beta_{i,j}(k[\Delta^t])}
	\longrightarrow \cdots \longrightarrow
	k[\Delta^t] \longrightarrow 0,
\end{equation}
where \(\beta_{i,j}(k[\Delta^t]) := \dim_k \operatorname{Tor}^S_i(k[\Delta^t], k)_j\) are the {graded Betti numbers}.

Hochster’s formula relates these graded Betti numbers to the topological Betti numbers of the induced subcomplexes:
\begin{equation}\label{eq:Hochster-general}
	\beta_{i,j+i}(k[\Delta^t])
	=
	\sum_{\substack{W \subseteq V \\ |W| = j+i}}
	\operatorname{rank} \widetilde{H}_{j-1}(\Delta_W^t; k),
\end{equation}
where \(\widetilde{H}_{j-1}(\Delta_W^t; k)\) denotes the \((j-1)\)-st reduced simplicial homology group over \(k\), and
\(\Delta_W^t := \{ \sigma \in \Delta^t \mid \sigma \subseteq W \}\) is the subcomplex induced on the vertex set \(W \subseteq V\) \cite{bruns1998cohen}.

In particular, Hochster’s formula can be reformulated in terms of the (non-reduced) Betti numbers of induced subcomplexes. For each integer \(i \ge 0\), the following identities hold:
\begin{align}
	\beta_{i,i+1}(k[\Delta^t]) &= \sum_{\substack{W \subseteq V \\ |W| = i+1}} \left( \beta_0(\Delta_W^t) - 1 \right), \label{eq:Hochster-j1} \\
	\beta_{i,i+j}(k[\Delta^t]) &= \sum_{\substack{W \subseteq V \\ |W| = i+j}} \beta_{j-1}(\Delta_W^t), \quad \text{for all } j \ge 2, \label{eq:Hochster-j2}
\end{align}
where \(\Delta_W^t\) denotes the subcomplex of \(\Delta^t\) induced on the vertex subset \(W \subseteq V\), and \(\beta_r(\Delta_W^t)\) denotes the \(r\)-th Betti number of \(\Delta_W^t\) with coefficients in \(k\).

To refine this in a persistent setting, for \(t \le t'\), we define the {persistent graded Betti number}
\begin{equation}\label{eq:persistent-Betti-short}
	\beta_{i, i+j}^{t,t'}(k[\Delta])
	:= \sum_{\substack{W \subseteq V \\ |W| = i + j}}
	\operatorname{rank} \left( \iota_{j-1}^{t,t'} : \widetilde{H}_{j-1}(\Delta_W^t)
	\to
	\widetilde{H}_{j-1}(\Delta_W^{t'}) \right),
\end{equation}
where \(\iota_{j-1}^{t,t'}\) is the homomorphism on reduced homology induced by inclusion. This provides a multigraded algebraic refinement of classical persistent Betti numbers, encoding both topological persistence and the combinatorial properties of the evolving homology classes.

In the special case where \(|W| = |V|\), the persistent graded Betti number reduces to
\[
\beta_{i, |V|}^{t,t'} = \beta_{|V| - i - 1}^{t,t'},
\]
recovering the classical persistent Betti number of homological degree \(|V| - i - 1\). More generally, the family \(\{\beta_{i, i+j}^{t,t'}\}_{i,j}\) encodes a richer multiscale invariant that interpolates between algebraic and topological persistence.

\subsubsection{Persistent \texorpdfstring{\(f\)}{f}- and \texorpdfstring{\(h\)}{h}-vectors}

Let \((\Delta^t)_{t \in \mathbb{R}}\) be a filtration of a finite \((d-1)\)-dimensional simplicial complex \(\Delta\), induced by some face function \(f: \Delta \to \mathbb{R}\). For each fixed level \(t\), the complex \(\Delta^t\) consists of those faces \(\sigma \in \Delta\) with \(f(\sigma) \le t\). At each level \(t\), one may associate the classical combinatorial invariants of face counts and their derived quantities.

The {\(f\)-vector} of \(\Delta^t\) is defined as
\[
f(\Delta^t) = \left(f_{-1}^t, f_0^t, f_1^t, \dots, f_{d-1}^t\right),
\]
where \(f_i^t\) denotes the number of \(i\)-dimensional faces in \(\Delta^t\), \(f_{-1}^t = 1\) by convention, and \( d = d(t) = \dim(\Delta^t) + 1 \). The associated \( h \)-vector is defined as \( h(\Delta^t) = (h_0^t, \dots, h_{d}^t) \), where
\begin{equation}\label{eq:h-vector}
	h_m^t = \sum_{j=0}^{m}
	\binom{d(t) - j}{m - j}
	(-1)^{m-j} f_{j-1}^t,
	\quad \text{for } m = 0, \dots, d(t),
\end{equation}
and \( h_m^t = 0 \) for all \( m > d(t) \).

This transformation is invertible, with the inverse relation given by
\begin{equation}\label{eq:f-vector}
	f_{m-1}^t = \sum_{i=0}^m
	\binom{d(t) - i}{m - i} h_i^t,
	\quad \text{for } m = 0, \dots, d(t),
\end{equation}
where \( d(t) = \dim(\Delta^t) + 1 \).

To extend these invariants to the persistent setting, one replaces the classical Betti numbers with the persistent graded Betti numbers \(\beta_{i,j}^{t,t'}\) defined over filtration levels \(t \le t'\) \cite{suwayyid2025persistent}. This enables a multiscale, combinatorial interpretation of how face structures persist across different filtration levels.

Let \((\Delta^t)_{t \in \mathbb{R}}\) be a filtration of a simplicial complex \(\Delta\). The {persistent \(h\)-vector} between levels \(t \le t'\) is defined as
\begin{equation}\label{eq:persistence_h}
	h_m^{t,t'} := \sum_{j=0}^{m}
	\binom{n - d(t') + m - j - 1}{m - j}
	\left(
	\sum_{i=0}^{j} (-1)^i \beta_{i,j}^{t,t'}
	\right),
	\quad \text{for } m = 0, \dots, d(t'),
\end{equation}
where \(\beta_{i,j}^{t,t'}\) denotes the persistent graded Betti numbers of \(k[\Delta]\) over \([t,t']\), and let \(d(t') = \dim(\Delta^{t'}) + 1\).

The corresponding {persistent \(f\)-vector} is then defined via the inverse transformation:
\begin{equation}\label{eq:persistence_f}
	f_{m-1}^{t,t'} := \sum_{i=0}^m
	\binom{d(t') - i}{m - i} \, h_i^{t,t'},
	\quad \text{for } m = 0, \dots, d(t').
\end{equation}

These persistent vectors capture how the combinatorial structure of \(\Delta\) evolves through the filtration, blending face enumeration with homological persistence. In contrast to the classical static \(f\)- and \(h\)-vectors, their persistent counterparts reflect the dynamic appearance and disappearance of faces and their relations across multiple scales, providing richer algebraic-combinatorial invariants for analysis.

We now consider the following simplifications, which will play a central role in the \( k \)-mer algebraic representation framework. These observations refine the relationship between persistent \(h\)-vectors and persistent graded Betti numbers, serving to streamline computations in applications involving Vietoris-Rips complexes derived from sequence data.

\medskip

Let \( \beta_{i,j}^{t,t'} \) denote the persistent graded Betti numbers of the Stanley-Reisner ring \( k[\Delta] \) over the filtration interval \([t, t']\), as defined in equation~\eqref{eq:persistent-Betti-short}. To streamline notation, we set
\begin{equation}
	B_j := \sum_{i=0}^j (-1)^i \beta^{t,t'}_{i,j}.
\end{equation}
Alongside this, one introduces the coefficients
\begin{equation}
	\alpha_j^{(m)} := \binom{n - d(t') + m - j - 1}{m - j},
\end{equation}
which appear in the linear transformation relating the \( h \)-vector of a simplicial complex to its graded Betti numbers.

\medskip

It follows that the persistent \( h \)-vector component \( h_m^{t,t'} \) satisfies the identity
\begin{equation}\label{eq:discussion-h-decomp}
	h_m^{t,t'} = \sum_{j=0}^m \alpha_j^{(m)} B_j, \qquad \text{for each } m \in \mathbb{N}.
\end{equation}

\medskip

Additional structural identities among the persistent Betti numbers further simplify this formula. In particular, it is known that
\[
\beta^{t,t'}_{0,0} = 1, \qquad
\beta^{t,t'}_{i,i} = 0 \quad \text{for all } i \ge 1, \qquad
\beta^{t,t'}_{0,j} = 0 \quad \text{for all } j \ge 1, \qquad
\beta^{t,t'}_{i,j} = 0 \quad \text{for all } i > j.
\]
Consequently, one obtains
\[
B_0 = \beta^{t,t'}_{0,0} = 1,
\]
and for each \( j \ge 1 \), the alternating sum simplifies to
\[
B_j = \sum_{i=1}^{j-1} (-1)^i \beta^{t,t'}_{i,j}.
\]

\subsection{\(k\)-mer algebraic representations of sequences}
\label{subsec:kmer_representation}

In this section, we review the \(k\)-mer representation framework introduced by Hozumi et al.~\cite{hozumi2024revealing}, which provides a foundational method for embedding sequences as collections of integer sequences in a geometric space.
Let \( \mathcal{A} \) be a finite alphabet and let \( k > 0 \) be an integer. A {\(k\)-mer} over \( \mathcal{A} \) is a word \( \mathbf{x} = x_1x_2\cdots x_k \in \mathcal{A}^k \). Given a fixed \( k \)-mer \( \mathbf{x} \in \mathcal{A}^k \), we define the {\(k\)-mer indicator function} \( \delta_{\mathbf{x}}: \mathcal{A}^k \to \{0, 1\} \) by
\begin{equation}
	\delta_{\mathbf{x}}(\mathbf{y}) =
	\begin{cases}
		1, & \text{if } \mathbf{y} = \mathbf{x}, \\
		0, & \text{otherwise}.
	\end{cases}
\end{equation}

\medskip

Given a sequence \( S = s_1 s_2 \cdots s_N \in \mathcal{A}^N \), we define the set of positions at which the \( k \)-mer \( \mathbf{x} \) occurs in \( S \) as
\begin{equation}
	S^{\mathbf{x}} = \left\{ i \in [1, N - k + 1] \,\middle|\, \delta_{\mathbf{x}}(s_i s_{i+1} \cdots s_{i+k-1}) = 1 \right\}.
\end{equation}

The corresponding pairwise distance matrix \( \mathbf{D}^{\mathbf{x}} = ( d^{\mathbf{x}}_{ij} | \; i,j \in S^{\mathbf{x}}) \) is defined by
\begin{equation}
	d^{\mathbf{x}}_{ij} = |i - j|, \qquad \text{for all } i, j \in S^{\mathbf{x}}.
\end{equation}

\medskip

These distance matrices serve as the input for persistent Stanley-Reisner computations. Specifically, for each \( k \)-mer \( \mathbf{x} \in \mathcal{A}^k \), the corresponding sequence of integers \( S^{\mathbf{x}} \subset \mathbb{R} \) gives rise to a family of Stanley-Reisner algebraic feature vectors computed over a filtration interval \( [r_0, r_1] \). For filtration values \( r, r' \in [r_0, r_1] \) with \( r \leq r' \), we define
\[
\mathbf{v}^{r,r'}_{\mathbf{x}} = \left( v^{r, r'}_i(\mathbf{x}) \right)_{i \in \mathbb{N}},
\]
where \( v^{r, r'}_i(\mathbf{x}) \) denotes a persistent invariant of dimension \( i \), such as the \( f \)-vector, \( h \)-vector, or facet number, associated with the Vietoris-Rips complex built from \( S^{\mathbf{x}} \).

\medskip

To simplify notation, we restrict to the diagonal case \( r = r' \), and denote the resulting feature vector by
\[
\mathbf{v}_{\mathbf{x}} := \left( v_i(\mathbf{x}) \right)_{i \in \mathbb{N}}.
\]
For a fixed integer \( k > 0 \), the full representation of the sequence \( S \in \mathcal{A}^N \) is given by the concatenation of these vectors over all \( k \)-mers:
\begin{align*}
	\mathbf{v}^k_S &:= \left( \mathbf{v}_{\mathbf{x}} \;\middle|\; \mathbf{x} \in \mathcal{A}^k \right) \\
	&= \left( v_i(\mathbf{x}) \;\middle|\; \mathbf{x} \in \mathcal{A}^k \right)_{i \in \mathbb{N}},
\end{align*}
which we refer to as the \(k\)-mer algebraic representation of \( S \) at level \( k \). This construction yields a feature vector indexed jointly by algebraic dimension \( i \) and \( k \)-mer \( \mathbf{x} \in \mathcal{A}^k \).

\subsection{Algebraic genetic distances}
\label{sec:algebraic_metric}

To compare two sequences \( S_1 \in \mathcal{A}^{N_1} \) and \( S_2 \in \mathcal{A}^{N_2} \), we define a family of weighted Euclidean metrics that aggregate Stanley-Reisner algebraic information across both the algebraic dimensions and \( k \)-mer lengths. Let \( a_{k,i} \geq 0 \) denote a non-negative weight assigned to homological dimension \( i \) at scale \( k \). The dimension- and scale-weighted algebraic distance is defined by
\begin{equation}
	d_v(S_1, S_2) := \sum_{k=1}^K \sum_{i=0}^{D_k} a_{k,i} \cdot \left\| \mathbf{v}^k_{S_1, i} - \mathbf{v}^k_{S_2, i} \right\|_2,
\end{equation}
where \( \mathbf{v}^k_{S, i} := \left( v_i(\mathbf{x}) \;\middle|\; \mathbf{x} \in \mathcal{A}^k \right) \) is the vector of dimension-\( i \) features computed over all \( k \)-mers in \( S \), and \( D_k \) is the maximum dimension considered for \( k \).

There are various strategies for selecting the weights \( a_{k,i} \). A common choice in the literature is to assign exponentially decaying weights across scales, for example \( a_{k,i} = 2^{-(k-1)} \), which emphasizes shorter \( k \)-mers while retaining contributions from larger scales. To additionally incorporate the homological dimension, one may use
\begin{equation}
	a_{k,i} = 2^{-(i \cdot K + k - 1)},
\end{equation}
which imposes exponential decay both in the scale \( k \) and in the dimension \( i \).

In contrast to these monotone decay schemes, we adopt a data-driven weighting centered around a preferred scale \( k^\ast \), selected as described in \ref{kchoice}. Specifically, we assign the largest weight to \( k^\ast \) and impose exponential decay as \( k \) moves away from this scale in either direction. This leads to the choice
\begin{equation}
	a_{k,i} = \frac{2^{-|k - k^\ast|} \cdot 2^{-i}}{\displaystyle \sum_{\ell=1}^K \sum_{j=0}^{D_\ell} 2^{-|\ell - k^\ast|} \cdot 2^{-j}},
\end{equation}
where the normalization ensures that the weights sum to one.

This construction preserves the multiscale nature of the representation while concentrating the contribution around the empirically most informative \( k \)-mer length. In particular, scales near \( k^\ast \) receive the highest weight, while both smaller and larger \( k \)-values are exponentially downweighted, thereby balancing resolution and stability in the resulting metric.

\medskip

Within the CAKR framework, three distinct types of persistent algebraic features are employed to define pairwise distances between sequences. Specifically, the distance \( d_f(S_1, S_2) \) is derived from the \( f \)-vector curves, where the feature vector \( v \) is computed from these curves; the distance \( d_h(S_1, S_2) \) is defined analogously using \( h \)-vector curves; and the distance \( d_{\mathcal{F}}(S_1, S_2) \) is based on the facet count vectors associated with the underlying filtration.

\medskip

The final composite distance, integrating these three feature types, is given by
\begin{equation}
	d(S_1, S_2) := d_f(S_1, S_2) + d_h(S_1, S_2) + d_{\mathcal{F}}(S_1, S_2),
\end{equation}
which captures a broad range of persistent characteristics across multiple features and filtration levels. This composite metric constitutes the core of the CAKR approach to alignment-free sequence comparison.

\medskip

In the applications considered in this work, we restrict our attention to a single feature representation, namely, the facet vector curves, and employ a fixed window length \( k \) for \( k \)-mers.

\subsubsection{Choice of $k$-mer size and feature representation} \label{kchoice}

Rather than fixing a single $k$-mer size a priori, we adopt a data-driven strategy to determine an optimal scale $k^\ast$ for each dataset. For each candidate $k$, we construct the feature matrix $X^{(k)} \in \mathbb{R}^{n \times 4^k}$ and evaluate three complementary criteria capturing structural, informational, and stability properties of the representation.

Specifically, we consider: (i) a coverage-based criterion that identifies a structural scale via the elbow of the coverage curve, (ii) an information-sparsity score that balances entropy with feature sparsity, and (iii) a stability-adjusted entropy that penalizes contributions from rare (singleton) features. These criteria capture distinct aspects of the trade-off between expressiveness and robustness across $k$.

Across datasets, these criteria do not yield a consistent ordering. In particular, the information-sparsity score tends to favor smaller values of $k$, while the stability-adjusted entropy closely follows the structural elbow. To reconcile these differences, we introduce a scoring framework that evaluates each criterion relative to the coverage-based structural reference, rewarding earlier selections while penalizing large deviations. This induces a ranking of the criteria, from which we derive weights reflecting their relative importance.

The global scale $k^\ast$ is then selected via a weighted consensus rule combining the three criteria. Full definitions of the criteria, the scoring function, and the weighting procedure are provided in the Supplementary Information (Supplementary Section~9). Representative dataset-specific scale-selection curves for the SARS-CoV-2, mammalian mitochondrial, and rhinovirus datasets are shown in Supplementary Figs.~16-18.

In addition to the persistent facet count vector, we also tested the $f$- and $h$-vector representations. Their performances were found to be comparable; however, the facet representation yielded more stable and compact features, whereas the $f$- and $h$-vectors tended to produce larger components.

For the Vietoris-Rips filtration, we employed thresholds at $0,\,4^k,\,2\times4^k,\ldots$, ensuring adequate coverage of $k$-mer co-occurrence patterns. A relatively small number of filtration steps was chosen to maintain computational efficiency and limit memory usage, while still achieving stable performance across datasets.

Further empirical analyses of the effect of \(k\)-mer size across methods, including direct 1-NN accuracy curves and cumulative-best phylogenetic performance summaries over \(k\), are provided in Supplementary Section~6 and Supplementary Figs.~13 and~14.

\subsection{Computational simplifications of the persistent $h$-vectors and $f$-vectors}
Vietoris-Rips simplicial complexes arising from the \(k\)-mer algebraic representations possess a structural property that significantly simplifies their algebraic analysis. Specifically, many of the persistent graded Betti numbers vanish in higher homological degrees, which reduces the complexity of computations involving persistent \(h\)-vectors.

Let $(X,d)$ be a finite metric space (or a finite set of points in a metric space).
For a scale parameter $\varepsilon \ge 0$, the Vietoris-Rips complex
$\mathrm{VR}_\varepsilon(X)$ is the abstract simplicial complex whose vertex set is $X$,
and where a finite subset $\sigma=\{x_0,\dots,x_p\}\subseteq X$ spans a $p$-simplex
whenever all pairwise distances are bounded by $\varepsilon$, i.e.,
\[
d(x_i,x_j)\le \varepsilon \quad \text{for all } 0\le i<j\le p.
\]
Equivalently, $\mathrm{VR}_\varepsilon(X)$ is the clique (flag) complex of the
proximity graph on $X$ with an edge between $x$ and $y$ whenever $d(x,y)\le \varepsilon$.
We refer to \cite{edelsbrunner2010computational} for background and standard properties.

As $\varepsilon$ increases, these complexes form a nested family
\[
\mathrm{VR}_{\varepsilon_1}(X)\subseteq \mathrm{VR}_{\varepsilon_2}(X)
\quad \text{whenever } \varepsilon_1 \le \varepsilon_2,
\]
which is called the Vietoris-Rips filtration. Persistent homology is then computed
from the induced maps on homology along this filtration, tracking the birth and death of
topological features across scales.

The Vietoris-Rips construction is widely used in topological data analysis because it is
entirely determined by pairwise distances and is therefore straightforward to build from a
distance matrix; moreover, it serves as a computationally efficient proxy for other
distance-based complexes (e.g., \v{C}ech-type constructions) while retaining meaningful
multi-scale topological information \cite{edelsbrunner2010computational}.

The following proposition formalizes this observation and highlights its relevance to \(k\)-mer algebraic representations:

\begin{proposition}\label{prop:VR-persistent-betti}
	Let \( \Delta = \Delta^t \) denote the Vietoris-Rips complex at scale \( t \) associated with a sequence \( X \subset \mathbb{R} \). Then for every subset \( W \subseteq V \) and every \( j \geq 2 \), the persistent Betti numbers satisfy
	\[
	\beta_{j-1}^{t,t'}(\Delta_W) = 0.
	\]
	As a consequence,
	\[
	\beta_{i,i+j}^{t,t'} = 0 \quad \text{for all } i \geq 1,\ j \geq 2,
	\]
	and the only potentially nonzero contributions occur in degree shifts of one, namely
	\[
	\beta_{i,i+1}^{t,t'} = \sum_{\substack{W \subseteq V \\ |W| = i+1}} \left( \beta_0^{t,t'}(\Delta_W) - 1 \right).
	\]
	Therefore, the alternating sum of persistent Betti numbers at total degree \( j \) simplifies to
	\[
	B_j := \sum_{i=0}^j (-1)^i \beta_{i,j}^{t,t'} = (-1)^{j-1} \beta_{j-1,j}^{t,t'} \quad \text{for all } j \geq 1.
	\]
	In particular, the persistent \( h \)-vector expression in equation~\eqref{eq:discussion-h-decomp} becomes
	\[
	h_m^{t,t'} = \alpha_0^{(m)} + \sum_{j=1}^m \alpha_j^{(m)} (-1)^{j-1} \beta_{j-1,j}^{t,t'}, \quad \text{with } h_0^{t,t'} = 1.
	\]
\end{proposition}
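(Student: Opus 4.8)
The plan is to reduce everything to a single geometric fact about Vietoris--Rips complexes built from a finite point set $X \subset \mathbb{R}$: every induced subcomplex $\Delta_W^t$ (for $W \subseteq V$, $t$ any scale) is homotopy equivalent to a disjoint union of points, i.e.\ it is a ``cluster complex'' with vanishing reduced homology in all degrees $\geq 1$. This is the heart of the matter, and once it is in hand the rest is bookkeeping using the formulas already recorded in the excerpt.

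First I would establish the vanishing $\widetilde{H}_{j-1}(\Delta_W^t;k)=0$ for all $j\geq 2$. The key observation is that for points on the real line, the Vietoris--Rips complex $\mathrm{VR}(X,t)$ restricted to any vertex subset $W$ is again a Vietoris--Rips complex on $W$ at the same scale $t$, and a Vietoris--Rips complex of points in $\mathbb{R}$ at scale $t$ is a disjoint union of \emph{simplices}: two points are connected iff their distance is $\leq t$, and ``being within distance $t$'' restricted to a line has the Helly-type property that any pairwise-$t$-close finite set has diameter $\leq t$ only if\dots{} actually the cleaner route is: the connected components of $\mathrm{VR}(W,t)$ correspond to the ``chains'' obtained by linking consecutive points within gap $t$, and each such component, while not necessarily a single simplex, is nonetheless contractible. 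The standard fact is that $\mathrm{VR}$ complexes of subsets of $\mathbb{R}$ are homotopy equivalent to disjoint unions of points (this follows, e.g., from Hausmann's or Adamaszek--Adams' analysis of Vietoris--Rips complexes of the line, or more elementarily from the fact that such a complex is a disjoint union of ``unit interval graph'' flag complexes, each of which is contractible by a nerve/Helly argument on the line). Thus $\widetilde{H}_r(\Delta_W^t;k)=0$ for all $r\geq 1$, hence the persistent version $\beta_{j-1}^{t,t'}(\Delta_W)=0$ for $j\geq 2$ as well, since the induced map on a zero group is zero.

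Next I would feed this into Hochster's formula in persistent form, equation~\eqref{eq:persistent-Betti-short}: for $j\geq 2$ every summand $\dim_k(\iota_{j-1}^{t,t'})$ is zero, giving $\beta_{i,i+j}^{t,t'}=0$ for all $i\geq 0$, $j\geq 2$; specializing $i\geq 1$ yields the displayed consequence. The surviving case $j=1$ is exactly equation~\eqref{eq:Hochster-j1} in its persistent avatar, $\beta_{i,i+1}^{t,t'}=\sum_{|W|=i+1}(\beta_0^{t,t'}(\Delta_W)-1)$, which is quoted verbatim. Then for the alternating sum $B_j=\sum_{i=0}^{j}(-1)^i\beta_{i,j}^{t,t'}$ with $j\geq 1$: reindex by $i+j'=j$, so the only surviving term has $j'=1$, i.e.\ $i=j-1$, contributing $(-1)^{j-1}\beta_{j-1,j}^{t,t'}$; combined with the already-listed identities $\beta_{i,j}^{t,t'}=0$ for $i>j$ and $\beta_{i,i}^{t,t'}=0$ for $i\geq 1$, and $\beta_{0,0}^{t,t'}=1$, this gives $B_j=(-1)^{j-1}\beta_{j-1,j}^{t,t'}$ for $j\geq 1$ and $B_0=1$. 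Substituting into equation~\eqref{eq:discussion-h-decomp}, $h_m^{t,t'}=\sum_{j=0}^{m}\alpha_j^{(m)}B_j=\alpha_0^{(m)}+\sum_{j=1}^{m}\alpha_j^{(m)}(-1)^{j-1}\beta_{j-1,j}^{t,t'}$, and evaluating at $m=0$ gives $h_0^{t,t'}=\alpha_0^{(0)}B_0=1$ since $\alpha_0^{(0)}=\binom{n-d(t')-1}{0}=1$.

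The main obstacle is the topological input in the second paragraph: one must be careful that $\Delta_W^t$, as an \emph{induced} subcomplex of the Vietoris--Rips complex, genuinely coincides with the Vietoris--Rips complex of $W$ at scale $t$ (true because $\mathrm{VR}$ complexes are flag complexes, so induced subcomplexes of a flag complex are flag complexes on the induced graph), and then that Vietoris--Rips complexes of finite subsets of $\mathbb{R}$ are homotopy discrete. A subtlety worth flagging is the precise filtration convention used here --- the excerpt builds complexes by ``adding a $k$-simplex whenever all its vertices lie within the closed ball centered at at least one of its vertices,'' which is a (directed/eccentricity-type) variant of the usual Rips rule --- so I would verify the homotopy-discreteness argument still goes through for this convention; on the line this ball-cover condition still produces a flag complex on an interval-graph-type relation, and the same nerve argument applies. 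If one prefers to avoid invoking external structure theorems, an alternative is a direct induction on $|W|$ showing any such complex on a linearly ordered point set deformation retracts onto a subcomplex with one fewer vertex, collapsing the ``rightmost'' vertex into its predecessor; I would include whichever of these is shorter.
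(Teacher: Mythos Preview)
Your proposal is correct and follows the same overall architecture as the paper: reduce everything to the single topological fact that Vietoris--Rips complexes on finite subsets of $\mathbb{R}$ have vanishing homology in degrees $\geq 1$, then feed this into the persistent Hochster formula and do the bookkeeping for $B_j$ and $h_m^{t,t'}$. Your treatment of the algebraic consequences (the $B_j$ computation and the $h$-vector formula) matches the paper's exactly.

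Where you diverge is in the proof of the key topological lemma. The paper orders the facets of $\mathrm{VR}_\epsilon(X)$ by their infima, then runs induction on the number of facets using Mayer--Vietoris: peeling off the last facet $F_{x_n}$, the intersection with the union of the previous facets has vertices in an interval of length $\leq \epsilon$ and is therefore either empty or a single simplex, so the long exact sequence forces $H_q = 0$ for $q \geq 1$. Your suggested routes are different: either invoke the nerve lemma (on $\mathbb{R}$ one has $\mathrm{VR}_\epsilon = \check{\mathrm{C}}\mathrm{ech}_{\epsilon/2}$, and the nerve of intervals is homotopy equivalent to a union of intervals), or collapse the rightmost vertex (whose link is a full simplex since all its neighbors lie in $[x_m-\epsilon,x_m]$ and are therefore pairwise $\epsilon$-close). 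All three arguments are valid; the paper's Mayer--Vietoris proof is the most self-contained, your nerve argument is the quickest if one is willing to import the nerve lemma, and your vertex-collapse induction is arguably the most elementary. You are also slightly more explicit than the paper in noting that $\Delta_W^t = \mathrm{VR}(W,t)$ via the flag property, which the paper uses without comment. Your worry about the ``ball centered at one vertex'' convention is unnecessary here: the proposition is stated for the standard Vietoris--Rips complex, and that is what the paper proves it for.
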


\medskip

To establish Proposition~\ref{prop:VR-persistent-betti}, we prove a more general structural result concerning Vietoris-Rips complexes over sequences in \( \mathbb{R} \). Let \( X \subseteq \mathbb{R} \) be a finite sequence, and let \( \mathrm{VR}_\epsilon(X) \) denote the Vietoris-Rips complex at scale \( \epsilon \).

Each facet \( F \subseteq \mathrm{VR}_\epsilon(X) \) admits a unique minimal element \( x = \inf(F) \in X \). Moreover, if another facet \( G \subseteq \mathrm{VR}_\epsilon(X) \) satisfies \( \inf(G) = x \), then necessarily \( G = F \). That is, the minimal element uniquely determines the facet. Consequently, the assignment \( x \mapsto F_x \), where \( F_x \) denotes the unique facet with minimal element \( x \), satisfies
\[
F_x = F_y \quad \Longleftrightarrow \quad x = y.
\]
In particular, the collection of facets is in bijective correspondence with the set of the minimal elements of the facets, and hence can be linearly ordered by their infima:
\[
F_x \leq F_y \quad \Longleftrightarrow \quad x \leq y.
\]

\begin{proposition}
	Let \(X \subseteq \mathbb{R}\) be a finite sequence. Then for all \(q \geq 1\),
	\[
	H_q(\mathrm{VR}_\epsilon(X)) = 0.
	\]
\end{proposition}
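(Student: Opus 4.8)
The plan is to show that the Vietoris--Rips complex of a finite subset $X \subseteq \mathbb{R}$ is contractible — or, more precisely, that it is a disjoint union of contractible pieces, one per "cluster," each with trivial higher homology — by exploiting the one-dimensional combinatorics already highlighted in the excerpt. The key structural fact, stated just before the proposition, is that every facet $F$ of $\mathrm{VR}_\epsilon(X)$ is an interval of consecutive points of $X$: if $F = \{x_{i_0} < x_{i_1} < \dots < x_{i_r}\}$ is a facet, then $x_{i_r} - x_{i_0} \le \epsilon$, so $\{x_j \in X : x_{i_0} \le x_j \le x_{i_r}\}$ is a clique, hence a simplex containing $F$; maximality forces $F$ to be exactly this run of consecutive points, and $F$ is determined by its minimal element $\inf(F)$. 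So $\mathrm{VR}_\epsilon(X)$ is a union of simplices $F_x$ indexed by a linearly ordered set, and each $F_x$ is the full simplex on an interval of $X$.

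From here I would argue in one of two equivalent ways. First I would reduce to a single connected component: the components of $\mathrm{VR}_\epsilon(X)$ correspond to the maximal runs of points of $X$ in which consecutive gaps are $\le \epsilon$ (equivalently, to the "connected" subsets), and homology in degree $q \ge 1$ is computed component-wise, so it suffices to treat a component $Y$. Within a component, order the facets $F_{x_1} \le F_{x_2} \le \dots \le F_{x_m}$ by their infima. I would then show inductively that $\bigcup_{j \le \ell} F_{x_j}$ is contractible: each $F_{x_j}$ is a simplex (contractible), and the intersection $\left(\bigcup_{j<\ell} F_{x_j}\right) \cap F_{x_\ell}$ is again a nonempty simplex — it is the set of points of $X$ lying in $F_{x_\ell}$ and within $\epsilon$ of some earlier point, which is an initial segment of the interval $F_{x_\ell}$, hence a face of $F_{x_\ell}$ and nonempty because consecutive facets in a component overlap. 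A Mayer--Vietoris (or the gluing lemma for CW/simplicial complexes: the union of two contractible complexes along a contractible subcomplex is contractible) then gives contractibility of the union, completing the induction; in particular $\widetilde H_q = 0$ for all $q \ge 1$. Alternatively, and perhaps more cleanly, I would invoke the nerve lemma: the facets $\{F_x\}$ form a good cover of $\mathrm{VR}_\epsilon(X)$ by simplices, and any nonempty intersection of facets within a component is again a simplex (an intersection of intervals is an interval), hence contractible; moreover this nerve is itself the Vietoris--Rips-type complex of the linearly ordered index set, whose only nonempty faces are "consecutive" collections of facets — in a component, any subcollection of facets has nonempty common intersection only when they pairwise overlap, and the nerve turns out to be (componentwise) the order complex of a linear order, which is contractible.

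The main obstacle I anticipate is the careful bookkeeping of \emph{which} subcollections of facets intersect and what that intersection is — i.e., verifying that the nerve really is contractible componentwise rather than merely that each facet is. One must check that a collection $\{F_{x_{j_1}}, \dots, F_{x_{j_p}}\}$ has $\bigcap_a F_{x_{j_a}} \ne \emptyset$ iff they form a "chain of overlaps," and that when nonempty the intersection is a single simplex; this is elementary for intervals in $\mathbb{R}$ but needs to be stated precisely to license the nerve lemma. The inductive-gluing route sidesteps part of this by only ever intersecting a prefix with the next facet, so I would likely present that argument as the primary one, noting the nerve-lemma viewpoint as a remark. Everything else — reduction to components, contractibility of simplices, the gluing lemma — is standard, so the write-up should be short.
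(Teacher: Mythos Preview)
Your proposal is correct and follows essentially the same route as the paper: both order the facets by their infima and peel off the last facet, using Mayer--Vietoris (or the equivalent gluing lemma) together with the observation that the intersection of the prefix with the next facet is either empty or a single simplex. Your reduction to connected components up front and your explicit description of facets as intervals of consecutive points are mild cosmetic improvements over the paper's version, which instead handles the empty-intersection case inside the Mayer--Vietoris step; otherwise the arguments coincide.
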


\begin{proof}
	Let \(\mathrm{VR}_\epsilon(X) = \bigcup_{i=1}^n F_{x_i}\), where the facets \(F_{x_i}\) are ordered such that \(x_1 < x_2 < \cdots < x_n\).

	We proceed by induction on \(n\).

	{Base case:} When \(n = 1\), \(\mathrm{VR}_\epsilon(X)\) is a single simplex, which is contractible. Therefore, \(H_q = 0\) for all \(q \geq 1\).

	{Inductive step:} Assume the result holds for \(n-1\) facets, where \(n > 1\). Let:
	\[
	K_1 = \bigcup_{i=1}^{n-1} F_{x_i}, \quad K_2 = F_{x_n}, \quad K = K_1 \cup K_2.
	\]
	Note that \(K_1\), \(K_2\), and \(K_1 \cap K_2\) are all simplicial complexes. Notice that the vertices of \(K_1 \cap K_2\) lie within the interval \([x_{n}, x_{n-1} + \epsilon]\), whose length is at most \(\epsilon\), the intersection \(K_1 \cap K_2\) is either empty or a simplex.

	By the Mayer-Vietoris sequence, we obtain the long exact sequence in homology:
	\[
	\cdots \to H_q(K_1 \cap K_2) \to H_q(K_1) \oplus H_q(K_2) \to H_q(K) \to H_{q-1}(K_1 \cap K_2) \to \cdots
	\]

	By the inductive hypothesis, \(H_q(K_1) = 0\) for all \(q \geq 1\), and since \(K_2\) is a simplex, \(H_q(K_2) = 0\) as well. Furthermore, \(K_1 \cap K_2\) is either a simplex or empty, so:
	\[
	H_q(K_1 \cap K_2) = 0 \quad \text{for all } q \geq 1.
	\]

	Thus, the exact sequence reduces to:
	\[
	0 \to H_q(K) \to 0 \quad \Rightarrow \quad H_q(K) = 0 \quad \text{for all } q \geq 2.
	\]

	To analyze \(H_1(K)\), we consider:
	\[
	0 \to H_1(K) \to H_0(K_1 \cap K_2) \to H_0(K_1) \oplus H_0(K_2) \to H_0(K) \to 0.
	\]
	If \( K_1 \cap K_2 = \emptyset \), then \( H_1(K) = 0 \) since \( K = K_1 \sqcup K_2 \) is a disjoint union of two contractible subcomplexes. Thus, the result holds in this case.

	Suppose instead that \( K_1 \cap K_2 \neq \emptyset \). Then \( K_1 \cap K_2 \) is a simplex and hence contractible, in particular connected. Since \( K_2 \) is also a simplex, it is connected and contractible. Moreover, the inclusion of \( K_2 \) into \( K = K_1 \cup K_2 \) does not change the number of connected components, so \( K \) and \( K_1 \) have the same number of components. Therefore, the canonical map
	\[
	H_0(K_1) \oplus H_0(K_2) \longrightarrow H_0(K)
	\]
	has kernel of dimension one, namely \( \dim H_0(K_2) = 1 \). Since \( K_1 \cap K_2 \) is connected, the induced map
	\[
	H_0(K_1 \cap K_2) \longrightarrow H_0(K_1) \oplus H_0(K_2)
	\]
	is injective. Consequently, in the Mayer–Vietoris sequence, the connecting homomorphism
	\[
	H_1(K) \longrightarrow H_0(K_1 \cap K_2)
	\]
	must be the zero map. It follows that \( H_1(K) = 0 \) in this case as well. By induction on the number of simplices, we conclude that
	\[
	H_q(\mathrm{VR}_\epsilon(X)) = 0 \quad \text{for all } q \geq 1.
	\]
\end{proof}

This structural property does not extend to higher-dimensional ambient spaces \( X \subset \mathbb{R}^d \) for \( d \geq 2 \); for instance, consider the Vietoris-Rips complex formed from the vertices of a regular hexagon in \( \mathbb{R}^2 \). Therefore, Proposition~\ref{prop:VR-persistent-betti} is a consequence of the special linear ordering available in one-dimensional point clouds. This leads to a significant simplification in the computation of persistent Betti numbers arising from \(k\)-mer algebraic representations.

Given a simplicial complex \( \Delta \), its 1-skeleton induces an undirected graph \( G(\Delta) \) with vertex set \( V \) and edge set
\[
E(\Delta)
\;:=\;
\Bigl\{
\{u,v\}\subseteq V \ \Bigm|\
\{u,v\}\in\Delta,\ \nexists\, w\in V \text{ with } u< w< v
\Bigr\}.
\]
Equivalently,
\[
E(\Delta)
\;=\;
\Bigl\{
\{u,v\}\in\Delta \ \Bigm|\ (u,v)\cap V=\varnothing
\Bigr\}.
\]
When \( \Delta \) is a Vietoris-Rips complex built on \(k\)-mer representations in \( \mathbb{R} \), the persistent Betti numbers of \( \Delta \) are entirely determined by the topology of the associated graph \( G(\Delta) \). This follows directly from Proposition~\ref{prop:VR-persistent-betti}. We formalize this relationship in the following theorem:

\begin{theorem}
	Let \( \Delta \) be a Vietoris-Rips simplicial complex on a finite sequence \( X \subset \mathbb{R} \), and let \( G(\Delta) = (V, E(\Delta)) \) denote its 1-skeleton. Then the persistent Betti numbers of \( \Delta \) satisfy:
	\[
	\beta_{i,i+1}(G(\Delta)) = \beta_{i,i+1}(\Delta),
	\quad \text{and} \quad
	\beta_{i,i+j}(G(\Delta)) = \beta_{i,i+j}(\Delta) = 0 \quad \text{for all } j \geq 2.
	\]
\end{theorem}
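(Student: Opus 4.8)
The plan is to push both sides of each identity through Hochster's formula in its reformulated shape \eqref{eq:Hochster-j1}--\eqref{eq:Hochster-j2} (and the persistent refinement \eqref{eq:persistent-Betti-short}), which reduces every graded Betti number to reduced homology of induced subcomplexes, and then to exploit two structural facts. First, forming the induced subcomplex on a vertex subset $W\subseteq V$ commutes with passing to the $1$-skeleton, so $G(\Delta)_W=G(\Delta_W)$; and the number of connected components of a simplicial complex is already determined by its $1$-skeleton, so $\beta_0(\Delta_W)=\beta_0\bigl(G(\Delta)_W\bigr)$, with the analogous statement for the rank of the inclusion-induced map on $\widetilde H_0$ in the persistent setting. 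Second, an induced subcomplex of a Vietoris--Rips complex on a line is again such a complex, namely $\Delta_W=\mathrm{VR}_\epsilon(W)$ with $W\subset\mathbb R$ finite, so the acyclicity proposition proved above applies to $\Delta_W$ verbatim.

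For the linear strand $j=1$, equation~\eqref{eq:Hochster-j1} gives $\beta_{i,i+1}(\Delta)=\sum_{|W|=i+1}\bigl(\beta_0(\Delta_W)-1\bigr)$, and the same formula applied to the one-dimensional complex $G(\Delta)$ gives $\beta_{i,i+1}\bigl(G(\Delta)\bigr)=\sum_{|W|=i+1}\bigl(\beta_0(G(\Delta)_W)-1\bigr)$. By the first structural fact these two sums agree term by term, which is the first identity; the persistent version is identical once $\beta_0$ is replaced by the corresponding induced-map rank, since that rank records only how connected components merge and hence depends only on edges.

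For the higher strands $j\ge2$ on the $\Delta$-side, $\Delta_W=\mathrm{VR}_\epsilon(W)$ forces $\widetilde H_{j-1}(\Delta_W)=0$ whenever $j-1\ge1$ by the acyclicity proposition, so \eqref{eq:Hochster-j2} makes every summand vanish and $\beta_{i,i+j}(\Delta)=0$; since the image of any map into a zero group is zero, the persistent $\beta_{i,i+j}^{t,t'}$ vanish as well. This is precisely Proposition~\ref{prop:VR-persistent-betti}. On the $G(\Delta)$-side, the cases $j\ge3$ are immediate because $\dim G(\Delta)_W\le1$ forces $\widetilde H_{j-1}(G(\Delta)_W)=0$ as soon as $j-1\ge2$.

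The one delicate point---and the step I expect to be the main obstacle---is $j=2$ for $G(\Delta)$, which asks that $\sum_{|W|=i+2}\dim\widetilde H_1(G(\Delta)_W)=0$; this cannot follow from dimension alone, since induced subgraphs of the interval-type graph $G(\Delta)$ do in general carry $1$-cycles. The natural way through is to use that $\Delta$ is a flag complex: the simplicial complex generated by $G(\Delta)$ is $\Delta$ itself, so ``$\beta_{i,i+j}(G(\Delta))$'' is read as $\beta_{i,i+j}(\Delta)$, and the $j\ge2$ vanishing for $G(\Delta)$ is then exactly the vanishing already proved, which is why the whole theorem follows directly from Proposition~\ref{prop:VR-persistent-betti}. (Kept as a bare $1$-complex, $\widetilde H_1(G(\Delta)_W)$ need not vanish, but every one of its classes is filled in by the $2$-cells of $\Delta_W=\mathrm{VR}_\epsilon(W)$.) Assembling the two strands and substituting the result into \eqref{eq:discussion-h-decomp} then gives the claimed simplification of the persistent $h$-vector.
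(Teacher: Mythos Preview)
Your proposal is correct and aligns with the paper's own treatment: the paper offers no separate proof of this theorem, stating only that it ``follows directly from Proposition~\ref{prop:VR-persistent-betti}'', and your argument is precisely the Hochster-formula unpacking that makes this deduction explicit. You have in fact gone further than the paper by isolating the $j=2$ case for $G(\Delta)$ read as a bare $1$-complex, correctly observing that $\widetilde H_1(G(\Delta)_W)$ need not vanish (e.g.\ three collinear points within distance $\epsilon$ give a triangle in the $1$-skeleton), so that the identity $\beta_{i,i+2}(G(\Delta))=0$ is only true under the flag-complex reading $G(\Delta)\rightsquigarrow\Delta$; the paper does not address this ambiguity, and your resolution via the flag property of Vietoris--Rips complexes is the right one.
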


\subsection{Purity metric for assessing the performance of phylogenetic tree reconstruction methods}

We introduce a purity metric for assessing monophyly in phylogenetic trees.
Let \( S \) be a finite set of size \( n = |S| \), and let \( \mathcal{P} = \{S_1, S_2, \dots, S_k\} \) be a partition of \( S \) into disjoint subsets such that \( \bigcup_{i=1}^k S_i = S \). We define the {purity} of the partition \( \mathcal{P} \) as
\begin{equation}
	\mathrm{purity}(\mathcal{P}) = \sum_{i=1}^{k} \left( \frac{|S_i|}{n} \right)^2.
\end{equation}
This quantity reflects the degree to which elements are concentrated within the subsets of the partition. A higher purity indicates that the majority of elements reside in a small number of large subsets, while a lower purity corresponds to a more evenly distributed partition.

To illustrate, consider several representative scenarios. If the partition is perfect in the sense that all elements are grouped into a single subset, i.e., \( k = 1 \), then
\[
\mathrm{purity}(\mathcal{P}) = \left( \frac{n}{n} \right)^2 = 1,
\]
which is the maximal possible value. If the partition consists of \( n \) singleton subsets (i.e., \( |S_i| = 1 \) for all \( i \)), then
\[
\mathrm{purity}(\mathcal{P}) = \sum_{i=1}^n \left( \frac{1}{n} \right)^2 = \frac{1}{n},
\]
which is minimal. If the partition consists of two equal-sized subsets, each of size \( n/2 \), then
\[
\mathrm{purity}(\mathcal{P}) = 2 \left( \frac{1}{2} \right)^2 = \frac{1}{2}.
\]
Finally, if one subset dominates the partition, for example, with \( |S_1| = n - 1 \) and \( |S_2| = 1 \), then
\[
\mathrm{purity}(\mathcal{P}) = \left( \frac{n - 1}{n} \right)^2 + \left( \frac{1}{n} \right)^2 = 1 - \frac{2(n - 1)}{n^2},
\]
which approaches 1 as \( n \to \infty \), but is strictly less than 1 for any finite \( n \).

Let \( S \) be a finite set of leaf nodes in a phylogenetic tree, and let each element of \( S \) be assigned a categorical label (e.g., species, clade, or functional class). For each label \( \ell \), let \( S^{(\ell)} \subseteq S \) denote the set of leaves with label \( \ell \), and let \( n^{(\ell)} = |S^{(\ell)}| \) be the number of such leaves.

To assess the purity of the tree with respect to label \( \ell \), we identify all maximal subtrees whose leaves are exclusively labeled \( \ell \). These subtrees define a partition \( \mathcal{P}_\ell = \{ S_1, S_2, \dots, S_k \} \) of \( S^{(\ell)} \), where each \( S_i \subseteq S^{(\ell)} \) is the set of leaves in a pure subtree.

The purity of the label \( \ell \) is then defined as:
\begin{equation}
	\mathrm{purity}(\mathcal{P}_\ell) = \sum_{i=1}^{k} \left( \frac{|S_i|}{n^{(\ell)}} \right)^2,
\end{equation}
where the numerator \( |S_i| \) denotes the size of a pure subtree and the denominator normalizes by the total number of leaves of label \( \ell \).

A purity of \( 1.0 \) indicates that all leaves of label \( \ell \) are perfectly clustered under a single subtree (i.e., monophyletic), while a lower purity reflects fragmentation of that label across multiple subtrees. Averaging the purity scores across all labels provides an overall measure of the taxonomic coherence of the tree:
\begin{equation}
	\label{eq:purity_metric}
	\mathrm{avg\_purity} = \frac{1}{|\mathcal{L}|} \sum_{\ell \in \mathcal{L}} \mathrm{purity}(\mathcal{P}_\ell),
\end{equation}
where \( \mathcal{L} \) is the set of all unique labels in the tree.

This approach is beneficial for evaluating the extent to which a phylogenetic tree respects known groupings, such as taxonomic families or functional clusters, without requiring an explicit reference.

\section*{Software and computational environment}

The CAKR analyses were implemented in Python using NumPy 1.26.4, scikit-learn 1.4.2, SciPy 1.13.1, GUDHI 3.10.1, Biopython 1.84, pandas 2.2.2, and ETE Toolkit 3.1.3. The exact source-code release used in this study is CAKR v1.0.0, archived on Zenodo at \url{https://doi.org/10.5281/zenodo.21426540}. Comparative analyses additionally used MAFFT, IQ-TREE 3, RAxML-NG, Mash, CVTree3, SEPP, and Interactive Tree Of Life (iTOL) v6. Phylogenetic trees were constructed using the UPGMA algorithm and, in the supplementary sensitivity analyses, the Neighbor-Joining algorithm.

\section*{Data Availability}

The genomic sequence data analyzed in this study were obtained from NCBI, including GenBank and the NCBI Virus resource, and from GISAID for the SARS-CoV-2 dataset. GenBank accession.version identifiers are provided in Supplementary Tables~8--13, and the GISAID accession identifiers for the 44 SARS-CoV-2 genomes are provided in Supplementary Table~7. The underlying GISAID sequences and metadata are not redistributed because of GISAID access restrictions and must be obtained directly from GISAID by registered users in accordance with its terms of use. The curated and processed datasets, excluding the underlying GISAID files, are available from Zenodo at \url{https://doi.org/10.5281/zenodo.18757928} \cite{suwayyid2026cakrdata}. Additional data are provided in the Supplementary Information and Source Data file. Further supporting materials not included in these files are available from the corresponding author upon request. Source data are provided with this paper.

\section*{Code Availability}

The source code used to implement the CAKR framework and perform the comparative analyses reported in this study is publicly available on GitHub at \url{https://github.com/FaisalSuwayyid/CAKL}. The version used in this study, CAKR v1.0.0, has been archived on Zenodo at \url{https://doi.org/10.5281/zenodo.21426540} \cite{suwayyid2026cakrcode}.

\bibliographystyle{unsrt}
\bibliography{references}

\section*{Acknowledgements}

We gratefully acknowledge all data contributors, i.e., the Authors and their Originating laboratories responsible for obtaining the specimens, and their Submitting laboratories for generating the genetic sequence and metadata and sharing via the GISAID Initiative, on which this research is based.

F.S. gratefully acknowledges the support of King Fahd University of Petroleum and Minerals.

\section*{Funding Statement}

This work was supported in part by NIH grants R01AI164266 and R35GM148196, the MSU Research Foundation, the University of Georgia, and the Georgia Research Alliance.

\section*{Author Contributions}

F.S. designed the method and study, wrote the code, performed the computational studies, wrote the first draft, and revised the manuscript. Y.H. designed the method and study, collected the data, performed the computational studies, and revised the manuscript. M.Z. designed the study, wrote the code, performed computational studies and revised the manuscript.
J.J.W. revised the manuscript. H.F. wrote code, prepared figures, and revised the manuscript. G.-W.W. designed the study, conceptualized and supervised the project, acquired funding, and revised the manuscript.

\section*{Competing Interests}
The authors declare no competing interests.

\section*{Figure Legends}

\clearpage
\begingroup
\setcounter{section}{0}
\setcounter{subsection}{0}
\setcounter{figure}{0}
\setcounter{table}{0}
\setcounter{equation}{0}
\renewcommand{\figurename}{Supplementary Fig.}
\renewcommand{\tablename}{Supplementary Table}
\renewcommand{\theHsection}{supp.\arabic{section}}
\renewcommand{\theHsubsection}{supp.\arabic{section}.\arabic{subsection}}
\renewcommand{\theHfigure}{supp.\arabic{figure}}
\renewcommand{\theHtable}{supp.\arabic{table}}
\renewcommand{\theHequation}{supp.\arabic{equation}}

\begin{center}
{\LARGE\bfseries Supplementary Information\\[0.35em]
for\\[0.35em]
CAKR: Commutative algebra $k$-mer representations for genomics\par}
\vspace{1.5em}
{\large Faisal Suwayyid, Yuta Hozumi, Mushal Zia, JunJie Wee,\\
Hongsong Feng, and Guo-Wei Wei\par}
\end{center}
\vspace{1em}
\section{Datasets}

To rigorously evaluate the proposed method CAKR, we assembled a suite of benchmark datasets spanning diverse applications in genomics and virology.

	\textbf{Genetic Variant Classification.}
	To evaluate the discriminatory capacity of CAKR in distinguishing between genetic variants, we employed the severe acute respiratory syndrome coronavirus 2 (SARS-CoV-2) dataset curated by Li~{et~al.}~\cite{Li2024GenomeGrassmann, hozumi2024revealing}, which comprises representative genome sequences from multiple SARS-CoV-2 lineages. The dataset consists of 44 complete genomes of SARS-CoV-2, sourced from GISAID. These genomes are classified according to their variant lineages, including Alpha, Beta, Gamma, Delta, Lambda, Mu, GH/490R, and Omicron. Phylogenetic branches and labels are annotated and color-coded to reflect these variant classifications.

	\textbf{Phylogenetic Tree Reconstruction.}
	Seven benchmark genome collections were employed for phylogenetic tree reconstruction. These datasets, compiled in early studies and detailed in~\cite{hozumi2024revealing}, span a range of evolutionary scales and biological taxa, providing a robust framework for assessing the accuracy of phylogenetic inference.

	The datasets span a broad range of sequence lengths. For example, Influenza A hemagglutinin (HA) genes consist of approximately 2{,}000 nucleotides; human rhinovirus (HRV) genomes and hepatitis E viruses (HEV) range around 7{,}000 nucleotides; mammalian mitochondrial genomes and Ebola virus (EBOV) genomes contain roughly 17{,}000 nucleotides; and bacterial genomes typically range from several hundred thousand to a few million nucleotides.

	The mammalian mitochondrial dataset comprises 41 species across several mammalian orders: Primates, Carnivora, and Cetacea from Euarchontoglires, and Artiodactyla, Perissodactyla, Lagomorpha, Rodentia, and Erinaceomorpha from Laurasiatheria. The objective is to evaluate the extent to which each method reconstructs clades consistent with established host species classifications.

	The HRV dataset comprises 113 complete HRV genomes consisting of three main groups, HRV-A, HRV-B, and HRV-C, along with three HEV outgroup sequences.

	The HEV dataset comprises 48 complete genomes of HEV grouped into four major genotypic categories, Group 1, Group 2, Group 3 and Group 4.

	The influenza HA genes dataset contains 30 Influenza A hemagglutinin (HA) genes classified into six well-characterized subtypes—H1N1, H2N2, H3N2, H5N1, H7N9, and H7N3.

	Ebolavirus genomes dataset includes 59 complete genomes of Ebola virus categorized into five viral types: Bundibugyo virus (BDBV), Reston virus (RESTV), Ebola virus (EBOV), Sudan virus (SUDV), and Tai Forest virus (TAFV), where EBOV sequences are further annotated by epidemic location and year, enabling evaluation of phylogenetic resolution at both species and outbreak levels.

	The bacterial genomes dataset comprises 30 complete bacterial genomes, classified into nine bacterial families: Bacillaceae, Borreliaceae, Burkholderiaceae, Clostridiaceae, Desulfovibrionaceae, Enterobacteriaceae, Rhodobacteraceae, Staphylococcaceae, and Yersiniaceae. The genome sizes of Borreliaceae range from approximately 0.9 to 2.5 Mb, whereas those of Enterobacteriaceae span 4.0 to 6.5 Mb.

	The narrow-clade {Salmonella} dataset comprises $519$ bacterial genomes drawn from two species-level groups: {Salmonella bongori} ($n=31$) and {Salmonella enterica}, represented by the subspecies {arizonae} ($n=70$), {diarizonae} ($n=72$), {enterica} ($n=200$), {houtenae} ($n=23$), {indica} ($n=3$), {londinensis} ($n=3$), and {salamae} ($n=117$).

	\textbf{Viral Family Classification.}
	For viral classification tasks, we adopted four datasets derived from the NCBI Virus Database (\url{https://www.ncbi.nlm.nih.gov/labs/virus/vssi/}), each annotated with taxonomic labels at the viral family level. These datasets include:

	\begin{enumerate}
		\item \textbf{NCBI 2020}: Contains 6,993 viral genomes and was originally collected in Sun~{et~al.}~\cite{sun2021geometric};
		\item \textbf{NCBI 2022}: Comprises 11,428 genomes, as used in Yu~{et~al.}~\cite{yu2024optimal};
		\item \textbf{NCBI 2024}: A refined version of the NCBI All dataset, from which entries lacking the ``-viridae'' suffix and sequences containing invalid nucleotides were removed as in \cite{hozumi2024revealing}.
		\item \textbf{NCBI 2024 All}: Includes 13,645 genomes collected by Hozumi~{et~al.} as of January 20, 2024~\cite{hozumi2024revealing};
	\end{enumerate}

Reference genomes were obtained directly from the NCBI Virus database, with viral family labels defined according to the taxonomy of the International Committee on Taxonomy of Viruses (ICTV). It is important to note that the NCBI database undergoes continual curation. Consequently, several reference sequences used in prior studies are no longer available and were excluded from our analysis, following the filtering strategy of~\cite{hozumi2024revealing}. Additionally, certain viral sequences have been reassigned to updated taxonomic lineages. For consistency and comparability, we retained the original lineage assignments as reported in the source publications~\cite{hozumi2024revealing, yu2024optimal, sun2021geometric}.

To ensure sufficient representation within each taxonomic class, viral families represented by a single reference genome were excluded from all datasets. A comprehensive overview of dataset composition, including filtering criteria and collection metadata, is presented in Table~\ref{tab:ncbi-dataset-summary}. For further methodological details on dataset construction and curation, we refer the reader to~\cite{hozumi2024revealing, yu2024optimal, sun2021geometric}.

\textbf{Barcode-based fragment-placement dataset.}
For the barcode-based fragment-placement experiment, we constructed an influenza reference-query dataset designed to assess whether CAKR can identify the viral type and homologous gene segment of an external query sequence using alignment-free barcode comparison. Using influenza viruses obtained from NCBI Virus as a model system, the reference library was built from one representative strain for each of the four influenza types, A, B, C, and D, and included all annotated gene segments available for each type, namely eight segments for influenza A and B and seven segments for influenza C and D. Specifically, the reference set consisted of a human influenza A(H3N2) isolate collected in California in 2020, a human influenza B isolate collected in California in 2021, a human influenza C isolate collected in China in 2020, and a swine influenza D isolate collected in Oklahoma in 2011. All reference segments were annotated as complete coding sequences.

The external query set consisted of 14 influenza gene sequences selected to differ from the reference strains in collection year, geographic origin, host background, or coding-sequence completeness. This set included 11 complete coding sequences and 3 partial coding sequences. Influenza A and B queries were taken from human isolates collected in 2026, influenza C queries from human isolates collected in 2024 and 2025, and influenza D queries from bovine isolates collected in 2024. For each query and each reference gene segment, a CAKR barcode representation was computed, and cosine similarity was used to compare the query barcode against the reference barcode library. This dataset provides a controlled benchmark for evaluating alignment-free fragment placement across all four influenza types and across both complete and partial coding sequences.

\textbf{Fragment-placement comparison with SEPP.}
For the fragment-placement comparison with SEPP, we used a separate influenza Type-B PB1 dataset consisting of two components: a fixed backbone reference set and an external fragment query set. The backbone reference set comprised 15 complete or near-complete PB1 gene sequences from influenza Type B viruses, sampled mainly during 2019 to 2021, and used to construct the reference multiple sequence alignment and backbone phylogenetic tree. To generate the query set, five additional influenza Type B PB1 sequences from 2026 were randomly selected, and each was truncated to produce three fragment lengths, 300 bp, 600 bp, and 900 bp, resulting in a total of 15 fragmentary query sequences. The reference sequences were aligned with MAFFT and used as the fixed input structure for SEPP, while the same reference and query sets were also used for the CAKR-based placement analysis. This dataset provides a controlled benchmark for comparing alignment-based HMM placement and alignment-free barcode-based placement on the same gene segment under varying fragment lengths.

\textbf{Source Data file, sheet ``CAKR vs SEPP''.}
This sheet contains the fragment-placement comparison between CAKR and SEPP on an influenza Type-B PB1 benchmark. It lists 15 fragmentary query sequences generated from five 2026 Type-B PB1 sequences by truncation to lengths of 300 bp, 600 bp, and 900 bp. For each query, the table reports the top-ranked placement candidates predicted by CAKR, based on cosine similarity scores, and by SEPP, based on likelihood weight ratios, together with the relative separation between the first and second candidates.

\textbf{Source Data file, sheet ``cosine similarity epiflu''.}
This sheet contains the cosine-similarity matrix for the barcode-based influenza fragment-placement experiment. Columns correspond to the reference barcode library, built from annotated genome segments of representative influenza A, B, C, and D strains, and rows correspond to 14 external influenza query gene sequences, comprising 11 complete coding sequences and 3 partial coding sequences. Each entry gives the cosine similarity between a query barcode and a reference-segment barcode, enabling identification of the closest viral type and homologous gene segment.

\begin{table}[!h]
		\caption{Summary of NCBI viral genome datasets, including collection date, preprocessing steps, number of families, and number of sequences \cite{hozumi2024revealing}.}
	\label{tab:ncbi-dataset-summary}
	\centering
	\renewcommand{\arraystretch}{1.2}
	\begin{tabular}
		{@{}p{5cm}@{\hspace{5pt}}c@{\hspace{10pt}}c@{\hspace{15pt}}c@{\hspace{18pt}}p{5cm}@{}}
		\toprule
		\textbf{Dataset (Reference)} & \textbf{Date} & \textbf{\#Fam.} & \textbf{\#Seq.} & \textbf{Preprocessing Criteria} \\
		\midrule
		NCBI 2020~\cite{sun2021geometric} & Mar 2020 & 83 & 6,993 &
		\makecell[l]{Unknown Baltimore class\\ Unknown family\\ Families with $<$2 sequences} \\
		\midrule
		NCBI 2022~\cite{yu2024optimal} & Mar 2022 & 123 & 11,428 &
		\makecell[l]{Partial sequences\\ Unknown family\\ Families with $<$2 sequences\\ Invalid nucleotides} \\
		\midrule
		NCBI 2024~\cite{hozumi2024revealing} & Jan, 2024 & 199 & 12,154 &
		\makecell[l]{Partial sequences\\ Unknown family\\ Only ``-viridae''\\ Families with $<$2 sequences\\ Invalid nucleotides} \\
		\midrule
		NCBI 2024~\cite{hozumi2024revealing} All & Jan, 2024 & 209 & 13,645 &
		\makecell[l]{Partial sequences\\ Unknown family\\ Families with $<$2 sequences} \\
		\bottomrule
	\end{tabular}
\end{table}

\section{Genetic variant classification}

\begin{figure}[!t]
	\centering
	\begin{subfigure}[b]{1.0\textwidth}
		\includegraphics[width=\textwidth]{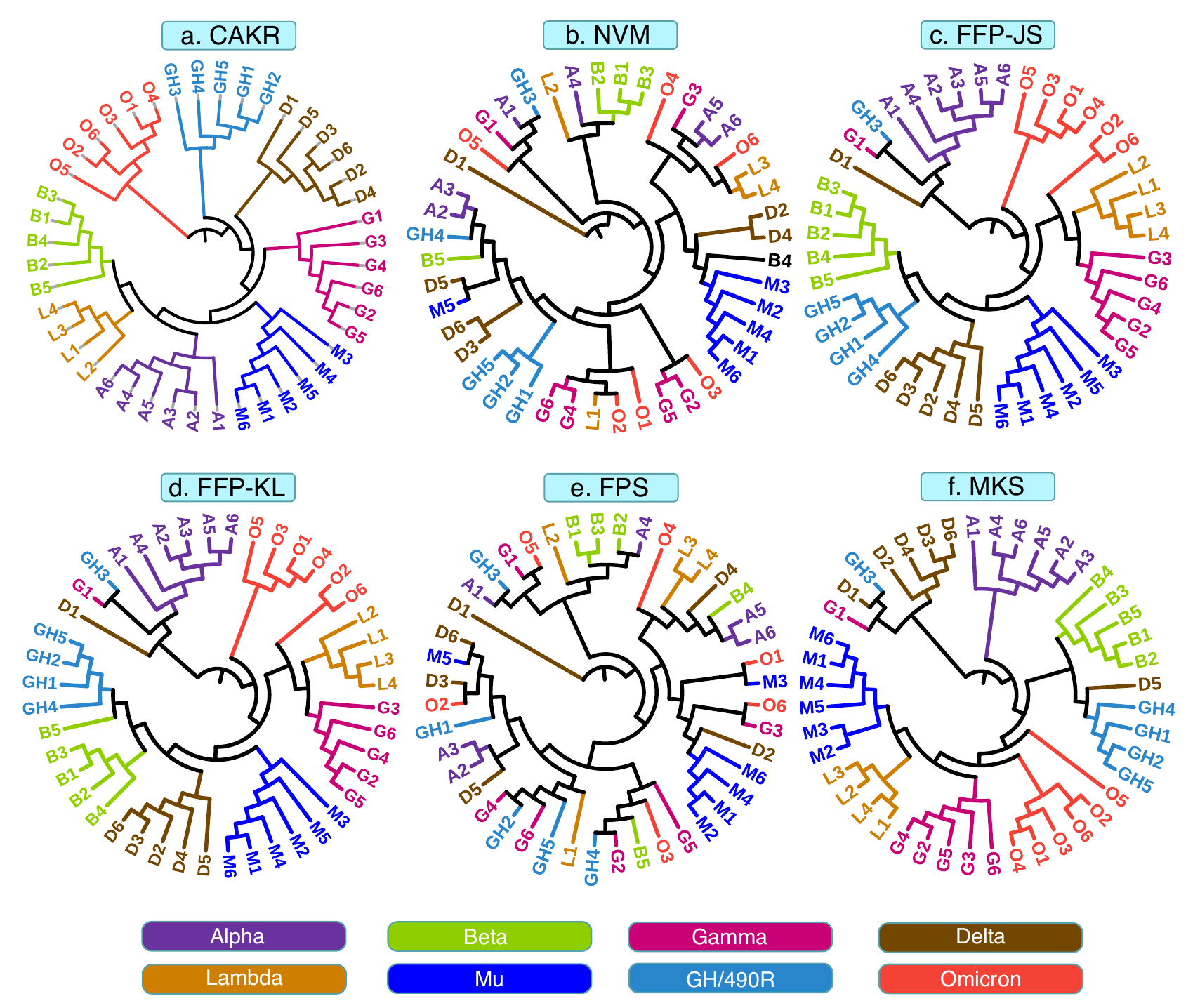}
	\end{subfigure}
	\caption{Performance comparison of various methods for SARS-CoV-2 variant classification was conducted on a dataset comprising 44 complete genomes of severe acute respiratory syndrome coronavirus 2 (SARS-CoV-2), sourced from the GISAID database. CAKR accurately grouped all variant sequences. NVM revealed minimal structure. FFP-KL and FFP-JS misclassified three genomes; MKS misclassified three as well; FPS produced no discernible clustering. For comparison with alignment-based phylogenetic inference, additional SARS-CoV-2 trees reconstructed using IQ-TREE 3~\cite{wong2025iqtree3} and RAxML-NG~\cite{kozlov2019raxmlng} are shown in Supplementary Fig.~\ref{fig:sarscov2_iqtree_raxml}.}
	\label{fig:sarscov2}
\end{figure}

Supplementary Fig.~\ref{fig:sarscov2} shows the phylogenetic trees inferred by six alignment-free methods on the {SARS-CoV-2} dataset used for genetic variant classification. Among these methods, our {CAKR} approach achieved the highest concordance with known variant lineages, outperforming the other five. Compared with the MAFFT-based tree (Supplementary Fig.~\ref{fig:mafft}), which serves as a state-of-the-art alignment-based benchmark, the CAKR tree captures the same high-level clade structure and accurately delineates all major SARS-CoV-2 lineages. While subtle differences in internal branching order exist, both trees identify consistent and biologically meaningful variant groupings. For further reference, we also reconstructed SARS-CoV-2 phylogenies using IQ-TREE 3~\cite{wong2025iqtree3} and RAxML-NG~\cite{kozlov2019raxmlng}, shown in Supplementary Fig.~\ref{fig:sarscov2_iqtree_raxml}.

Notably, the CAKR tree closely mirrors the MAFFT tree in its high-level topology, despite being derived entirely without the use of sequence alignment. This highlights the strength of CAKR as a reliable and scalable alignment-free approach to phylogenetic inference. Its ability to reproduce biologically meaningful relationships among viral variants reinforces its potential for large-scale genomic studies where alignment may be computationally prohibitive or error-prone.

\begin{figure}[!t]
	\centering
	\begin{subfigure}[b]{0.49\textwidth}
		\centering
		\includegraphics[width=0.7\textwidth]{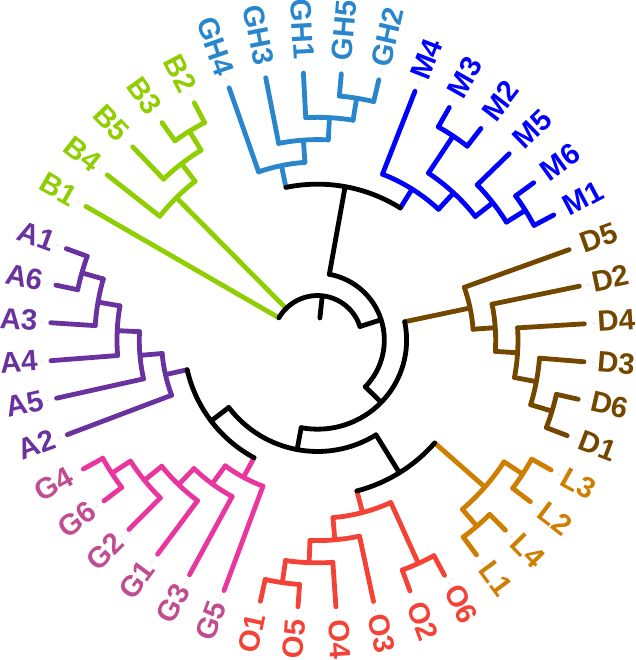}
		\caption{IQ-TREE 3~\cite{wong2025iqtree3}.}
	\end{subfigure}
	\hfill
	\begin{subfigure}[b]{0.49\textwidth}
		\centering
		\includegraphics[width=0.7\textwidth]{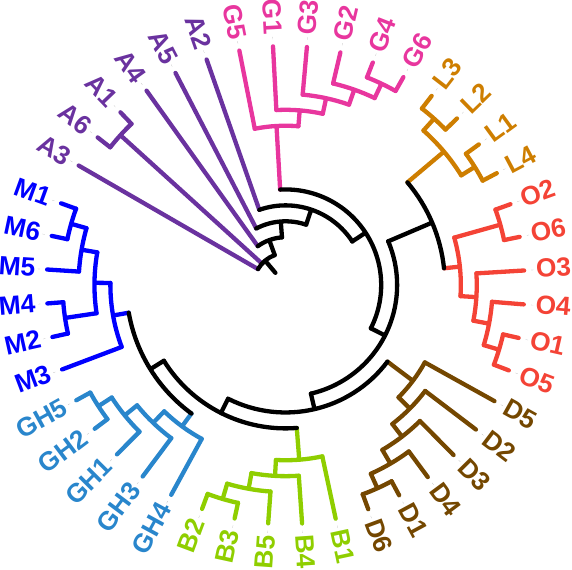}
		\caption{RAxML-NG~\cite{kozlov2019raxmlng}.}
	\end{subfigure}
	\caption{Phylogenetic trees for the SARS-CoV-2 dataset inferred using IQ-TREE 3~\cite{wong2025iqtree3} and RAxML-NG~\cite{kozlov2019raxmlng}. The analysis was conducted on a dataset comprising 44 complete genomes of severe acute respiratory syndrome coronavirus 2 (SARS-CoV-2), sourced from the GISAID database.}
	\label{fig:sarscov2_iqtree_raxml}
\end{figure}

 \begin{figure}[!b]
	\centering
	\begin{subfigure}[b]{1.0\textwidth}
		\includegraphics[width=\textwidth]{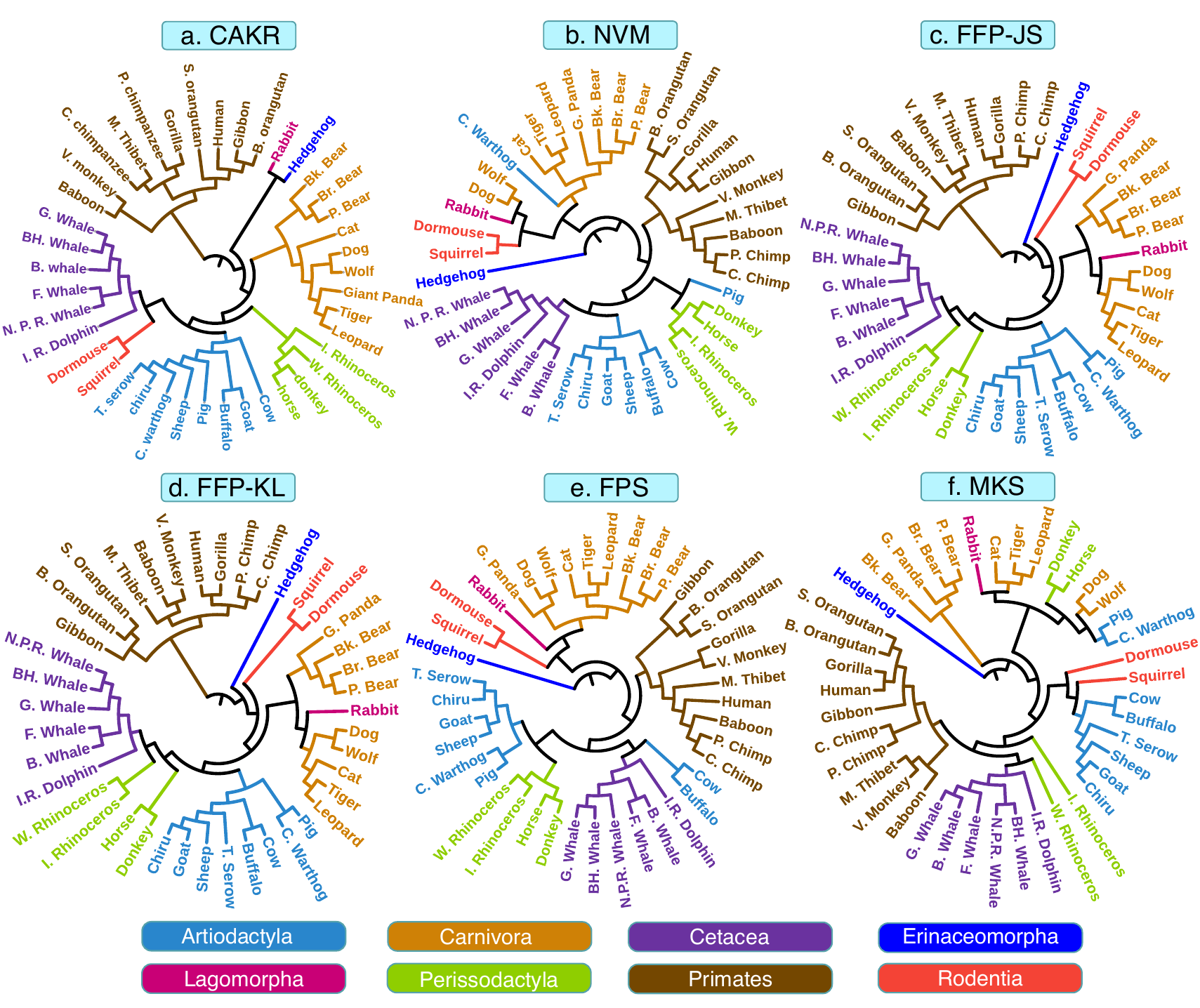}
	\end{subfigure}
	\caption{Performance comparison of various methods on the dataset of 41 complete mammalian mitochondrial genomes. The CAKR method accurately clustered all sequences by their known taxonomy. NVM failed to group warthog and pig with {Artiodactyla}, and produced fragmented {Carnivora} clades. Both FFP-JS and FFP-KL separated {Carnivora} and {Perissodactyla} into multiple clades. The MKS method resulted in three distinct {Carnivora} clades, failed to form coherent clusters for {Rodentia} and {Perissodactyla}, and fragmented Artiodactyla. The FPS method split {Artiodactyla} into two separate clades.}
	\label{fig:mammalian}
\end{figure}

\section{Phylogenetic tree reconstruction}

This section presents the phylogenetic trees generated by the six methods evaluated in this study across the six datasets used for phylogenetic analysis. Among these methods, {CAKR} demonstrated consistently stable performance across all datasets. In contrast, the other methods showed varying performance depending on the dataset, highlighting their sensitivity to data characteristics, and generally performed poorer than CAKR. Moreover, CAKR achieved a normalized Robinson-Foulds (nRF) distance of $0.41$ for the Ebolavirus dataset, $0.20$ for Influenza, $0.38$ for HEV, and $0.42$ for the Mammalian dataset at $k = 4$, as well as $0.43$ for the HRV dataset at $k = 5$.

Supplementary Figs. \ref{fig:cvtree} and \ref{fig:mash} provide additional phylogenetic tree analysis of SARS-CoV-2, mammalian mitochondrial genomes, and HRV datasets using CVTree and Mash, respectively. Table~\ref{tab:phylo_qualitative_summary} provides a qualitative summary of the overall phylogenetic reconstruction performance of the evaluated methods across all benchmark datasets.

\begin{table}[!h]
	\centering
	\caption{Qualitative summary of phylogenetic reconstruction performance across datasets.}
	\label{tab:phylo_qualitative_summary}
	\renewcommand{\arraystretch}{1.2}
	\setlength{\tabcolsep}{4pt}
	\small
	\begin{tabular}{lcccccc}
		\toprule
		\textbf{Dataset} & \textbf{CAKR} & \textbf{NVM} & \textbf{FFP-JS} & \textbf{FFP-KL} & \textbf{FPS} & \textbf{MKS} \\
		\midrule
		Mammalian mitochondrial & Excellent & Weak & Moderate & Moderate & Moderate & Weak \\
		HRV & Strong & Moderate & Moderate & Moderate & Excellent & Weak \\
		HEV & Excellent & Excellent & Excellent & Excellent & Weak & Moderate \\
		Influenza HA & Excellent & Weak & Excellent & Excellent & Weak & Moderate \\
		Ebolavirus & Excellent & Strong & Strong & Strong & Moderate & Strong \\
		Bacterial genomes & Excellent & Weak & Excellent & Excellent & Weak & Excellent \\
		\bottomrule
	\end{tabular}
\end{table}

  \begin{figure}[!b]
 	\centering
 	\begin{subfigure}[b]{1.0\textwidth}
 		\includegraphics[width=\textwidth]{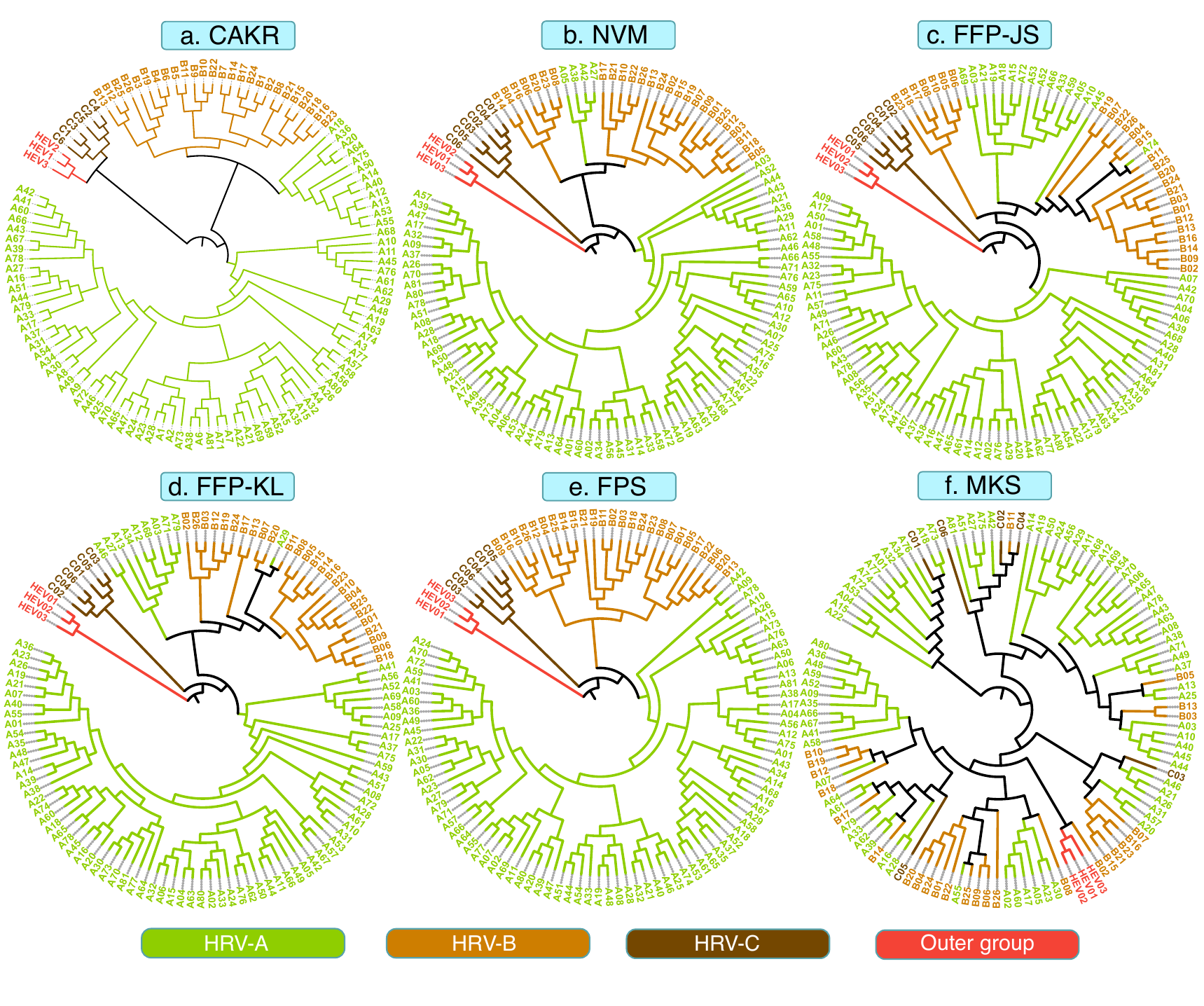}
 	\end{subfigure}
 	\caption{Performance comparison of various methods was carried out on a dataset comprising 113 complete genomes of human rhinoviruses (HRVs), supplemented with three HEV outgroup sequences. The $k=5$ \textsc{CAKR} representation and FPS correctly grouped all HRV genomes and separated them from outgroup sequences. NVM, FFP-JS, and FFP-KL each misclassified one or more HRV-A genomes within the HRV-B clade. MKS failed to produce uniform HRV clades and did not separate the outgroups.}
 	\label{fig:rhinovirus}
 \end{figure}

\begin{figure}[!b]
	\centering
	\begin{subfigure}[b]{1.0\textwidth}
		\includegraphics[width=\textwidth]{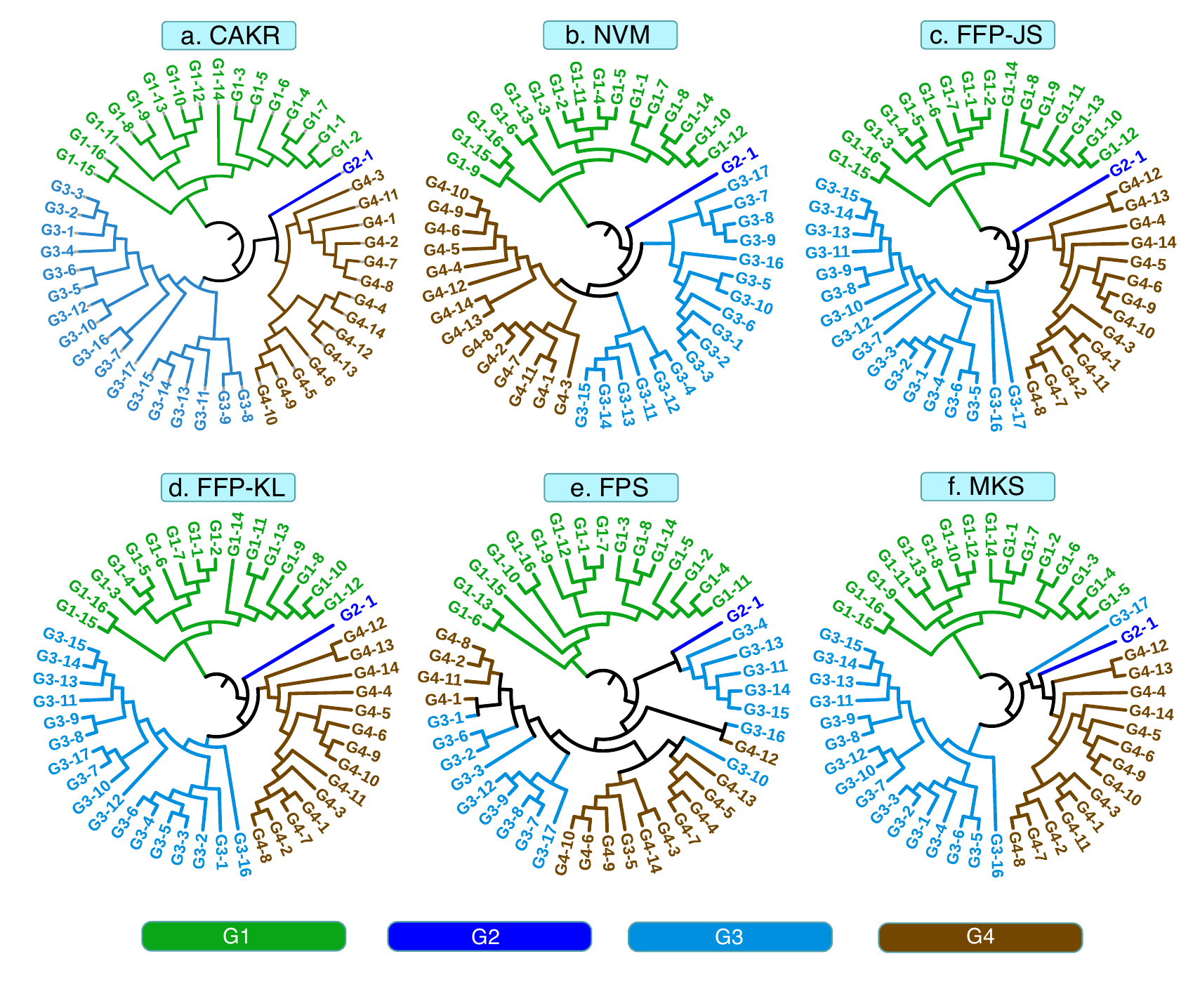}
	\end{subfigure}
	\caption{Performance comparison of various methods on the dataset of 48 complete Hepatitis E virus genomes (HEV). CAKR, FFP-JS, FFP-KL, and NVM correctly grouped all sequences. MKS misclassified one Group 3 genome, and FPS did not separate Groups 3 and 4.}
	\label{fig:HEV}
\end{figure}

\begin{figure}[!b]
	\centering
	\begin{subfigure}[b]{1.0\textwidth}
		\includegraphics[width=\textwidth]{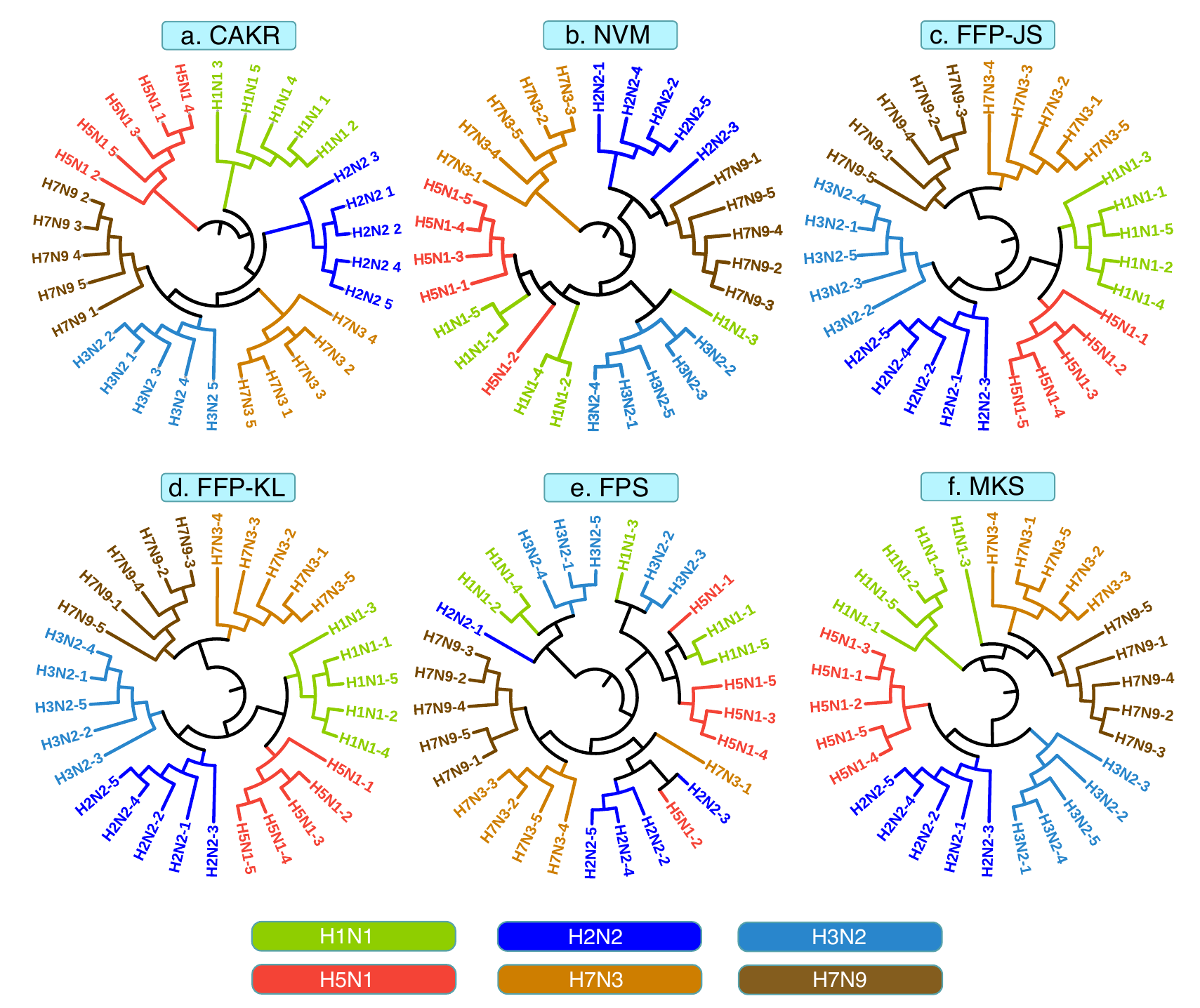}
	\end{subfigure}
	\caption{Performance comparison of various methods on the dataset of 30 influenza HA genes. CAKR, FFP-JS, and FFP-KL formed all clades correctly. NVM failed to group all H1N1 sequences and misclassified one H2N2 sequence; MKS misclassified one H1N1 gene. FPS did not produce clear clustering for most subtypes.}
	\label{fig:influenzaHAgene}
\end{figure}

\begin{figure}[!b]
	\centering
	\begin{subfigure}[b]{1.0\textwidth}
		\includegraphics[width=\textwidth]{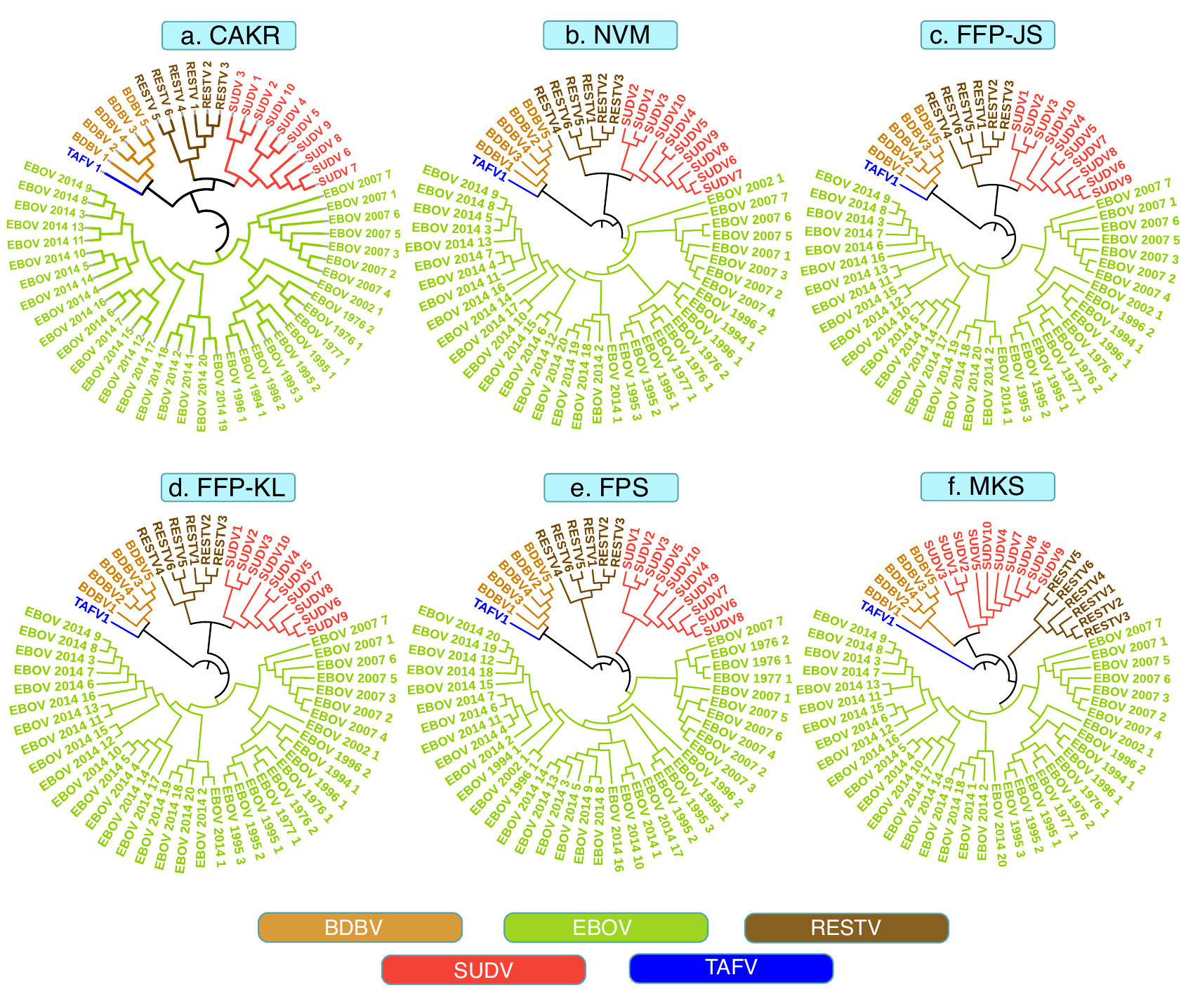}
	\end{subfigure}
	\caption{Performance comparison of various methods on the dataset of 59 complete genomes of ebolaviruses. All methods correctly recovered the viral types. MKS yielded a tree in which EBOV and RESTV shared a node, and FPS did not cluster EBOV epidemic strains distinctly.}
	\label{fig:ebolavirus}
\end{figure}

\begin{figure}[!b]
	\centering
	\begin{subfigure}[b]{1.0\textwidth}
		\includegraphics[width=\textwidth]{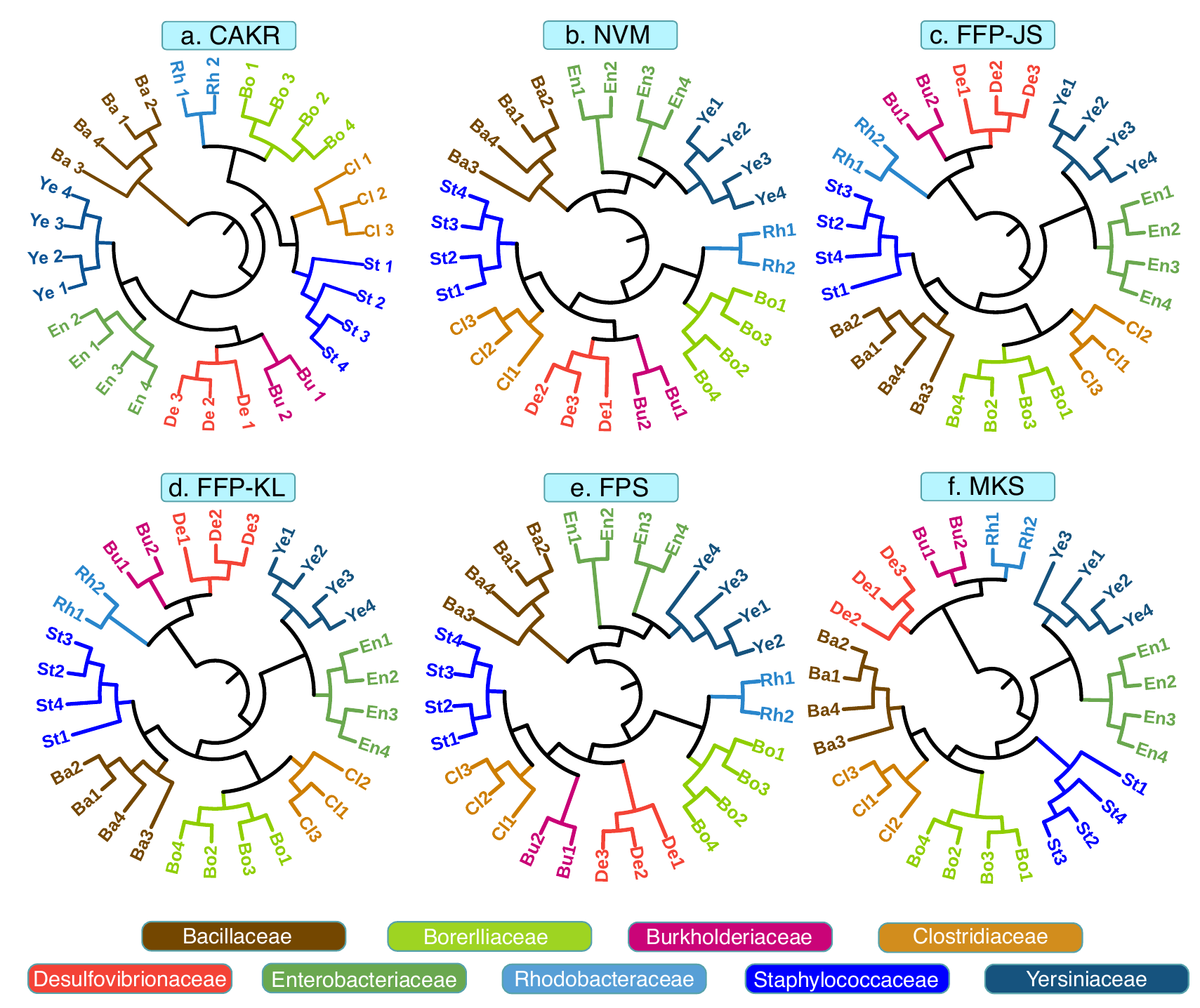}
	\end{subfigure}
	\caption{Performance comparison of various methods on a dataset of 30 complete bacterial genomes revealed that all methods—except NVM and FPS, which split the Enterobacteriaceae clade—successfully recovered the known taxonomic groupings without error.
	}
	\label{fig:bacteria}
\end{figure}

\begin{figure}[!b]
	\centering
	\begin{subfigure}[b]{1.0\textwidth}
		\includegraphics[width=\textwidth]{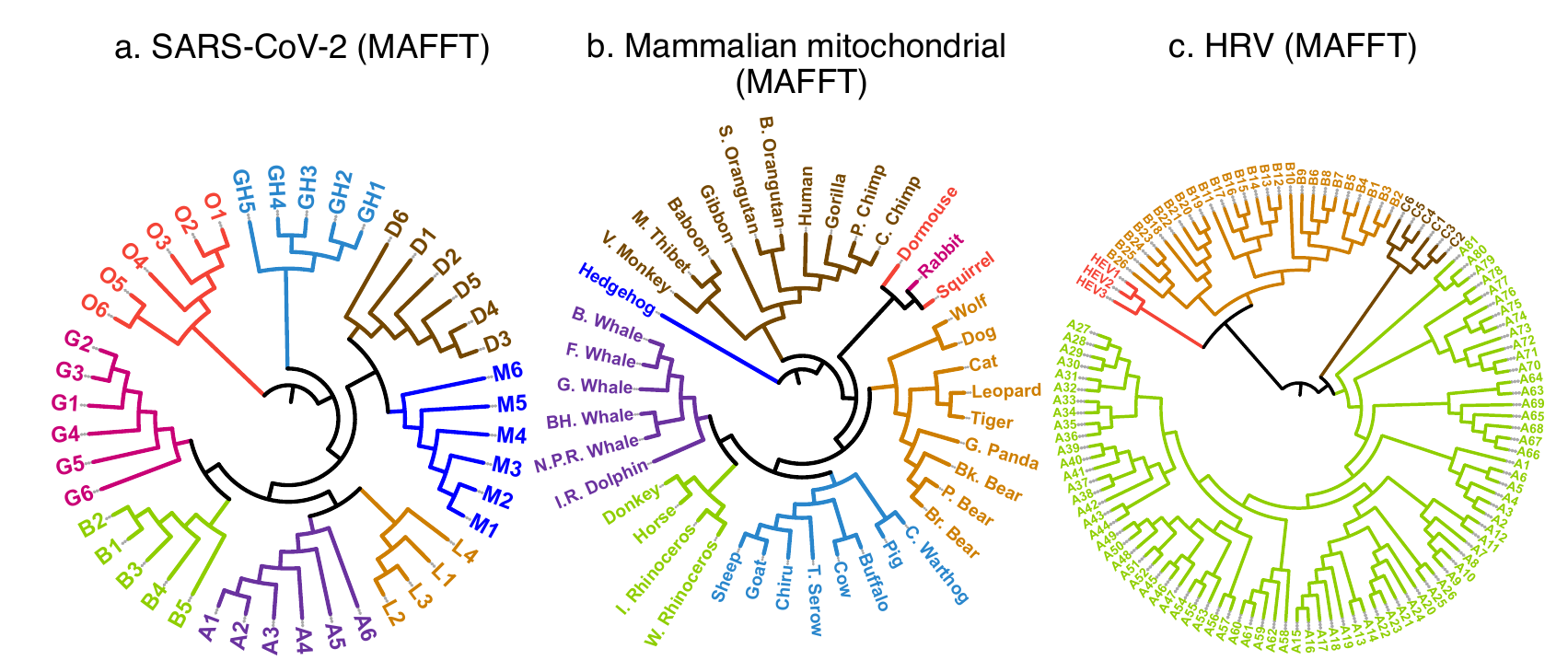}
	\end{subfigure}
	\caption{The phylogenetic trees constructed for the SARS-CoV-2, mammalian mitochondrial genomes, and human rhinovirus (HRV) datasets using MAFFT—an alignment-based method—exhibit a high degree of similarity to the corresponding trees generated by our proposed CAKR framework. This concordance underscores the biological validity of the alignment-free representations produced by CAKR.
	}
	\label{fig:mafft}
\end{figure}

\begin{figure}[!b]
	\centering
	\begin{subfigure}[b]{1.0\textwidth}
		\includegraphics[width=\textwidth]{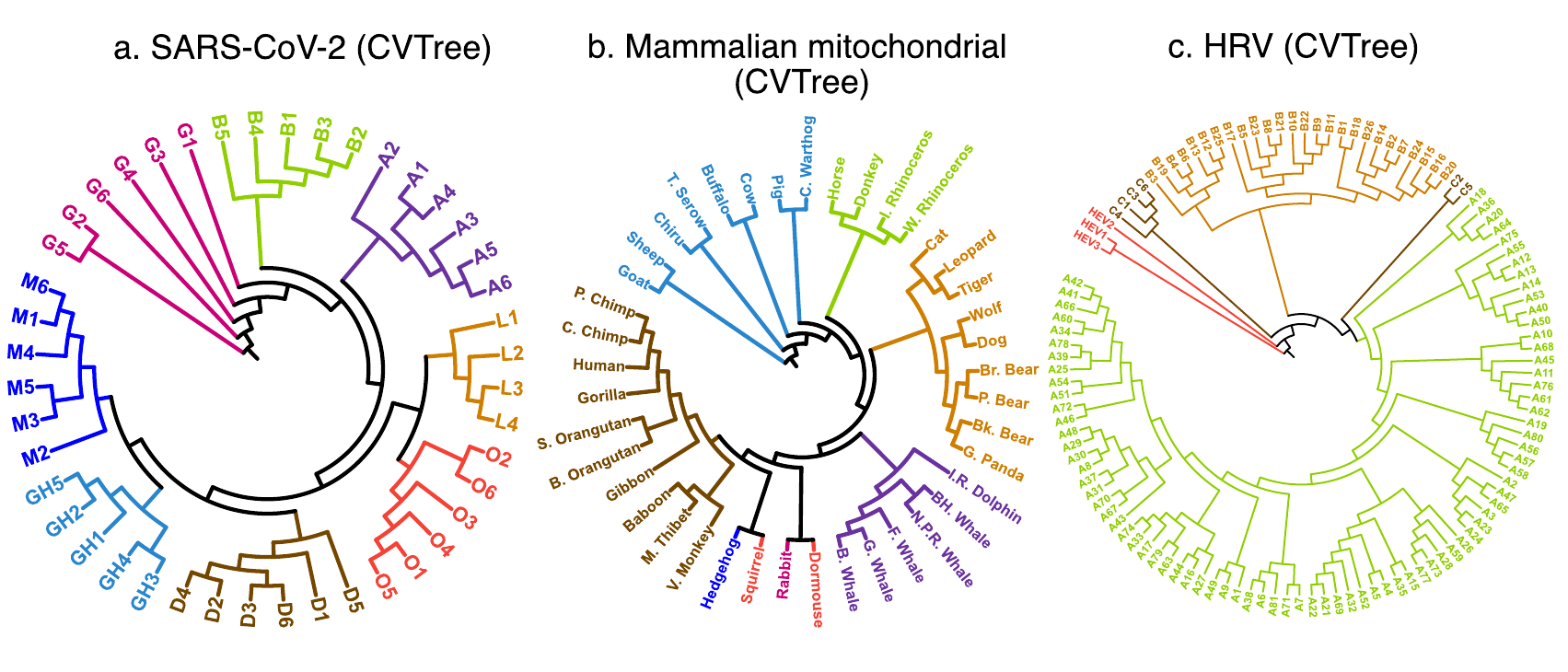}
	\end{subfigure}
	\caption{The phylogenetic trees constructed for the SARS-CoV-2, mammalian mitochondrial genomes, and human rhinovirus (HRV) datasets using CVTree—an alignment-free method.
	}
	\label{fig:cvtree}
\end{figure}

\begin{figure}[!b]
	\centering
	\begin{subfigure}[b]{1.0\textwidth}
		\includegraphics[width=\textwidth]{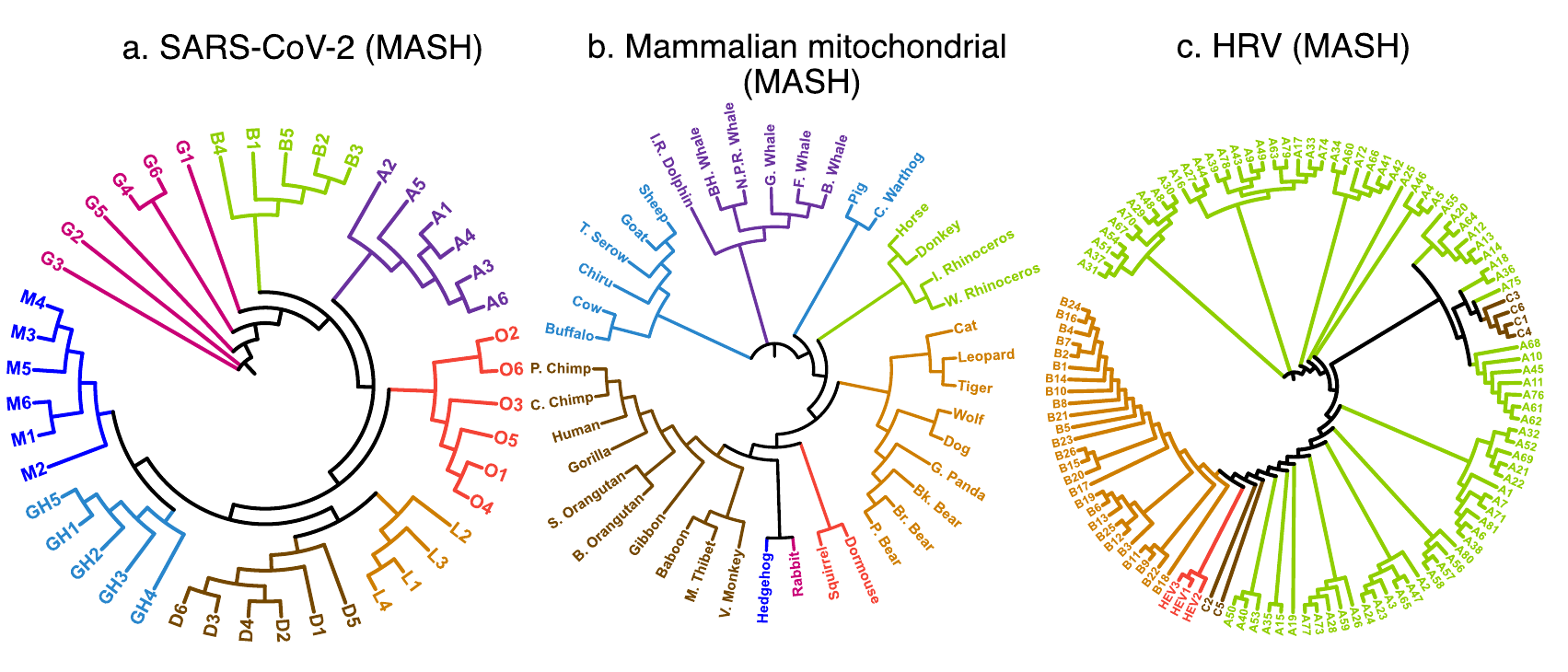}
	\end{subfigure}
	\caption{The phylogenetic trees constructed for the SARS-CoV-2, mammalian mitochondrial genomes, and human rhinovirus (HRV) datasets using Mash - an alignment-free method.
	}
	\label{fig:mash}
\end{figure}

\clearpage

\section{Viral classification}

As discussed in the main manuscript, four datasets for viral classification were taken from \cite{hozumi2024revealing}. A detailed comparison of CAKR and other five methods on classification tasks is given in Table  \ref{tab:performance}.

\begin{table}[!h]
	\centering
	\renewcommand{\arraystretch}{1.1}
	\caption{Comparison of 5-NN classification scores of the six methods.}
	\label{tab:performance}

	\begin{threeparttable}
		\begin{tabular}{llccccc}
			\toprule
			\textbf{Data} & \textbf{Method} & \textbf{ACC} & \textbf{BA} & \textbf{F1} & \textbf{Recall} & \textbf{Precision} \\
			\midrule
			\multirow{6}{*}{NCBI 2020}
			& CAKR    & \textbf{0.913} & \textbf{0.887} & \textbf{0.892} & \textbf{0.887} & \textbf{0.915} \\
			& NVM     & 0.847 & 0.807 & 0.809 & 0.807 & 0.840 \\
			& FFP-JS  & 0.821 & 0.790 & 0.781 & 0.790 & 0.814 \\
			& FFP-KL  & 0.819 & 0.789 & 0.780 & 0.789 & 0.814 \\
			& MKS  & 0.713 & 0.644 & 0.622 & 0.644 & 0.668 \\
			& FPS     & 0.714 & 0.637 & 0.633 & 0.637 & 0.665 \\
			\midrule
			\multirow{6}{*}{NCBI 2022}
			& CAKR    & \textbf{0.902} & \textbf{0.820} & \textbf{0.824} & \textbf{0.819} & \textbf{0.859} \\
			& NVM     & 0.852 & 0.747 & 0.750 & 0.747 & 0.791 \\
			& FFP-JS  & 0.830 & 0.740 & 0.733 & 0.740 & 0.769 \\
			& FFP-KL  & 0.832 & 0.743 & 0.735 & 0.743 & 0.771 \\
			& MKS  & 0.724 & 0.593 & 0.577 & 0.593 & 0.617 \\
			& FPS     & 0.723 & 0.588 & 0.580 & 0.588 & 0.603 \\
			\midrule
			\multirow{6}{*}{NCBI 2024}
			& CAKR    & \textbf{0.876} & \textbf{0.792} & \textbf{0.804} & \textbf{0.792} & \textbf{0.853} \\
			& NVM     & 0.814 & 0.729 & 0.738 & 0.729 & 0.789 \\
			& FFP-JS  & 0.796 & 0.724 & 0.712 & 0.724 & 0.744 \\
			& FFP-KL  & 0.796 & 0.727 & 0.714 & 0.727 & 0.747 \\
			& MKS  & 0.633 & 0.589 & 0.554 & 0.589 & 0.573 \\
			& FPS     & 0.660 & 0.561 & 0.560 & 0.561 & 0.593 \\
			\midrule
			\multirow{6}{*}{NCBI 2024 All}
			& CAKR    & \textbf{0.876} & \textbf{0.794} & \textbf{0.807} & \textbf{0.794} & \textbf{0.853} \\
			& NVM     & 0.809 & 0.729 & 0.738 & 0.729 & 0.788 \\
			& FFP-JS  & 0.799 & 0.721 & 0.711 & 0.721 & 0.745 \\
			& FFP-KL  & 0.799 & 0.724 & 0.712 & 0.724 & 0.745 \\
			& MKS  & 0.638 & 0.589 & 0.555 & 0.589 & 0.575 \\
			& FPS     & 0.651 & 0.561 & 0.559 & 0.561 & 0.591 \\
			\bottomrule
		\end{tabular}

		\begin{tablenotes}[flushleft]
			\footnotesize
			\item \textit{Note:} Boldface indicates the best-performing method for each dataset (within each block of rows).
		\end{tablenotes}
	\end{threeparttable}
\end{table}

\clearpage

\section{Complexity comparison}

To assess the runtime scalability of \textsc{CAKR}, we empirically compared its wall-clock runtimes with those of NVM, FFP-JS, FFP-KL, FPS, MKS, and MAFFT as a function of sequence length and \(k\)-mer size, focusing on the feature-extraction (featurization) step. For each fixed \(k\), all methods exhibit a monotone increase in runtime with respect to sequence length, and the empirical growth rates are very similar across methods. However, \textsc{CAKR} consistently incurs the largest per-sequence runtime at each length and \(k\), reflecting the additional cost of constructing Vietoris-Rips complexes and computing persistent facet invariants. For example, in our implementation the facet-based CAKR features for a single sequence increase from about \(4.6\times 10^{-3}\,\mathrm{s}\) at length \(10\) to roughly \(2.8\times 10^{2}\,\mathrm{s}\) at length \(3\times 10^{6}\). Despite this higher constant factor, \textsc{CAKR} remains practical for genome-scale data and, as shown above, its added computational cost is accompanied by systematic improvements in phylogenetic accuracy and viral classification performance over the other alignment-free methods. We emphasize that these runtimes concern the featurization stage only; subsequent pairwise comparisons are carried out in the resulting feature space. Since our experiments indicate that small \(k\)-mer sizes (\(k=3,4,5\)) are the most informative for \textsc{CAKR}, the corresponding feature vectors have relatively low dimension. As a consequence, distance computation for \textsc{CAKR} is computationally inexpensive, whereas for several competing methods a substantial portion of the total runtime arises in the distance-calculation step between higher-dimensional or sequence-length-dependent representations.

\paragraph{Memory requirements.}
We also estimated the memory required to store the CAKR facet representations.
For a dataset with \(N\) DNA sequences, \(k\)-mer size \(k\), \(F\) filtration values, and maximum facet dimension \(d_{\max}\), the feature matrix has size \(N \times ((d_{\max}+1)F4^k)\). Thus, when stored in double precision, the raw dense numerical storage is \(8N(d_{\max}+1)F4^k\) bytes. In the experiments reported here, we used only facet dimension \(0\), so \(d_{\max}=0\), and \(F=3\) filtration values. Therefore the theoretical dense storage size reduces to \(24N4^k\) bytes.

Table~\ref{tab:cakr-memory-estimates} reports the resulting theoretical dense feature-storage estimates for the datasets used in this study. These estimates describe the final stored CAKR feature arrays and do not include temporary objects created during featurization, such as \(k\)-mer position lists or intermediate Vietoris--Rips and facet structures. In our implementation, the saved NumPy objects may also differ slightly from these estimates because of object-array and file-format overhead. Nevertheless, the estimates show that the final CAKR feature storage is modest for the phylogenetic datasets and remains manageable even for the larger viral classification benchmarks at the small \(k\)-values used in the experiments.

The pairwise distance matrix is independent of \(k\) after the CAKR feature vectors have been computed. For \(N\) sequences, a full double-precision distance matrix requires \(8N^2\) bytes. If only the condensed upper-triangular form is stored, the requirement is approximately half this amount. Thus, for the largest dataset considered here, NCBI 2024 All with \(N=13{,}645\), the full double-precision distance matrix requires about \(1.49\) GB, while the condensed symmetric form requires about \(0.75\) GB. For the smaller phylogenetic datasets, the distance matrix is negligible compared with the feature-extraction step.

\begin{table}[!t]
\centering
\caption{Theoretical dense storage size of CAKR facet features for the datasets used in this study. The estimates assume DNA \(k\)-mers, double-precision storage, maximum facet dimension \(d_{\max}=0\), and \(F=3\) filtration values. Thus the storage size is \(24N4^k\) bytes for a dataset with \(N\) sequences.}
\label{tab:cakr-memory-estimates}
\renewcommand{\arraystretch}{1.15}
\begin{tabular}{lrrrrr}
\toprule
\textbf{Dataset} & \(\boldsymbol{N}\) & \(\boldsymbol{k=3}\) & \(\boldsymbol{k=4}\) & \(\boldsymbol{k=5}\) & \(\boldsymbol{k=6}\) \\
\midrule
SARS-CoV-2        &     44 & 67.6 KB  & 270.3 KB & 1.08 MB   & 4.33 MB \\
Mammalian         &     41 & 63.0 KB  & 251.9 KB & 1.01 MB   & 4.03 MB \\
HRV               &    116 & 178.2 KB & 712.7 KB & 2.85 MB   & 11.40 MB \\
HEV               &     48 & 73.7 KB  & 294.9 KB & 1.18 MB   & 4.72 MB \\
Influenza HA      &     30 & 46.1 KB  & 184.3 KB & 737.3 KB  & 2.95 MB \\
Ebolavirus        &     59 & 90.6 KB  & 362.5 KB & 1.45 MB   & 5.80 MB \\
Bacterial genomes &     30 & 46.1 KB  & 184.3 KB & 737.3 KB  & 2.95 MB \\
Salmonella        &    519 & 797.2 KB & 3.19 MB  & 12.75 MB  & 51.02 MB \\
NCBI 2020         &  6,993 & 10.74 MB & 42.96 MB & 171.86 MB & 687.44 MB \\
NCBI 2022         & 11,428 & 17.55 MB & 70.21 MB & 280.85 MB & 1.12 GB \\
NCBI 2024         & 12,154 & 18.67 MB & 74.67 MB & 298.70 MB & 1.19 GB \\
NCBI 2024 All     & 13,645 & 20.96 MB & 83.83 MB & 335.34 MB & 1.34 GB \\
\bottomrule
\end{tabular}
\end{table}

\begin{figure}[!t]
	\centering
	\begin{subfigure}[b]{0.40\textwidth}
		\centering
		\includegraphics[width=\textwidth]{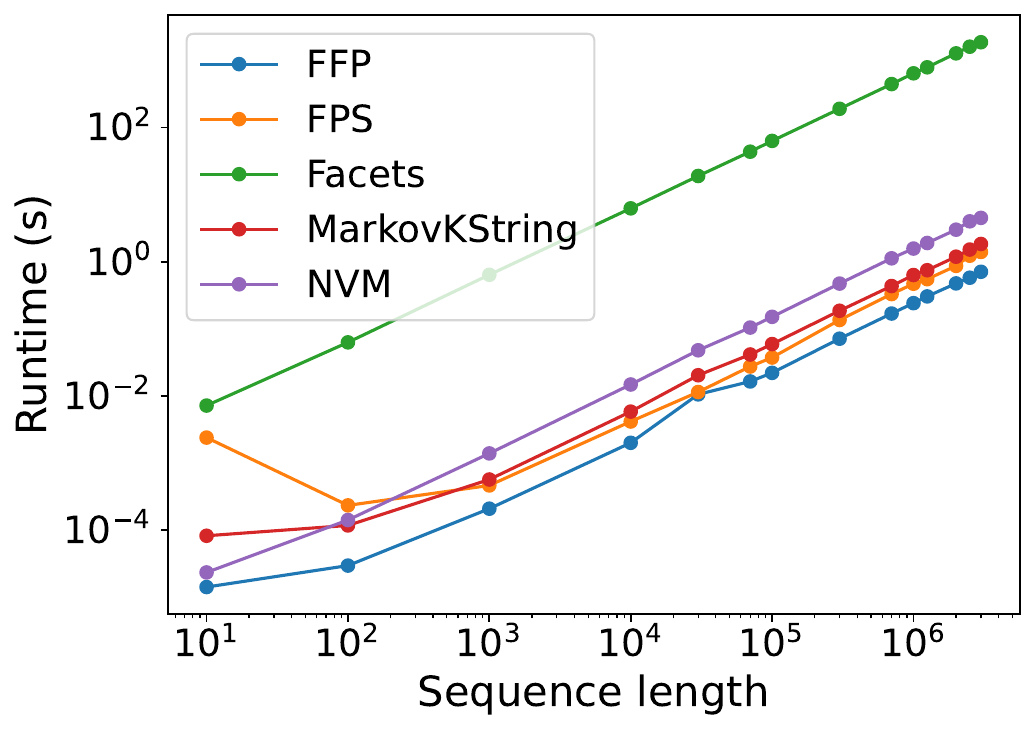}
		\caption{Runtime vs.\ length for \(k=3\).}
		\label{fig:runtime_k3}
	\end{subfigure}
	\begin{subfigure}[b]{0.40\textwidth}
		\centering
		\includegraphics[width=\textwidth]{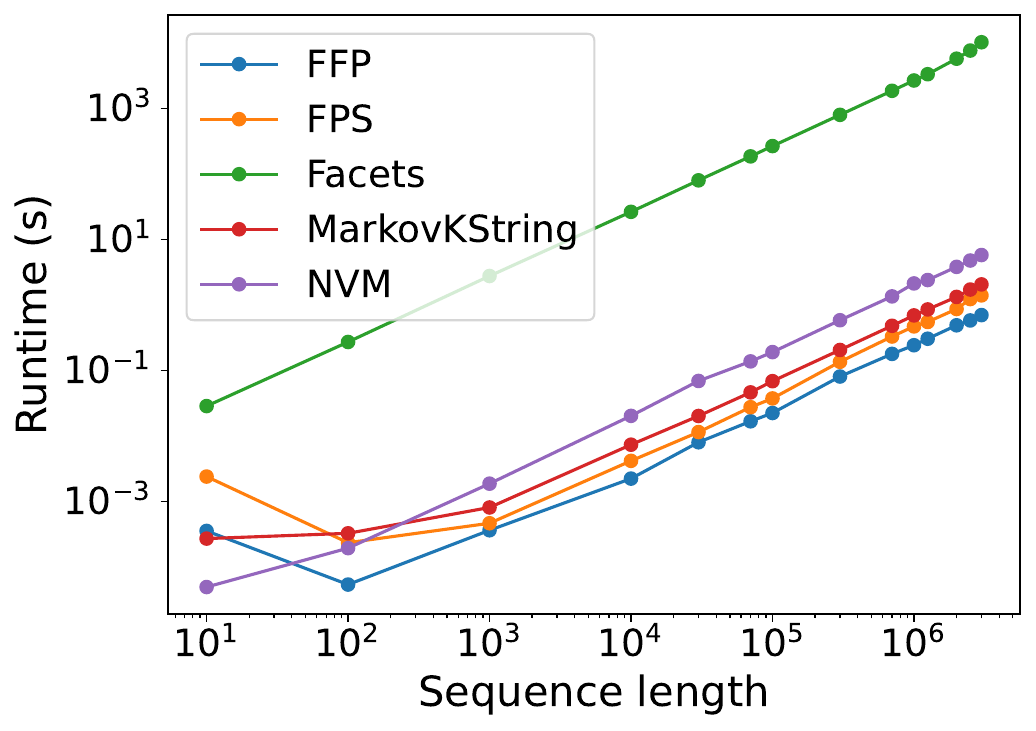}
		\caption{Runtime vs.\ length for \(k=4\).}
		\label{fig:runtime_k4}
	\end{subfigure}

	\vspace{0.5em}

	\begin{subfigure}[b]{0.40\textwidth}
		\centering
		\includegraphics[width=\textwidth]{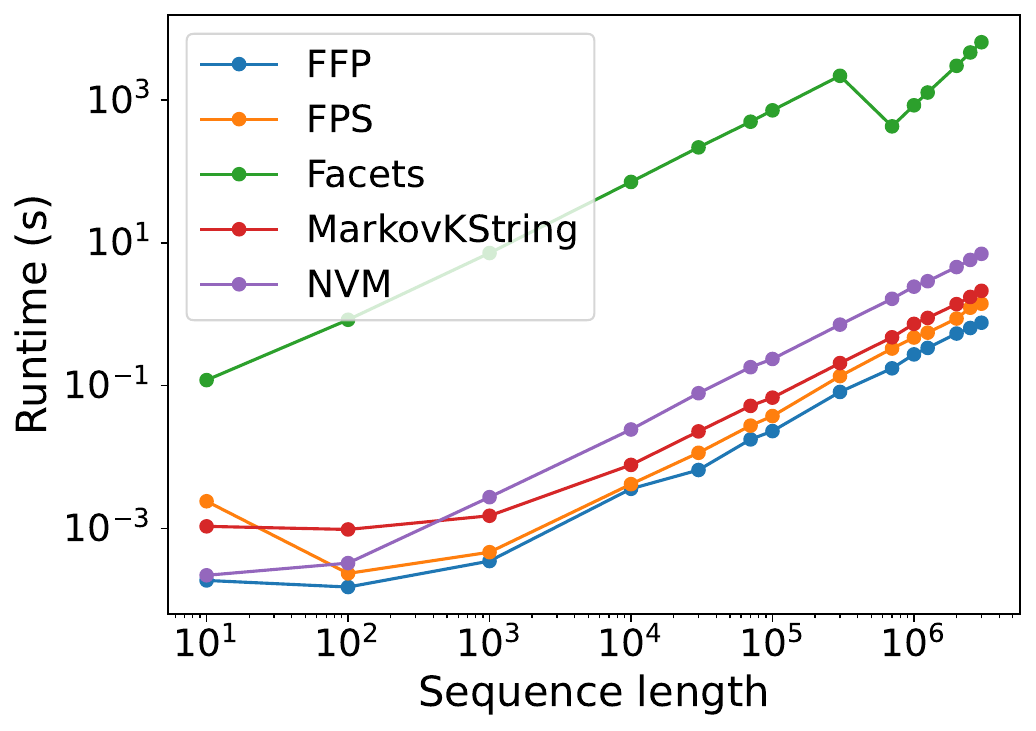}
		\caption{Runtime vs.\ length for \(k=5\).}
		\label{fig:runtime_k5}
	\end{subfigure}
	\begin{subfigure}[b]{0.40\textwidth}
		\centering
		\includegraphics[width=\textwidth]{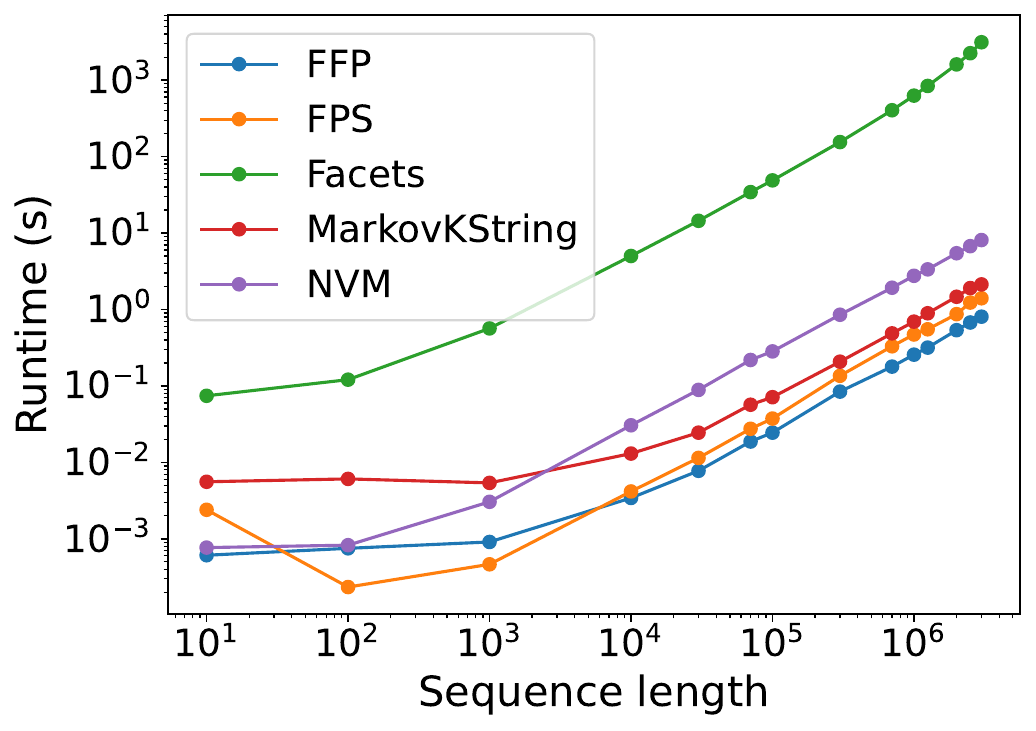}
		\caption{Runtime vs.\ length for \(k=6\).}
		\label{fig:runtime_k6}
	\end{subfigure}
	\caption{Runtime of the proposed facet-based \textsc{CAKR} implementation and all baseline methods as a function of sequence length for different \(k\)-mer sizes. Source data are provided as a Source Data file.}
	\label{fig:runtime_vs_length_allk}
\end{figure}

\clearpage

\section{Effect of \texorpdfstring{$k$}{k}-mer size on performance}\label{sec:kmer_performance}

We next examined how the choice of \(k\)-mer size influences the performance of CAKR and the baseline alignment-free methods across both classification and phylogenetic tasks.

We first report the direct dependence of 1-NN classification accuracy on \(k\) for CAKR, NVM, FFP-JS, FFP-KL, and MKS across the NCBI 2020, NCBI 2022, NCBI 2024, and NCBI 2024 All datasets. As shown in Fig.~\ref{fig:ncbi_1nn_accuracy}, these curves provide an explicit comparison of how sensitive each method is to the choice of \(k\)-mer size. Across all four datasets, CAKR maintains strong accuracy over a broad range of \(k\) values and achieves its best performance at moderate \(k\), whereas several baseline methods show greater variability as \(k\) changes.

To complement this classification-based view, we also summarize the effect of \(k\) on phylogenetic performance for the SARS-CoV-2, rhinovirus, and mammalian datasets. For each dataset, each method, and each \(k \in \{3,4,5,6\}\), we first compute the performance at that specific \(k\). We then report, for each value of \(k\), the cumulative-best performance
\[
\max_{k' \leq k} \mathrm{Perf}(\text{method}, k'),
\]
so that the resulting curves are non-decreasing and indicate how quickly each method approaches its best attained purity as the \(k\)-mer size increases.

Supplementary Fig.~\ref{fig:performance_k_all} summarizes these cumulative-best purity results. This representation highlights both the strongest phylogenetic performance achieved by each method and the smallest \(k\) at which that performance is effectively reached. Across all three datasets, CAKR reaches high purity at relatively small or moderate \(k\) values and remains competitive or superior to the other alignment-free baselines throughout the explored range.

Taken together, the direct 1-NN accuracy curves and the cumulative-best purity curves provide complementary perspectives on \(k\)-dependence. The former shows the explicit sensitivity of classification performance to \(k\), whereas the latter summarizes how rapidly each method reaches its strongest phylogenetic regime. Overall, these results indicate that CAKR combines strong peak performance with comparatively stable behavior across a practically relevant range of \(k\)-mer sizes.

\begin{figure}[t]
	\centering
	\includegraphics[width=\textwidth]{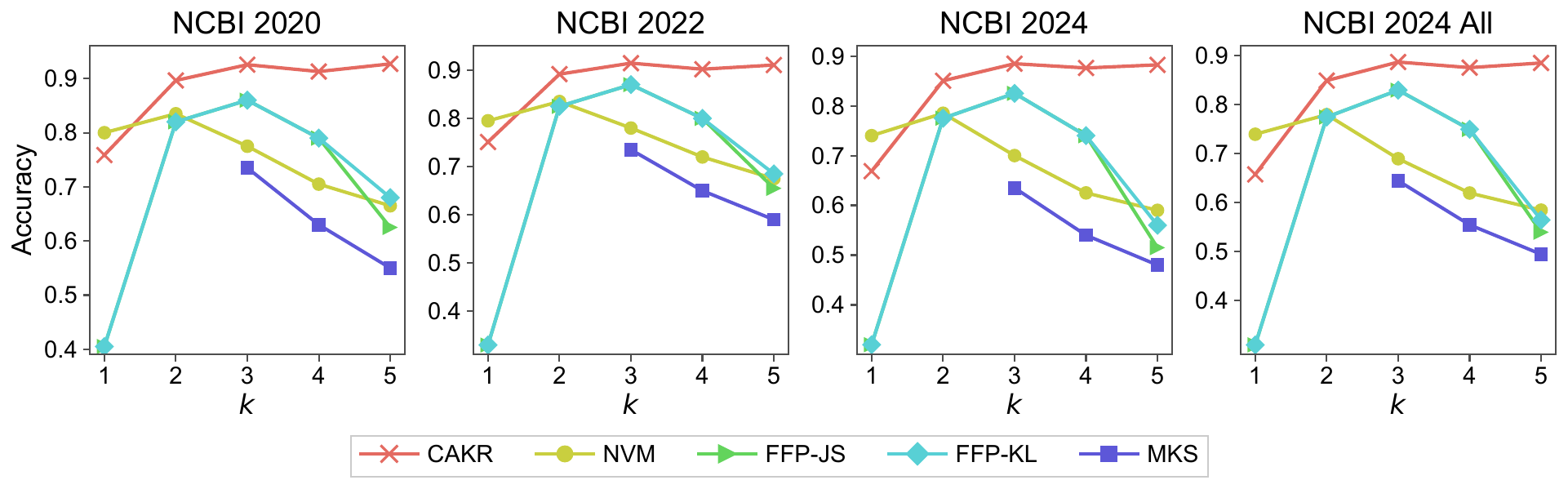}
	\caption{1-NN classification accuracy as a function of \(k\)-mer size for CAKR, NVM, FFP-JS, FFP-KL, and MKS across the NCBI 2020, NCBI 2022, NCBI 2024, and NCBI 2024 All datasets. These plots provide a direct comparison of the sensitivity of each method to the choice of \(k\). Source data are provided as a Source Data file.}
	\label{fig:ncbi_1nn_accuracy}
\end{figure}

\begin{figure}[htbp]
	\centering
	\includegraphics[width=\textwidth]{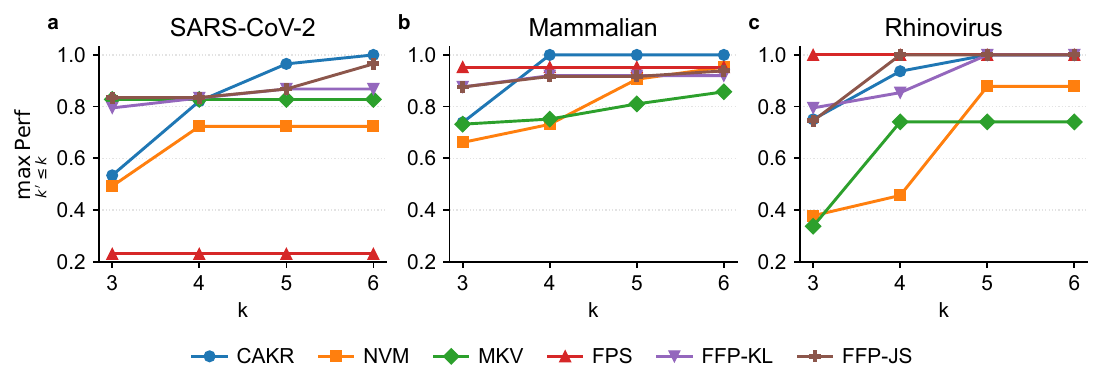}
	\caption{Cumulative-best purity of CAKR and the baseline alignment-free methods as a function of the \(k\)-mer size \(k\) on the SARS-CoV-2, rhinovirus, and mammalian datasets. For each method and each value of \(k\), we report the maximum purity achieved over all \(k' \leq k\), with \(k' \in \{3,4,5,6\}\). These curves summarize how quickly each method approaches its best attained phylogenetic purity as the \(k\)-mer size increases. Dataset sizes were $n=44$ SARS-CoV-2 sequences, $n=41$ mammalian mitochondrial sequences and $n=116$ sequences for the rhinovirus analysis, comprising 113 HRV sequences and 3 HEV outgroups. Each marker represents one computed cumulative-best purity value. No averaging, error bars or statistical tests were used. Source data are provided as a Source Data file.}
	\label{fig:performance_k_all}
\end{figure}

\clearpage

\section{Comparison of UPGMA and Neighbor-Joining for CAKR (purity metric)}

To evaluate the sensitivity of CAKR to the tree-construction procedure, we compared clustering purity under two phylogenetic reconstruction methods, UPGMA and Neighbor-Joining (NJ), while keeping the distance matrices and all other experimental settings fixed. UPGMA (Unweighted Pair Group Method with Arithmetic Mean) is a hierarchical agglomerative clustering method that iteratively merges the closest clusters using the average pairwise distance between their members, thereby producing a rooted tree under an implicit constant-rate assumption. By contrast, NJ is a distance-based phylogenetic method that constructs an unrooted tree by iteratively joining pairs of taxa so as to minimize the total branch length \cite{felsenstein2004inferring}. Under UPGMA, the highest purities achieved across all tested values of \(k\) were
\[
P^{\mathrm{UPGMA}}_{\text{rhinovirus}} = 1.000,\quad
P^{\mathrm{UPGMA}}_{\text{mammalian}} = 1.000,\quad
P^{\mathrm{UPGMA}}_{\text{SARS-CoV-2}} = 1.000.
\]
Using NJ instead, the corresponding best purities were
\[
P^{\mathrm{NJ}}_{\text{rhinovirus}} = 0.649,\quad
P^{\mathrm{NJ}}_{\text{mammalian}} = 0.716,\quad
P^{\mathrm{NJ}}_{\text{SARS-CoV-2}} = 1.000.
\]

These results show that UPGMA achieves perfect clustering on all three datasets, whereas NJ leads to a reduction in purity for the rhinovirus and mammalian datasets while maintaining identical performance on the SARS-CoV-2 dataset. Overall, these results indicate that the quantitative performance of CAKR depends on the choice of tree-construction method. When coupled with UPGMA, CAKR produces consistently strong clustering results across all three datasets. In contrast, when coupled with Neighbor-Joining, the method shows reduced stability on the rhinovirus and mammalian datasets, although performance remains unchanged for SARS-CoV-2 (see Supplementary Fig.~\ref{fig:NJ_purity}).

\begin{figure}[!h]
	\centering
	\begin{subfigure}[b]{0.5\textwidth}
		\includegraphics[width=\textwidth]{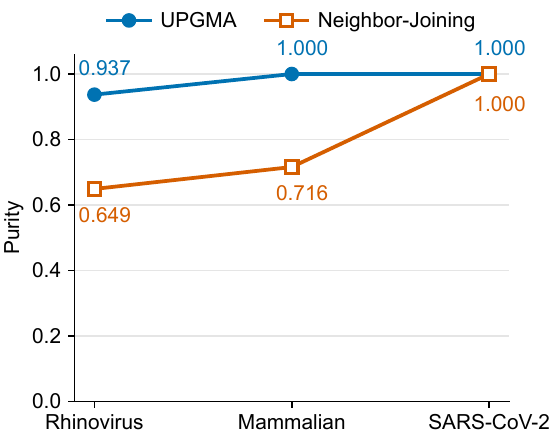}
	\end{subfigure}
	\caption{Comparison of the best clustering purities achieved by CAKR
		over all tested values of \(k\) using UPGMA and Neighbor-Joining (NJ).
		Each point represents one computed purity value for one dataset:
		rhinovirus, \(n=116\) sequences; mammalian mitochondrial,
		\(n=41\) sequences; and SARS-CoV-2, \(n=44\) sequences.
		No averaging or error bars were used. Source data are provided as a Source Data file.}
	\label{fig:NJ_purity}
\end{figure}

\clearpage
\section{Robustness of method rankings}
\label{sec:kendall_analysis}

To evaluate the robustness of the comparative ranking of methods, we used Kendall's coefficient of concordance ($W$), a nonparametric measure of agreement among multiple rankings. Kendall's $W$ ranges from 0 to 1, where $W=1$ indicates complete agreement and $W=0$ indicates no agreement. For $n$ items ranked by $m$ judges, with correction for tied ranks, it is defined as
\begin{equation}
	W \;=\;
	\frac{12 \sum_{i=1}^{n} (R_i - \bar{R})^2}
	{m^2 (n^3 - n)-m\sum_{j=1}^{m}T_j},
	\label{eq:kendallW}
\end{equation}
where $R_i$ is the sum of ranks assigned to item $i$,
\begin{equation}
	\bar{R} \;=\; \frac{m(n+1)}{2}
	\label{eq:Rbar}
\end{equation}
is the mean rank sum, and $T_j=\sum_g(t_{jg}^3-t_{jg})$, where $t_{jg}$ is the number of tied items in the $g$-th group of ties assigned by judge $j$. The significance of concordance was assessed using
\begin{equation}
	\chi^2 \;=\; m(n-1)W,
	\label{eq:chisq}
\end{equation}
which approximately follows a $\chi^2$ distribution with $n-1$ degrees of freedom.

We first examined the agreement of method rankings across evaluation metrics within each dataset. In this analysis, the six methods (CAKR, NVM, FFP-JS, FFP-KL, MKS, and FPS) were treated as the ranked items, and the five evaluation metrics (Accuracy, Balanced Accuracy, F1-score, Recall, and Precision) were treated as judges. For each dataset, methods were ranked within each metric, with rank \(1\) assigned to the best-performing method, and Kendall's $W$ was computed to quantify the consistency of these rankings. As shown in Table~\ref{tab:kendall_typeI}, concordance was high across all datasets, with $W$ ranging from $0.965$ to $0.982$. Using \eqref{eq:chisq} with $m=5$ and $n=6$ gives $\chi^2 = 25W$ with $df=5$, indicating significant agreement. The corresponding $\chi^2$ statistics and $P$-values were $24.195$ and $0.000199$ for NCBI 2020, $24.543$ and $0.000171$ for NCBI 2022, $24.195$ and $0.000199$ for NCBI 2024, and $24.133$ and $0.000205$ for NCBI 2024 All, respectively. These results show that the relative ordering of methods is largely unchanged across different evaluation metrics.

\begin{table}[h]
	\caption{Concordance of method rankings across evaluation metrics within each dataset.}
	\label{tab:kendall_typeI}
	\centering
	\begin{tabular}{lc}
		\toprule
		Dataset & Kendall's $W$ \\
		\midrule
		NCBI 2020     & 0.968 \\
		NCBI 2022     & 0.982 \\
		NCBI 2024     & 0.968 \\
		NCBI 2024 All & 0.965 \\
		\bottomrule
	\end{tabular}
\end{table}

We next assessed the agreement of method rankings across dataset versions for each evaluation metric. Here, the six methods again served as the ranked items, whereas the four dataset versions (NCBI 2020, NCBI 2022, NCBI 2024, and NCBI 2024-All) acted as judges. For each metric, methods were ranked within each dataset version, and Kendall's $W$ was used to measure the consistency of the resulting rankings. As shown in Table~\ref{tab:kendall_typeII}, concordance remained high for all metrics, with $W$ ranging from $0.964$ to $0.979$. Using \eqref{eq:chisq} with $m=4$ and $n=6$ gives $\chi^2 = 20W$ with $df=5$, confirming that method rankings are also stable across dataset versions. The corresponding $\chi^2$ statistics and $P$-values were $19.275$ and $0.001708$ for Accuracy, $19.571$ and $0.001504$ for Balanced Accuracy, F1-score, and Recall, and $19.275$ and $0.001708$ for Precision, respectively.

\begin{table}[h]
	\caption{Concordance of method rankings across dataset versions for each evaluation metric.}
	\label{tab:kendall_typeII}
	\centering
	\begin{tabular}{lc}
		\toprule
		Metric & Kendall's $W$ \\
		\midrule
		Accuracy          & 0.964 \\
		Balanced Accuracy & 0.979 \\
		F1-score          & 0.979 \\
		Recall            & 0.979 \\
		Precision         & 0.964 \\
		\bottomrule
	\end{tabular}
\end{table}

Taken together, these results show that the comparative performance hierarchy is highly robust. The ranking of methods is largely stable both to the choice of evaluation metric and to differences among dataset versions, supporting the stability of the overall conclusion that CAKR consistently ranks among the top-performing approaches.

\clearpage
\section{Choice of $k$-mer size and feature representation}

Rather than fixing a single $k$-mer size a priori, we adopt a data-driven strategy to determine an optimal scale $k^\ast$ for each dataset. For each candidate $k$, let $X^{(k)} \in \mathbb{R}^{n \times 4^k}$ denote the corresponding feature matrix. We evaluate three complementary criteria that capture structural, informational, and stability aspects of the representation.

First, we define the coverage
\begin{equation}
	\mathrm{cov}(k)
	=
	\frac{1}{n}\sum_{r=1}^n
	\frac{\#\{j : X^{(k)}_{rj} > 0\}}{4^k},
\end{equation}
which measures the fraction of active features. The coverage curve typically exhibits an elbow behavior, and we define the structural reference scale as
\begin{equation}
	k_{\mathrm{cov}}
	=
	\arg\max_k \left(-\Delta^2 \mathrm{cov}(k)\right),
\end{equation}
where $\Delta^2$ denotes the discrete second difference, identifying the point of maximal curvature.

Second, we consider the normalized distribution
\[
p^{(k)}_{rj}
=
\frac{X^{(k)}_{rj}}{\sum_{\ell} X^{(k)}_{r\ell}},
\]
and define the entropy
\begin{equation}
	H(k)
	=
	-\frac{1}{n}\sum_{r=1}^n \sum_{j} p^{(k)}_{rj}\log p^{(k)}_{rj}.
\end{equation}
This leads to the information-sparsity score
\begin{equation}
	I(k)
	=
	\frac{H(k)}{1 - \mathrm{cov}(k)},
\end{equation}
which balances information content against feature sparsity.

Third, we quantify the prevalence of rare features via
\begin{equation}
	\mathrm{rare}(k)
	=
	\frac{1}{n}\sum_{r=1}^n
	\frac{\#\{j : X^{(k)}_{rj} = 1\}}{\#\{j : X^{(k)}_{rj} > 0\}},
\end{equation}
and define the stability-adjusted entropy
\begin{equation}
	S(k)
	=
	H(k)\cdot\bigl(1 - \mathrm{rare}(k)\bigr),
\end{equation}
which downweights entropy contributions arising from unstable (singleton) features.

We evaluate these criteria across multiple datasets $j = 1, \dots, N$. For each dataset, we define
\begin{equation}
	k_{\mathrm{cov},j}
	=
	\arg\max_k \left(-\Delta^2 \mathrm{cov}_j(k)\right),
	\qquad
	k_{\mathrm{info},j}
	=
	\arg\max_k I_j(k),
	\qquad
	k_{\mathrm{stab},j}
	=
	\arg\max_k S_j(k).
\end{equation}

A direct comparison of these criteria does not yield a consistent ordering across datasets. In particular, the stability-adjusted entropy tends to align with the structural elbow, while the information-sparsity score systematically favors smaller values of $k$, capturing earlier informative scales. Consequently, standard consensus approaches based on agreement or rank aggregation may overemphasize structurally stable criteria and fail to reflect this early-selection behavior.

To address this, we introduce a score that evaluates each criterion relative to the coverage-based structural reference. For a given criterion $m \in \{\mathrm{cov}, \mathrm{info}, \mathrm{stab}\}$, we define
\begin{equation}
	S(m)
	=
	\frac{1}{N}\sum_{j=1}^N \bigl(k_{\mathrm{cov},j} - k_{m,j}\bigr)
	-
	\lambda \frac{1}{N}\sum_{j=1}^N \left| k_{\mathrm{cov},j} - k_{m,j} \right|,
\end{equation}
where the first term rewards earlier selections relative to the structural elbow, and the second term penalizes deviations from it. The parameter $\lambda \in (0,1)$ controls the trade-off between these two effects.

Based on the resulting ordering of the criteria induced by $S(m)$, we assign weights reflecting their relative importance across datasets. Specifically, the criteria are ranked according to their scores $S(m)$, and weights are assigned in increasing order of performance. For example, the criterion with the lowest value of $S(m)$ is assigned weight $1$, the next lowest is assigned weight $2$, and so on, so that higher-performing criteria receive larger weights. Formally, if $\operatorname{rank}(m)$ denotes the rank of criterion $m$ (with smaller $S(m)$ corresponding to lower rank), then we set
\[
w_m = \operatorname{rank}(m),
\]
ensuring that the weighting scheme reflects the relative alignment of each criterion with the structural reference across datasets. For a given dataset, the final scale $k^\ast$ is selected via a weighted consensus rule:
\begin{equation}
	k^\ast
	=
	\arg\max_k \Bigl[
	w_{\mathrm{cov}}\,\mathbf{1}\{k = k_{\mathrm{cov}}\}
	+
	w_{\mathrm{info}}\,\mathbf{1}\{k = \arg\max I(k)\}
	+
	w_{\mathrm{stab}}\,\mathbf{1}\{k = \arg\max S(k)\}
	\Bigr],
\end{equation}
where the weights $w_m$ are determined by the ranking induced by $S(m)$, and $\mathbf{1}\{\cdot\}$ denotes the indicator function.

To obtain a global scale across datasets, we extend this rule by aggregating the weighted votes over all datasets. Specifically, we select $k^\ast$ as the maximizer of the total weighted support across datasets and criteria:
\begin{equation}
	k^\ast = \arg\max_k \sum_{j=1}^N \sum_{m \in \{\mathrm{cov},\,\mathrm{info},\,\mathrm{stab}\}} w_m \,\mathbf{1}\{k = k_{m,j}\}.
\end{equation}

\clearpage
\begin{figure}[!t]
	\centering
	\includegraphics[width=0.75\textwidth]{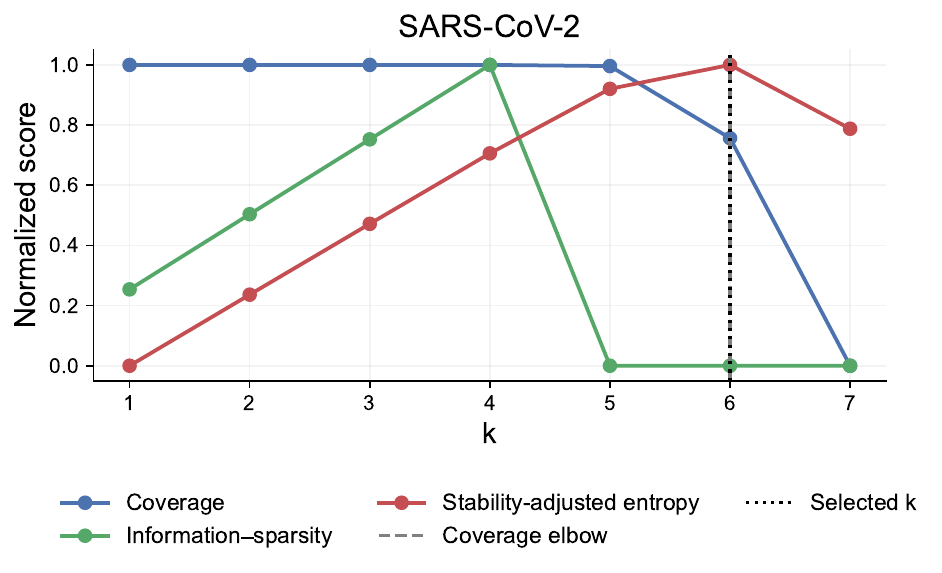}
	\caption{
		Data-driven selection of the $k$-mer scale for the SARS-CoV-2 dataset.
		Normalized coverage, information-sparsity, and stability-adjusted entropy are shown as functions of $k$.
		The dashed vertical line indicates the coverage elbow, while the dotted vertical line denotes the selected scale $k^\ast$. The analysis used $n=44$ SARS-CoV-2 sequences. Each marker represents a normalized score calculated using all sequences at the indicated $k$. No error bars or statistical tests were used. Source data are provided as a Source Data file.}
	\label{fig:k_selection_sars}
\end{figure}

\begin{figure}[!t]
	\centering
	\includegraphics[width=0.75\textwidth]{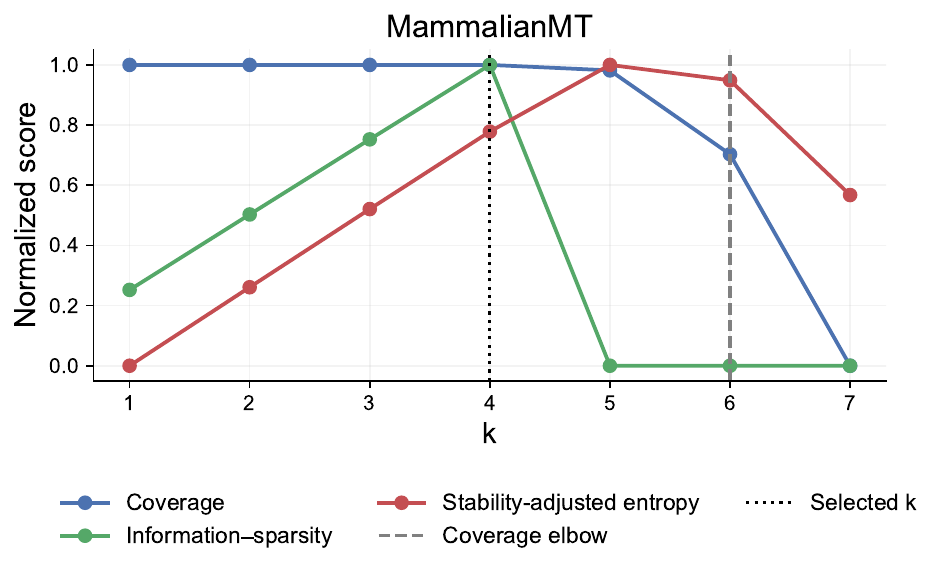}
	\caption{
		Data-driven selection of the $k$-mer scale for the mammalian mitochondrial dataset.
		Normalized coverage, information-sparsity, and stability-adjusted entropy are shown as functions of $k$.
		The dashed vertical line indicates the coverage elbow, while the dotted vertical line denotes the selected scale $k^\ast$. The analysis used $n=41$ mammalian mitochondrial sequences. Each marker represents a normalized score calculated using all sequences at the indicated $k$. No error bars or statistical tests were used. Source data are provided as a Source Data file.}
	\label{fig:k_selection_mammalian}
\end{figure}

\begin{figure}[t!]
	\centering
	\includegraphics[width=0.75\textwidth]{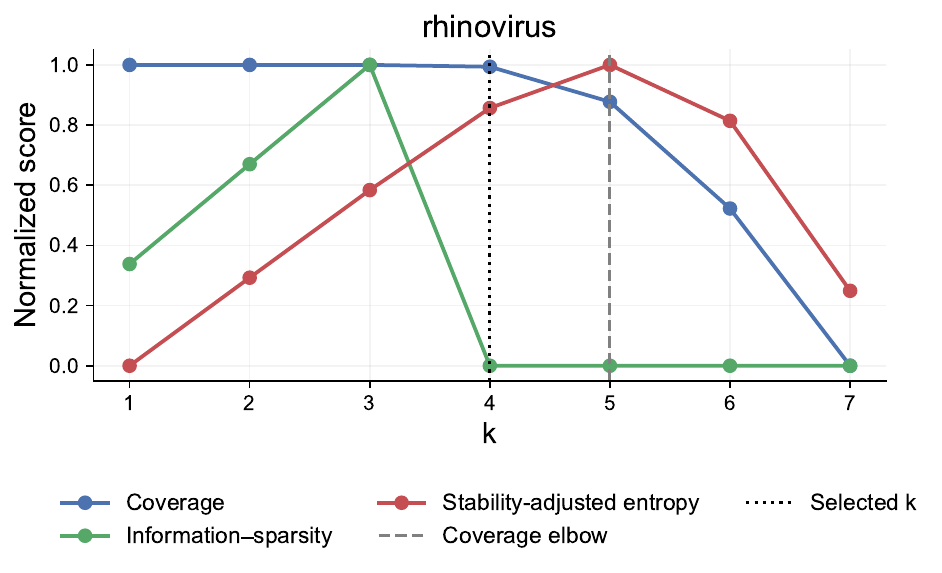}
	\caption{
		Data-driven selection of the $k$-mer scale for the rhinovirus dataset.
		Normalized coverage, information-sparsity, and stability-adjusted entropy are shown as functions of $k$.
		The dashed vertical line indicates the coverage elbow, while the dotted vertical line denotes the selected scale $k^\ast$. The analysis used $n=116$ rhinovirus sequences. Each marker represents a normalized score calculated using all sequences at the indicated $k$. No error bars or statistical tests were used. Source data are provided as a Source Data file.}
	\label{fig:k_selection_rhino}
\end{figure}

\clearpage

\section{Robustness to incomplete and fragmented sequence inputs}

To further examine the practical applicability of CAKR beyond ideal full-length sequences, we evaluated its behavior under two complementary incomplete-sequence settings. First, we assessed clustering robustness under simulated sequence degradation, including truncation, contiguous deletion, and fragmentation, in order to mimic partial recovery and draft-quality assemblies. Second, we considered a fragment placement setting, in which an external partial query sequence was compared against a reference barcode library to determine its closest viral type and homologous gene segment. Together, these experiments provide two distinct perspectives on robustness: stability of clustering under incomplete inputs and the ability to identify the origin of previously unseen sequence fragments in an alignment-free manner.

\subsection{Robustness under simulated incompleteness}
To further evaluate robustness under more realistic sequencing and assembly conditions, we performed an incomplete-sequence experiment in which each benchmark dataset was systematically degraded to mimic partial recovery, missing regions, and draft-quality fragmentation. Starting from the original sequences, we generated modified inputs by varying three factors: the retained fraction of the sequence, the size of a contiguous deleted block, and the number of fragments. Specifically, retained fractions of $1.0$, $0.8$, $0.6$, and $0.4$ were considered; contiguous block deletions of $0.1$ and $0.2$ were applied; and the remaining sequence was split into $1$, $5$, or $10$ fragments. The resulting fragments were then concatenated to produce incomplete assembly-like inputs.

Overall, the results show that the method is largely robust to incomplete-sequence effects for most benchmark datasets. In particular, Ebolavirus, rhinovirus, HEV, and influenza HA gene retain strong clustering purity across a wide range of degradation settings. For Ebolavirus, many conditions achieve perfect purity, indicating near-complete robustness even under substantial incompleteness. Rhinovirus and HEV also remain highly stable, with many settings reaching or approaching purity $1.0$. Influenza HA gene likewise performs strongly overall, although the most severe incomplete settings lead to some reduction in purity.

By contrast, mammalian mitochondrial genomes are more sensitive to degradation. Although the method still yields reasonably strong purity values in many cases, the performance is more variable and consistently lower than that observed for the viral datasets above. This suggests that the method remains applicable, but that robustness is more moderate for more heterogeneous and challenging collections.

A markedly different pattern is observed for SARS-CoV-2. Across nearly all tested conditions and across all examined $k$ values, purity remains low with only minimal variation. This indicates that the difficulty on this dataset is not primarily driven by incompleteness alone, but rather reflects the intrinsically harder discrimination problem posed by these closely related sequences.

A clear global trend is that retention has the strongest effect on performance. In general, the most severe retention level, namely $0.4$, produces the weakest results, while increasing the retained fraction to $0.6$, $0.8$, and $1.0$ typically improves purity substantially. In contrast, increasing the deletion fraction from $0.1$ to $0.2$ usually causes only a mild deterioration, and increasing the number of fragments from $1$ to $5$ or $10$ does not produce a consistent collapse in performance. This suggests that the dominant factor is the total amount of sequence information preserved, rather than fragmentation alone.

The experiment also shows that moderate-to-large $k$ values often provide the best robustness. For several datasets, the highest purity is most frequently achieved at $k=5$, $6$, or $7$. This is especially evident for mammalian mitochondrial genomes, rhinovirus, and Ebolavirus. In contrast, influenza HA gene and HEV exhibit somewhat greater flexibility across different $k$ values, while SARS-CoV-2 shows no meaningful gain from increasing $k$.

Taken together, these results demonstrate that the proposed representation does not rely on perfectly assembled full genomes. For most datasets, strong clustering quality is preserved even after substantial truncation, deletion of contiguous regions, and fragmentation into multiple contigs. This supports the claim that the method is robust to partial or draft-quality genomic inputs. At the same time, the dataset-dependent behavior, especially on SARS-CoV-2, shows that robustness to incompleteness does not eliminate intrinsic classification difficulty in very closely related sequence collections.

Since the current experiment includes one randomized replicate per condition, these findings should be interpreted primarily as strong robustness trends. Additional replicates would further strengthen the statistical support for these observations.

\begin{figure}[h!]
	\centering
	\includegraphics[width=0.65\linewidth]{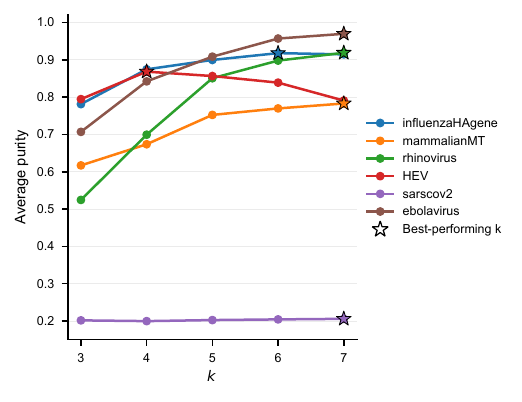}
	\caption{\textbf{Average clustering purity as a function of k under simulated incomplete-sequence conditions.} For each dataset and k, the plotted value is the arithmetic mean over $n=24$ degradation conditions, comprising four retained fractions, two contiguous-deletion fractions and three fragmentation levels. One randomized realization was generated per condition. Dataset sizes were $n=30$ influenza HA sequences, $n=41$ mammalian mitochondrial sequences, $n=116$ sequences for the rhinovirus analysis, $n=48$ HEV sequences, $n=44$ SARS-CoV-2 sequences and $n=59$ Ebolavirus sequences. Stars indicate the best-performing $k$ for each dataset. No error bars or statistical tests were used. Source data are provided as a Source Data file.}
	\label{fig:incomplete_avg_purity_vs_k}
\end{figure}

\subsection{Barcode-based fragment placement of external influenza queries}

Fragment phylogenetic placement seeks to infer the evolutionary origin of a sequence when only a partial gene or genomic fragment is available rather than a complete genome. In such cases, the objective is to determine which known virus and homologous gene segment the fragment is most closely related to. To investigate whether CAKR can perform this task in an alignment-free manner, we designed a proof-of-concept study based on barcode-level comparison. A reference barcode library was constructed from known influenza viral sequences, with one barcode computed for each reference genome segment. The barcode of an external query fragment was then compared against this library to identify its closest viral type and gene segment. As a model system, we used influenza viruses obtained from NIH-NCBI Virus. The reference library included one representative strain from each influenza type and all genome segments available for that type, namely eight segments for influenza A and B and seven segments for influenza C and D. Specifically, the reference set comprised human influenza A(H3N2) and influenza B isolates collected in California in 2020 and 2021, respectively, a human influenza C isolate collected in China in 2020, and a swine influenza D isolate collected in Oklahoma in 2011. All reference segments were annotated as complete coding sequences.

To evaluate the extent to which CAKR can resolve unknown influenza fragments through barcode-based comparison, we computed barcode representations for 14 external influenza query genes within the CAKR framework and compared these against the barcode library established from the genome segments of the reference strains. The correspondence between query and reference barcodes was quantified by cosine similarity, providing a direct measure of whether a previously unseen query could be correctly associated with its influenza type and homologous gene segment without reliance on sequence alignment. The query cohort comprised 11 complete coding sequences and 3 partial coding sequences and was deliberately selected to include variation in collection year, geographic source, and host relative to the reference set. Specifically, influenza A and B queries were drawn from human isolates collected in 2026, influenza C queries from human isolates collected in 2024 and 2025, and influenza D queries from bovine isolates collected in 2024.

As shown in Source Data file, each of the 14 external influenza query gene sequences attained its maximum cosine similarity with the correct viral type and the corresponding homologous gene segment in the reference library. These findings indicate that the CAKR derived barcode representation preserves both type specific and gene segment specific information, even when the query and reference sequences differ with respect to collection year, geographic origin, host background, and coding sequence completeness. This behavior is illustrated by representative cases from all four influenza types. For influenza A, fragmented query PZ251744.1, corresponding to the PB1 gene segment from a human A(H3N2) isolate collected in California in 2026, exhibited its strongest similarity to the influenza A PB1 gene segment reference PX229455.1, derived from a human A(H3N2) isolate collected in California in 2020, with a cosine similarity of 0.967943. For influenza B, query PZ256275.1, representing the NA gene segment from a human isolate collected in New York in 2026 and annotated as a complete coding sequence, was most similar to the influenza B NA gene segment reference OM730287.1, obtained from a human isolate collected in California in 2021, with a cosine similarity of 0.978296. For influenza C, query PV177458.1, corresponding to the P3 gene segment from a human isolate collected in Massachusetts in 2024 and annotated as a complete coding sequence, was correctly assigned to the influenza C P3 gene segment reference OQ733377.1, derived from a human isolate collected in China in 2020, with a cosine similarity of 0.980462. For influenza D, query PX485351.1, representing the PB1 gene segment from a bovine isolate collected in Mexico in 2024 and annotated as a complete coding sequence, exhibited its highest similarity to the influenza D PB1 gene segment reference NC036615.1, derived from a swine isolate collected in Oklahoma in 2011, with a cosine similarity of 0.958346. Particularly in the influenza D case, successful retrieval despite differences in host species, sampling location, and collection year highlights the robustness of the proposed barcode representation for fragment placement style inference. Taken together, these toy demonstration results support the potential of CAKR for alignment free fragment placement for identifying the closest viral type and the corresponding homologous gene segment from a reference library.

\subsection{Comparative performance of CAKR and SEPP on  phylogenetic placement  }

Phylogenetic placement seeks to infer the evolutionary position of a query sequence by inserting it onto a fixed reference alignment and backbone tree. SEPP is a widely used framework for this task. It uses a divide and conquer strategy with an ensemble of hidden Markov models (HMMs), implemented through HMMER, to align fragmentary query sequences to the backbone alignment, and then uses pplacer to insert each aligned fragment into the fixed reference tree. Placement confidence is commonly summarized using the likelihood weight ratio.

To compare CAKR with a standard HMM-based placement approach for fragmentary sequences, we evaluated SEPP \cite{mirarab2012sepp} on the same Type-B influenza gene segment 1 (PB1) setting used for CAKR. The backbone reference set consisted of 15 Type-B PB1 genes, mainly from 2019 to 2021. Fifteen fragmentary queries were generated from five randomly selected 2026 sequences by truncating each sequence into 300 bp, 600 bp, and 900 bp fragments. The reference sequences, aligned by MAFFT, were used to construct the backbone multiple sequence alignment and backbone tree, and SEPP was then run on the fragment queries using this fixed reference structure.

The results, summarized in Source Data file (CAKR vs SEPP sheet), show substantial agreement between the two frameworks at the level of ranked candidate placements. Exact Top 1 agreement was observed for 6 of 15 fragmentary queries. In addition, 11 of 15 fragments showed overlap between the top three CAKR candidates and the top two SEPP placements listed in the table, indicating that the two methods frequently identify related leading placement candidates even when their ranking order differs. Because the two methods use different scoring systems, cosine similarity for CAKR and likelihood weight ratio for SEPP, the raw score magnitudes are not directly comparable. The more meaningful comparison lies in the agreement of their ranked placements and in how confidently each framework separates its leading candidates.

CAKR also showed a clear dependence on fragment length. Its Top 1 cosine similarity increased from an average of approximately 0.5335 for 300 bp fragments to 0.6855 for 600 bp fragments and 0.7878 for 900 bp fragments, indicating that longer fragments carry stronger alignment free signal for barcode based comparison. Across all 15 fragments, CAKR Top 1 cosine scores ranged from 0.5084 to 0.7982. Moreover, the relative difference between the first and second CAKR candidates remained consistently small, ranging from about 0.01\% to 0.76\% in the displayed table. By contrast, the corresponding SEPP T1 to T2 difference ranged from 0.0004\% to 96.7054\%, showing that SEPP sometimes distributed support across several nearby placements and sometimes strongly favored a single placement. Taken together, these results suggest that CAKR provides stable and competitive placement behavior relative to a standard HMM-based framework while retaining the practical advantage of being fully alignment free.

\clearpage

\section{Sensitivity to sequence composition biases}

To assess whether the observed performance is driven primarily by simple compositional biases rather than higher-order sequence organization, we performed a bias-sensitivity control experiment using several targeted perturbations. For each dataset, we generated modified versions of the original sequences by (i) masking low-complexity regions, (ii) masking repeat-dominated regions, (iii) normalizing GC content, (iv) applying mononucleotide shuffling, and (v) applying dinucleotide-preserving shuffling. These perturbations were designed to test the influence of low-complexity tracts, repetitive regions, global compositional bias, and short-range local dependencies on clustering performance.

For clarity, the perturbation tests may be interpreted as follows:

\begin{itemize}
	\item \textbf{Original.}
	This is the baseline. Compare every other test to this.
	If a modified version stays close to the original, that factor is likely not the main source of performance.
	If it drops strongly, that factor is important.

	\item \textbf{Low-complexity masked.}
	Read it by comparing its purity to the original.
	If it stays nearly the same, low-complexity regions are not driving the method.
	If it drops clearly, the method is using low-information sequence stretches.
	Usually, one expects only a small change unless those regions carry important signal.

	\item \textbf{Repeat-masked.}
	Compare with the original.
	If performance remains similar, repeats are not the main reason for success.
	If performance drops, repeat-rich or highly homogeneous regions contribute useful signal.
	Usually, a small or moderate drop means repeats help somewhat but are not dominant.

	\item \textbf{Mononucleotide shuffle.}
	This is a strong test. Since base counts are preserved but order is destroyed, a strong drop means the method depends on sequence arrangement, not only composition.
	If performance stays high, the method may be relying mainly on simple nucleotide frequencies.
	Usually, one expects a clear drop if the method captures real sequence structure.

	\item \textbf{Dinucleotide shuffle.}
	Compare it both to the original and to mononucleotide shuffle.
	If it drops strongly, the method uses information beyond short-range dinucleotide structure.
	If it stays much better than mononucleotide shuffle, then local adjacency information explains part of the signal.
	Usually, one expects performance to lie between the original and mononucleotide shuffle.

	\item \textbf{GC-normalized.}
	Compare with the original.
	If performance stays similar, GC bias is not the main driver.
	If it drops clearly, GC-content differences were contributing strongly to discrimination.
	Usually, a mild drop suggests GC matters somewhat, but is not the full explanation.
\end{itemize}

Overall, the results reveal a consistent pattern across datasets. Masking low-complexity regions has little or no effect in most cases, and repeat masking usually causes only limited or moderate degradation. For example, influenza HA gene, SARS-CoV-2, and Ebolavirus remain essentially unchanged after low-complexity masking, while rhinovirus also retains very strong performance. Repeat masking likewise leaves influenza HA gene and Ebolavirus nearly unchanged, and preserves strong performance in rhinovirus and SARS-CoV-2, although somewhat larger reductions are observed for HEV and mammalian mitochondrial genomes. These findings suggest that low-complexity and repeat-rich regions are generally not the dominant source of discriminatory signal.

GC normalization also preserves much of the original performance in most datasets. In particular, influenza HA gene, HEV, mammalian mitochondrial genomes, rhinovirus, and Ebolavirus remain relatively strong after GC normalization, especially at moderate or larger values of $k$. This indicates that global GC-content differences alone do not explain the observed clustering quality, although compositional effects may still contribute to some extent in certain datasets.

In contrast, mononucleotide shuffling produces a substantial decline in purity for most datasets, including influenza HA gene, HEV, rhinovirus, SARS-CoV-2, and Ebolavirus, and also reduces performance in mammalian mitochondrial genomes. Since mononucleotide shuffling preserves overall nucleotide composition while destroying positional order, this result strongly indicates that the method depends on sequence arrangement rather than merely on base frequencies. Dinucleotide-preserving shuffling yields a more nuanced but still informative pattern. In influenza HA gene, HEV, rhinovirus, and Ebolavirus, it also causes a clear decrease in performance, particularly at larger values of $k$, suggesting that the method captures information beyond simple dinucleotide statistics. By contrast, mammalian mitochondrial genomes are more stable under dinucleotide shuffling, indicating that lower-order local dependencies may explain a larger fraction of its signal.

Taken together, these results show that the proposed representation is generally not driven by low-complexity regions, simple repeats, or global GC bias. Instead, the strong degradation under mononucleotide and, in several datasets, dinucleotide shuffling indicates that the method relies predominantly on preserved sequence order and higher-order structural organization. Thus, the perturbation analysis supports the interpretation that the method captures biologically meaningful sequence structure rather than merely exploiting simple compositional biases.

\clearpage

\begin{figure}[t]
	\centering
	\includegraphics[width=\linewidth]{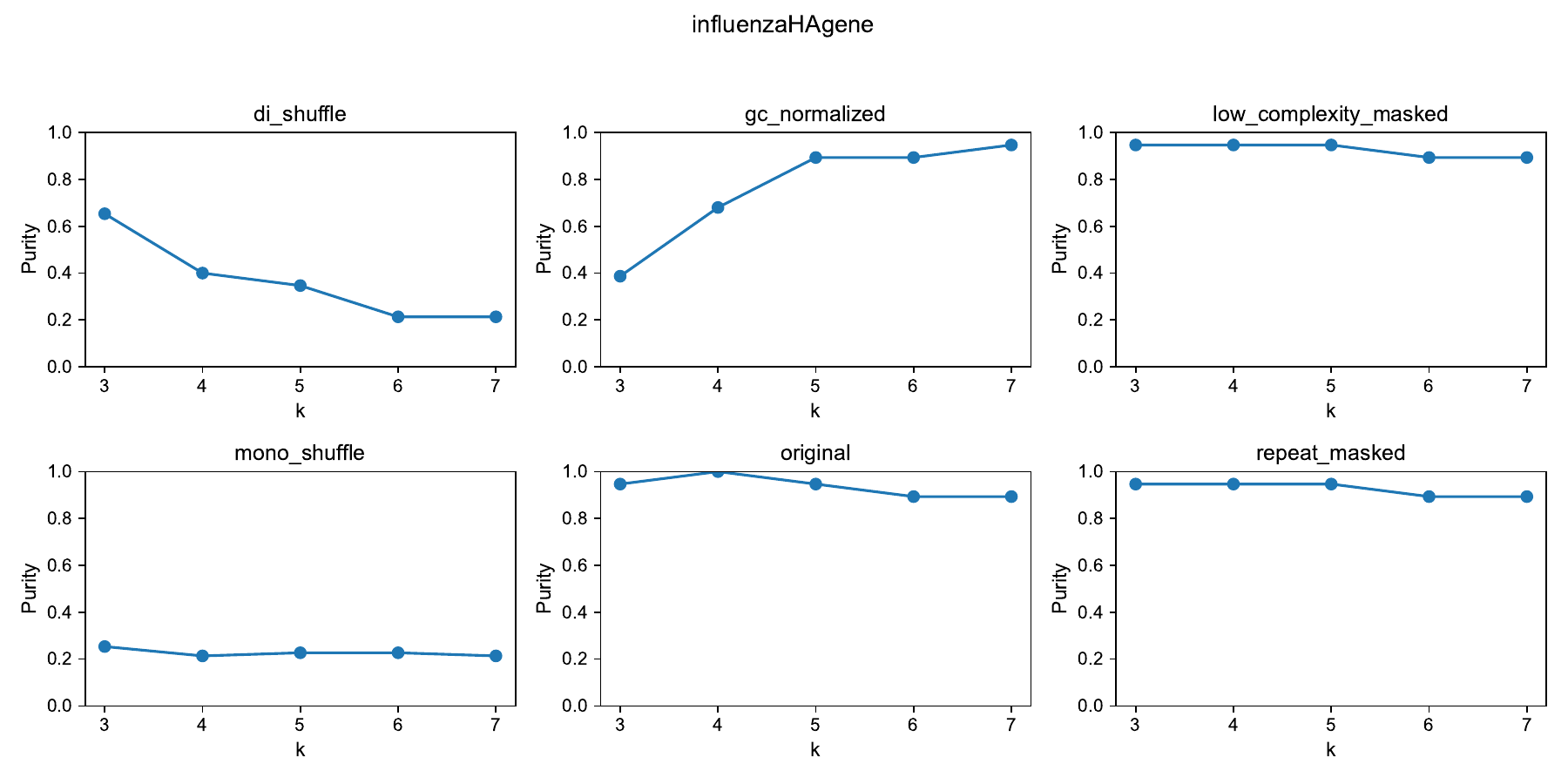}
	\caption{\textbf{Clustering-purity curves for the influenza HA gene dataset under sequence-bias perturbation tests.} The panels compare the average purity across $k$ for the original sequences and the perturbed variants, including low-complexity masking, repeat masking, GC normalization, mononucleotide shuffling, and dinucleotide-preserving shuffling. These results illustrate how sequence-bias controls affect the performance of the method on the influenza HA gene dataset. Source data are provided as a Source Data file.}
	\label{fig:grid_curves_influenzaHAgene}
\end{figure}

\begin{figure}[t]
	\centering
	\includegraphics[width=\linewidth]{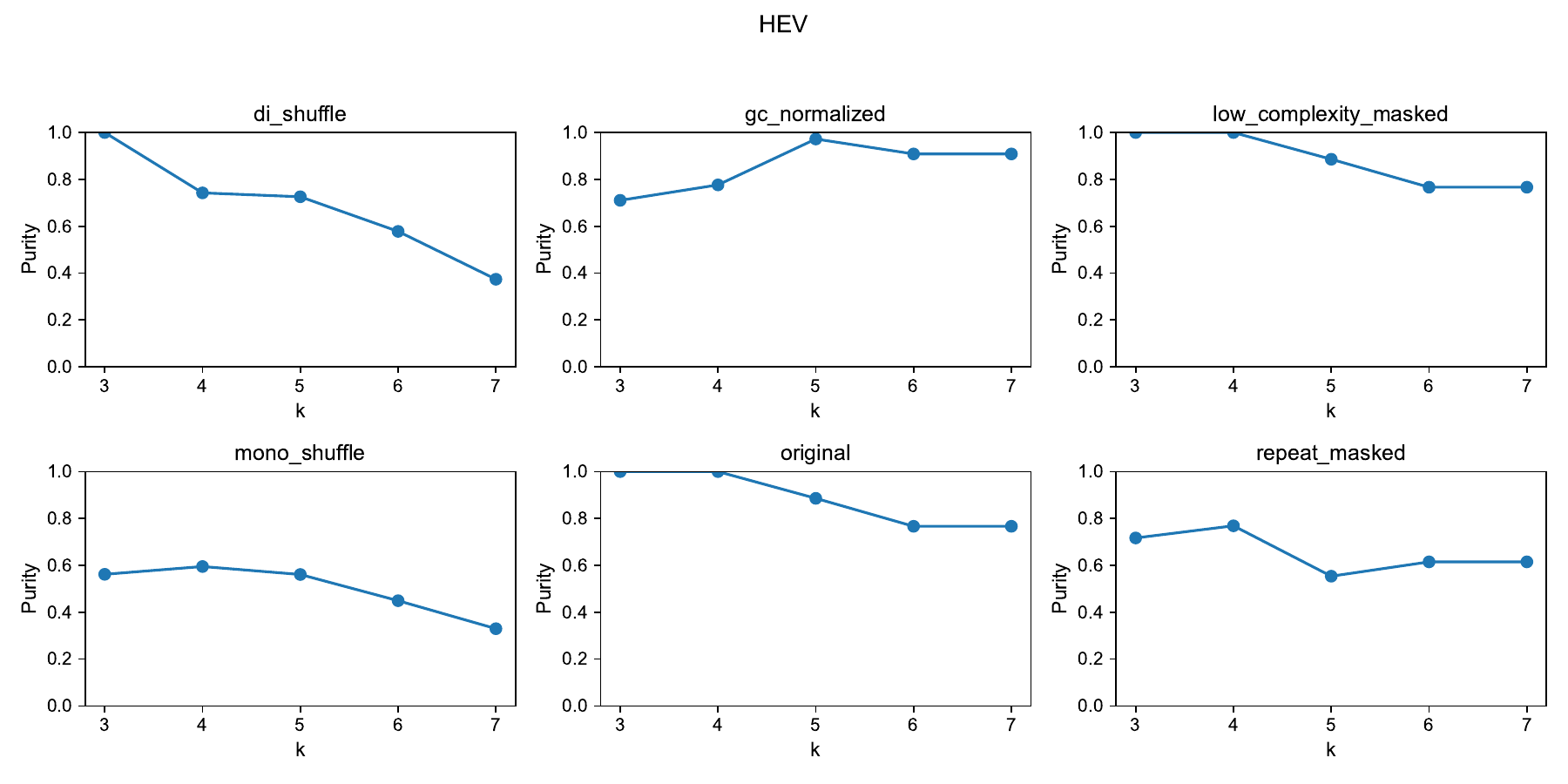}
	\caption{\textbf{Clustering-purity curves for the HEV dataset under sequence-bias perturbation tests.} The panels compare the average purity across $k$ for the original sequences and the perturbed variants, including low-complexity masking, repeat masking, GC normalization, mononucleotide shuffling, and dinucleotide-preserving shuffling. These results illustrate how sequence-bias controls affect the performance of the method on the HEV dataset. Source data are provided as a Source Data file.}
	\label{fig:grid_curves_HEV}
\end{figure}

\begin{figure}[t]
	\centering
	\includegraphics[width=\linewidth]{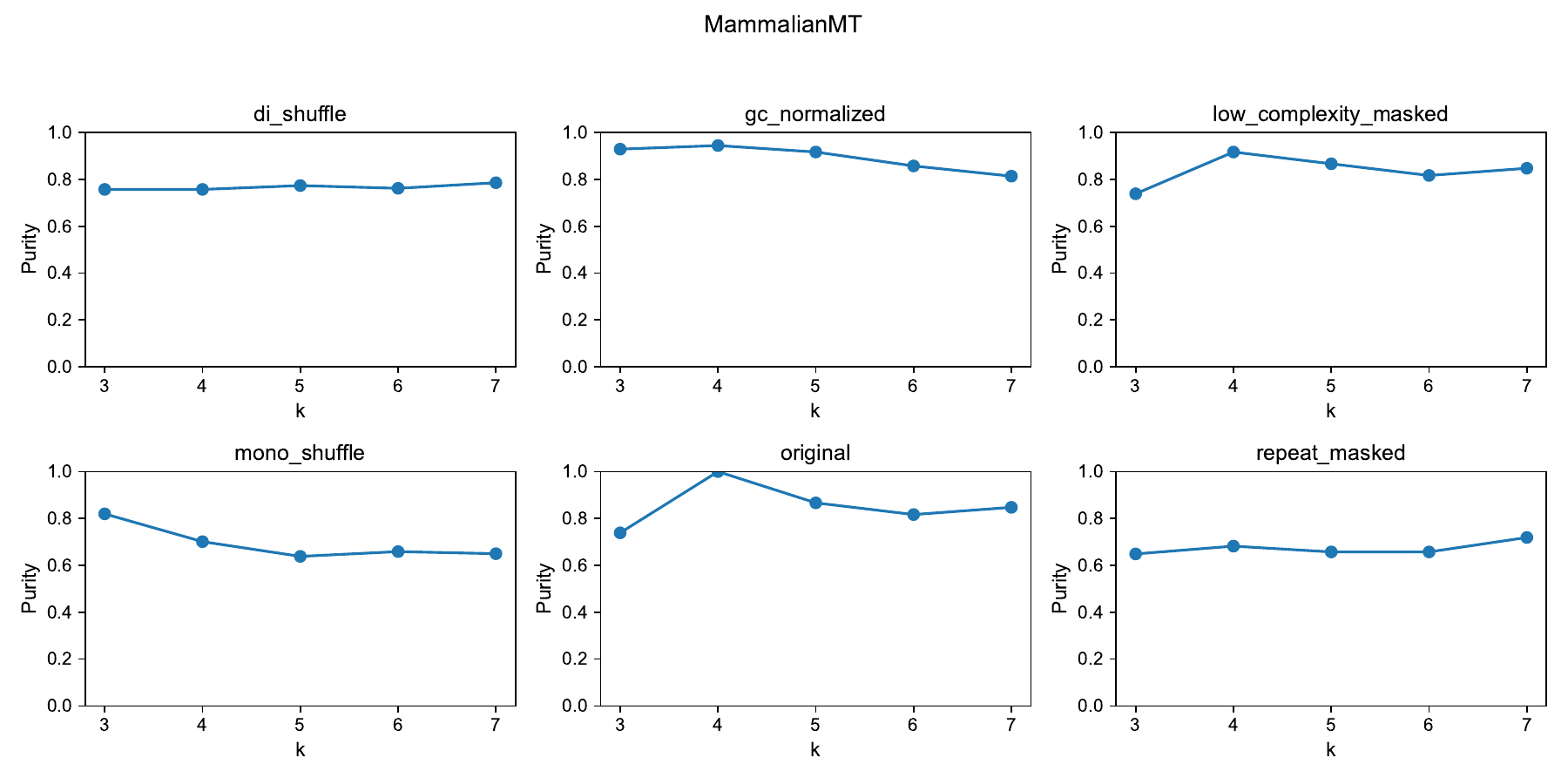}
	\caption{\textbf{Clustering-purity curves for the mammalian mitochondrial genome dataset under sequence-bias perturbation tests.} The panels compare the average purity across $k$ for the original sequences and the perturbed variants, including low-complexity masking, repeat masking, GC normalization, mononucleotide shuffling, and dinucleotide-preserving shuffling. These results illustrate how sequence-bias controls affect the performance of the method on the mammalian mitochondrial genome dataset. Source data are provided as a Source Data file.}
	\label{fig:grid_curves_mammalianMT}
\end{figure}

\begin{figure}[t]
	\centering
	\includegraphics[width=\linewidth]{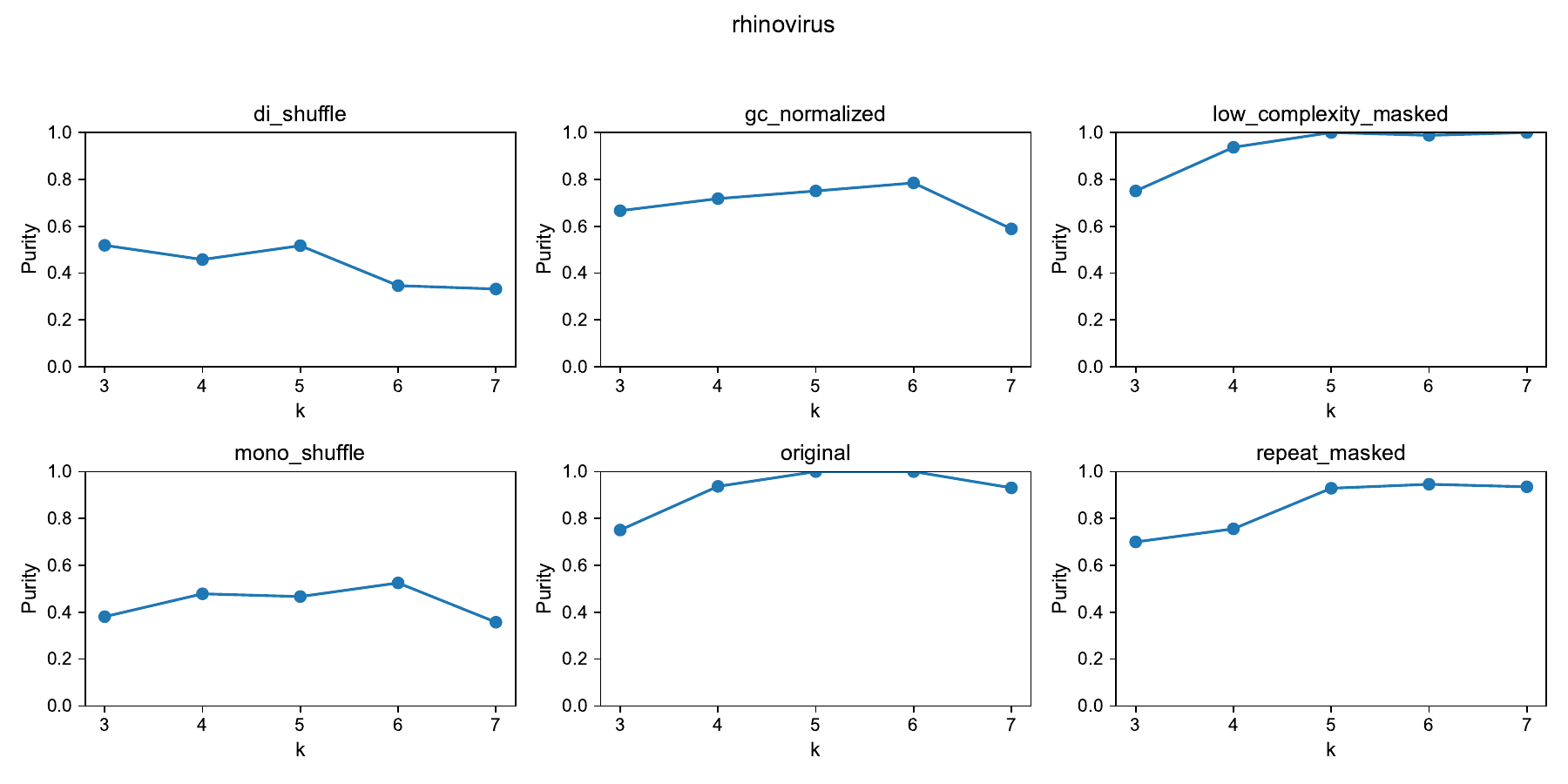}
	\caption{\textbf{Clustering-purity curves for the rhinovirus dataset under sequence-bias perturbation tests.} The panels compare the average purity across $k$ for the original sequences and the perturbed variants, including low-complexity masking, repeat masking, GC normalization, mononucleotide shuffling, and dinucleotide-preserving shuffling. These results illustrate how sequence-bias controls affect the performance of the method on the rhinovirus dataset. Source data are provided as a Source Data file.}
	\label{fig:grid_curves_rhinovirus}
\end{figure}

\begin{figure}[t]
	\centering
	\includegraphics[width=\linewidth]{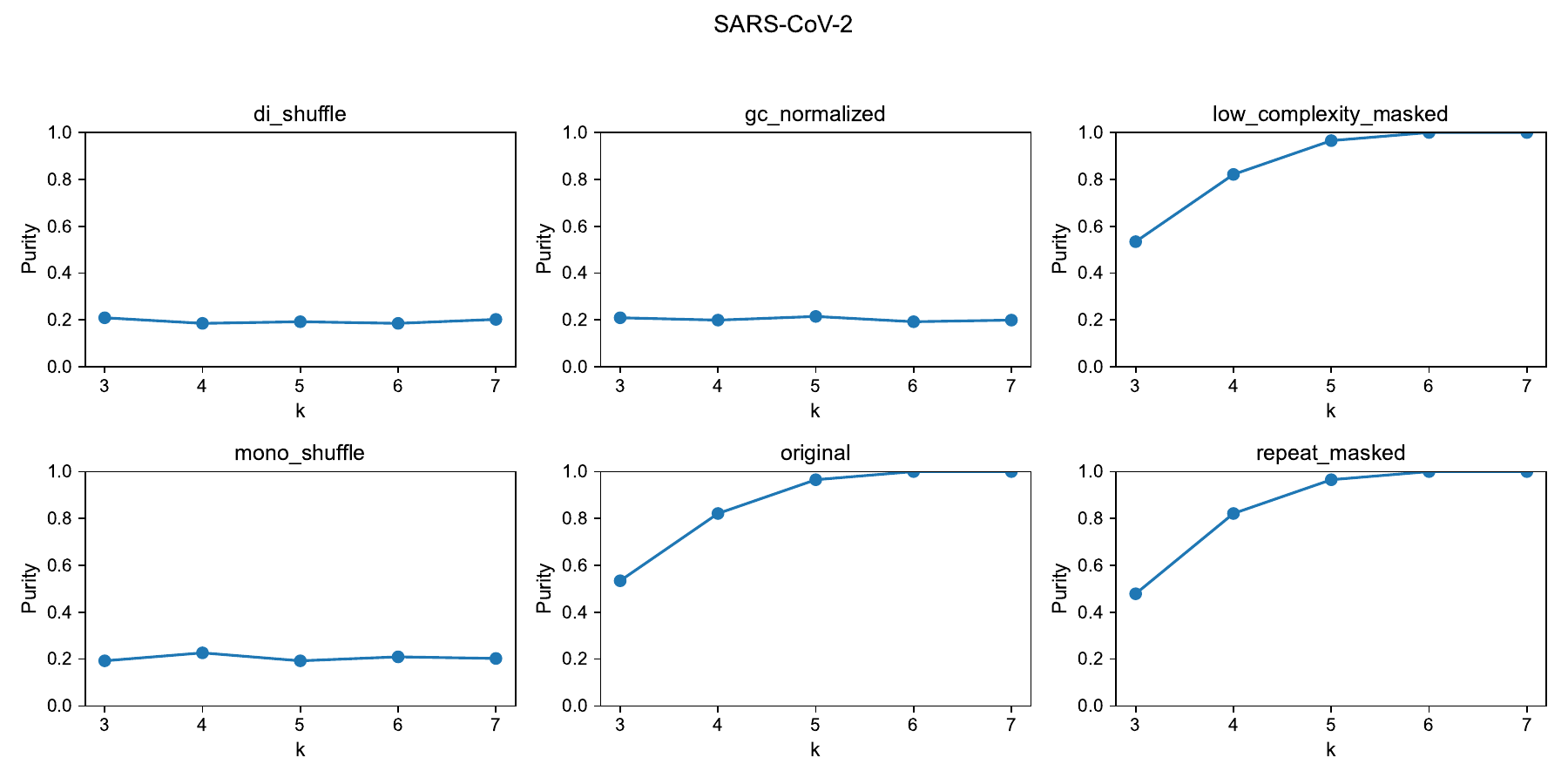}
	\caption{\textbf{Clustering-purity curves for the SARS-CoV-2 dataset under sequence-bias perturbation tests.} The panels compare the average purity across $k$ for the original sequences and the perturbed variants, including low-complexity masking, repeat masking, GC normalization, mononucleotide shuffling, and dinucleotide-preserving shuffling. These results illustrate how sequence-bias controls affect the performance of the method on the SARS-CoV-2 dataset. Source data are provided as a Source Data file.}
	\label{fig:grid_curves_sarscov2}
\end{figure}

\begin{figure}[t]
	\centering
	\includegraphics[width=\linewidth]{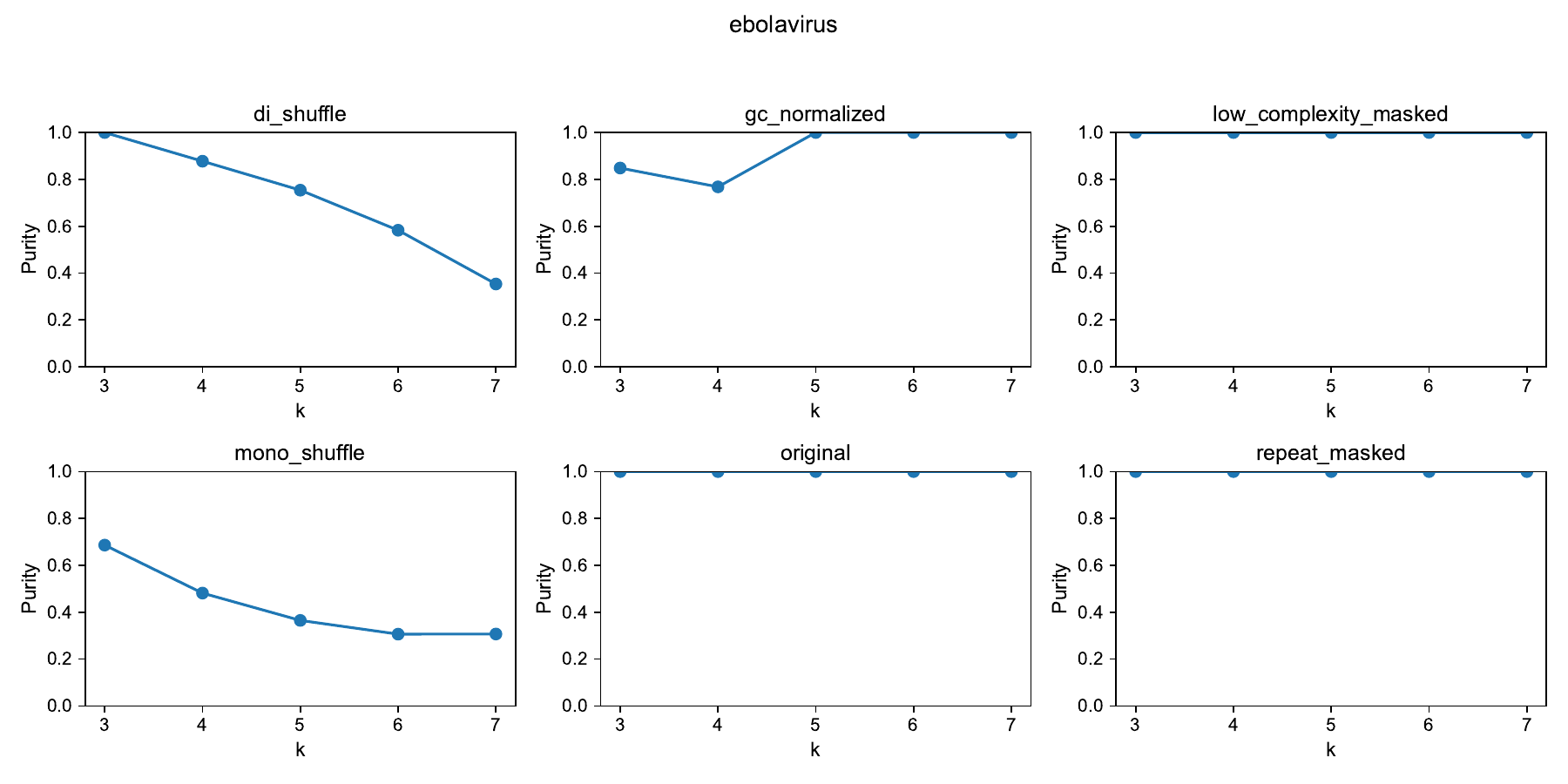}
	\caption{\textbf{Clustering-purity curves for the Ebolavirus dataset under sequence-bias perturbation tests.} The panels compare the average purity across $k$ for the original sequences and the perturbed variants, including low-complexity masking, repeat masking, GC normalization, mononucleotide shuffling, and dinucleotide-preserving shuffling. These results illustrate how sequence-bias controls affect the performance of the method on the Ebolavirus dataset. Source data are provided as a Source Data file.}
	\label{fig:grid_curves_ebolavirus}
\end{figure}

\clearpage

\clearpage
\begin{table}[ht]
	\centering
	\caption{SARS-CoV-2 genome accessions and corresponding variants.}
	\label{tab:sars_variants}
	\begin{tabular}{llllll}
		\toprule
		Accession & Variant & Accession & Variant & Accession & Variant \\
		\midrule
		EPI\_ISL\_10265849 & Alpha & EPI\_ISL\_10451166 & Alpha & EPI\_ISL\_10898816 & Alpha \\
		EPI\_ISL\_10935285 & Alpha & EPI\_ISL\_10953997 & Alpha & EPI\_ISL\_11230282 & Alpha \\

		EPI\_ISL\_10113812 & Beta  & EPI\_ISL\_10932695 & Beta  & EPI\_ISL\_6963122  & Beta \\
		EPI\_ISL\_8482131  & Beta  & EPI\_ISL\_8582087  & Beta  & EPI\_ISL\_10999595 & Delta \\

		EPI\_ISL\_11254301 & Delta & EPI\_ISL\_11320247 & Delta & EPI\_ISL\_11533460 & Delta \\
		EPI\_ISL\_8082681  & Delta & EPI\_ISL\_8582067  & Delta & EPI\_ISL\_10515204 & Gamma \\

		EPI\_ISL\_1708575  & Gamma & EPI\_ISL\_4713792  & Gamma & EPI\_ISL\_7961373  & Gamma \\
		EPI\_ISL\_9547950  & Gamma & EPI\_ISL\_9800275  & Gamma & EPI\_ISL\_6316493  & GH/490R \\

		EPI\_ISL\_7547548  & GH/490R & EPI\_ISL\_8189639 & GH/490R & EPI\_ISL\_9028625 & GH/490R \\
		EPI\_ISL\_9077869  & GH/490R & EPI\_ISL\_2097767 & Lambda & EPI\_ISL\_2862939 & Lambda \\

		EPI\_ISL\_2893886  & Lambda & EPI\_ISL\_3397033 & Lambda & EPI\_ISL\_2651120 & Mu \\
		EPI\_ISL\_2933815  & Mu & EPI\_ISL\_6507611 & Mu & EPI\_ISL\_6783400 & Mu \\
		EPI\_ISL\_6990137  & Mu & EPI\_ISL\_8788228 & Mu & EPI\_ISL\_11221245 & Omicron \\

		EPI\_ISL\_11222860 & Omicron & EPI\_ISL\_11268271 & Omicron & EPI\_ISL\_11370100 & Omicron \\
		EPI\_ISL\_11449733 & Omicron & EPI\_ISL\_11531492 & Omicron & & \\
		\bottomrule
	\end{tabular}
\end{table}

\begin{table}[ht]
	\centering
	\caption{GenBank accession numbers of mammalian mitochondrial genomes and their corresponding taxonomic orders used for phylogenetic reconstruction.}
	\label{tab:mammal_mito_accession_order}
	\begin{tabular}{llll}
		\toprule
		Accession (version) & Order & Accession (version) & Order \\
		\midrule
		V00662.1     & Primates        & D38116.1     & Primates \\
		D38113.1     & Primates        & D38114.1     & Primates \\
		X99256.1     & Primates        & Y18001.1     & Primates \\
		AY863426.1   & Primates        & D38115.1     & Primates \\
		NC\_002083.1 & Primates        & NC\_002764.1 & Primates \\
		U20753.1     & Carnivore       & U96639.2     & Carnivore \\
		EU442884.2   & Carnivore       & EF551003.1   & Carnivore \\
		EF551002.1   & Carnivore       & DQ402478.1   & Carnivore \\
		AF303110.1   & Carnivore       & AF303111.1   & Carnivore \\
		EF212882.1   & Carnivore       & AJ002189.1   & Artiodactyla \\
		AF010406.1   & Artiodactyla    & AF533441.1   & Artiodactyla \\
		V00654.1     & Artiodactyla    & AY488491.1   & Artiodactyla \\
		NC\_007441.1 & Artiodactyla    & NC\_008830.1 & Artiodactyla \\
		NC\_010640.1 & Artiodactyla    & X72204.1     & Cetacea \\
		NC\_005268.1 & Cetacea         & NC\_001321.1 & Cetacea \\
		NC\_005270.1 & Cetacea         & NC\_005275.1 & Cetacea \\
		NC\_006931.1 & Cetacea         & NC\_001788.1 & Perissodactyla \\
		X97336.1     & Perissodactyla  & Y07726.1     & Perissodactyla \\
		NC\_001640.1 & Perissodactyla  & AJ238588.1   & Rodentia \\
		AJ001562.1   & Rodentia        & AJ001588.1   & Lagomorpha \\
		X88898.2     & Erinaceomorpha  &              &  \\
		\bottomrule
	\end{tabular}
\end{table}

\begin{table}[ht]
	\centering
	\caption{Hepatitis E virus (HEV) genome accessions and corresponding genotypes.}
	\label{tab:hev_accessions}
	\begin{tabular}{llllll}
		\toprule
		Accession & Genotype & Accession & Genotype & Accession & Genotype \\
		\midrule
		M73218.1 & G1 & D10330.1 & G1 & X99441.1 & G1 \\
		AF076239.3 & G1 & AF051830.1 & G1 & AF185822.1 & G1 \\
		AF459438.1 & G1 & D11092.1 & G1 & L25595.1 & G1 \\
		L08816.1 & G1 & D11093.1 & G1 & M94177.1 & G1 \\
		M80581.1 & G1 & X98292.1 & G1 & AY230202.1 & G1 \\
		AY204877.1 & G1 & M74506.1 & G2 & AB089824.1 & G3 \\
		AB074918.2 & G3 & AB074920.3 & G3 & AF082843.1 & G3 \\
		AF060668.1 & G3 & AF060669.1 & G3 & AY115488.1 & G3 \\
		AB189070.1 & G3 & AB189071.1 & G3 & AB091394.1 & G3 \\
		AB189072.1 & G3 & AP003430.1 & G3 & AB189073.1 & G3 \\
		AB189074.1 & G3 & AB189075.1 & G3 & AB073912.1 & G3 \\
		AF455784.1 & G3 & AB097812.1 & G4 & AB099347.1 & G4 \\
		AB080575.1 & G4 & AB074915.3 & G4 & AB074917.3 & G4 \\
		AB161717.1 & G4 & AB091395.1 & G4 & AB200239.1 & G4 \\
		AB161718.1 & G4 & AB161719.1 & G4 & AB097811.1 & G4 \\
		AY594199.1 & G4 & AJ272108.1 & G4 & AB108537.1 & G4 \\
		\bottomrule
	\end{tabular}
\end{table}

\begin{table}[ht]
	\centering
	\caption{GenBank accession numbers of Human rhinovirus (HRV-A, HRV-B, HRV-C) genomes with Human enterovirus (HEV) sequences included as an outgroup for phylogenetic tree reconstruction.}
	\label{tab:hrv_hev_groups}
	\begin{tabular}{llllll}
		\toprule
		Accession & Group & Accession & Group & Accession & Group \\
		\midrule
		AF499637.1 & HEV & AF546702.1 & HEV & V01149.1   & HEV \\
		AY751783.1 & A   & DQ473491.1 & A   & EF173414.1 & A   \\
		DQ473492.1 & A   & DQ473493.1 & A   & DQ473494.1 & A   \\
		DQ473496.1 & A   & DQ473497.1 & A   & DQ473499.1 & A   \\
		DQ473500.1 & A   & DQ473504.1 & A   & DQ473505.1 & A   \\
		DQ473506.1 & A   & DQ473507.1 & A   & DQ473508.1 & A   \\
		DQ473510.1 & A   & DQ473511.1 & A   & EF173415.1 & A   \\
		FJ445111.1 & A   & FJ445113.1 & A   & FJ445114.1 & A   \\
		FJ445115.1 & A   & FJ445116.1 & A   & FJ445117.1 & A   \\
		FJ445118.1 & A   & FJ445119.1 & A   & FJ445121.1 & A   \\
		FJ445122.1 & A   & FJ445123.1 & A   & FJ445125.1 & A   \\
		FJ445126.1 & A   & FJ445127.1 & A   & FJ445128.1 & A   \\
		FJ445129.1 & A   & FJ445131.1 & A   & FJ445132.1 & A   \\
		FJ445133.1 & A   & FJ445134.1 & A   & FJ445135.1 & A   \\
		FJ445136.1 & A   & FJ445138.1 & A   & FJ445139.1 & A   \\
		FJ445140.1 & A   & FJ445141.1 & A   & FJ445142.1 & A   \\
		FJ445143.1 & A   & FJ445144.1 & A   & FJ445145.1 & A   \\
		FJ445146.1 & A   & FJ445147.1 & A   & FJ445148.1 & A   \\
		FJ445149.1 & A   & FJ445152.1 & A   & FJ445154.1 & A   \\
		FJ445156.1 & A   & FJ445157.1 & A   & FJ445158.1 & A   \\
		FJ445160.1 & A   & FJ445163.1 & A   & FJ445165.1 & A   \\
		FJ445166.1 & A   & FJ445167.1 & A   & FJ445170.1 & A   \\
		FJ445171.1 & A   & FJ445173.1 & A   & FJ445175.1 & A   \\
		FJ445176.1 & A   & FJ445177.1 & A   & FJ445178.1 & A   \\
		FJ445179.1 & A   & FJ445180.1 & A   & FJ445181.1 & A   \\
		FJ445182.1 & A   & FJ445183.1 & A   & FJ445184.1 & A   \\
		FJ445185.1 & A   & FJ445189.1 & A   & FJ445190.1 & A   \\
		L24917.1   & A   & X02316.1   & A   &             &     \\
		DQ473485.1 & B   & DQ473486.1 & B   & DQ473488.1 & B   \\
		DQ473489.1 & B   & DQ473490.1 & B   & EF173420.1 & B   \\
		EF173423.1 & B   & EF173425.1 & B   & FJ445112.1 & B   \\
		FJ445124.1 & B   & FJ445130.1 & B   & FJ445137.1 & B   \\
		FJ445151.1 & B   & FJ445153.1 & B   & FJ445155.1 & B   \\
		FJ445161.1 & B   & FJ445162.1 & B   & FJ445164.1 & B   \\
		FJ445168.1 & B   & FJ445169.1 & B   & FJ445172.1 & B   \\
		FJ445174.1 & B   & FJ445186.1 & B   & FJ445187.1 & B   \\
		FJ445188.1 & B   & L05355.1   & B   &             &     \\
		EF077279.1 & C   & EF077280.1 & C   & EF186077.2 & C   \\
		EF582385.1 & C   & EF582386.1 & C   & EF582387.1 & C   \\
		\bottomrule
	\end{tabular}
\end{table}

\begin{table}[ht]
	\centering
	\caption{Hemagglutinin (HA) gene accessions of Influenza A viruses and their corresponding subtype labels.}
	\label{tab:influenza_accessions}
	\begin{tabular}{llllll}
		\toprule
		Accession & Subtype & Accession & Subtype & Accession & Subtype \\
		\midrule
		KM409247.1 & H1N1 & CY249795.1 & H1N1 & JN809163.1 & H1N1 \\
		MK557230.1 & H1N1 & MG982609.1 & H1N1 & L11126.1   & H2N2 \\
		KP098400.1 & H2N2 & CY117027.1 & H2N2 & KM885174.1 & H2N2 \\
		CY020549.1 & H2N2 & CY105638.1 & H3N2 & KT839427.1 & H3N2 \\
		KC882695.1 & H3N2 & CY163712.1 & H3N2 & CY067977.1 & H3N2 \\
		KY171504.1 & H5N1 & CY126184.1 & H5N1 & FJ492879.1 & H5N1 \\
		KM821622.1 & H5N1 & JN807989.1 & H5N1 & EU030984.1 & H7N3 \\
		CY203926.1 & H7N3 & CY133177.1 & H7N3 & KU289927.1 & H7N3 \\
		GU051805.1 & H7N3 & CY235363.1 & H7N9 & CY192939.1 & H7N9 \\
		CY192179.1 & H7N9 & CY191691.1 & H7N9 & MH266623.1 & H7N9 \\
		\bottomrule
	\end{tabular}
\end{table}

\begin{table}[ht]
	\centering
	\caption{Genome accessions and corresponding species designations of Ebolavirus strains used in the phylogenetic tree reconstruction.}
	\label{tab:ebov_accessions}
	\begin{tabular}{llll}
		\toprule
		Accession & Group & Accession & Group \\
		\midrule
		FJ217161.1 & BDBV & KC545393.1 & BDBV \\
		KC545395.1 & BDBV & KC545394.1 & BDBV \\
		KC545396.1 & BDBV & FJ217162.1 & TAFV \\
		AF522874.1 & RESTV & AB050936.1 & RESTV \\
		JX477166.1 & RESTV & FJ621585.1 & RESTV \\
		FJ621583.1 & RESTV & JX477165.1 & RESTV \\
		FJ968794.1 & SUDV & KC242783.2 & SUDV \\
		EU338380.1 & SUDV & AY729654.1 & SUDV \\
		JN638998.1 & SUDV & KC545389.1 & SUDV \\
		KC545390.1 & SUDV & KC545391.1 & SUDV \\
		KC545392.1 & SUDV & KC589025.1 & SUDV \\
		KC242801.1 & EBOV & NC\_002549.1 & EBOV \\
		KC242791.1 & EBOV & KC242792.1 & EBOV \\
		KC242793.1 & EBOV & KC242794.1 & EBOV \\
		AY354458.1 & EBOV & KC242796.1 & EBOV \\
		KC242799.1 & EBOV & KC242784.1 & EBOV \\
		KC242786.1 & EBOV & KC242787.1 & EBOV \\
		KC242789.1 & EBOV & KC242785.1 & EBOV \\
		KC242790.1 & EBOV & KC242788.1 & EBOV \\
		KC242800.1 & EBOV & KM034555.1 & EBOV \\
		KM034562.1 & EBOV & KM233039.1 & EBOV \\
		KM034557.1 & EBOV & KM034560.1 & EBOV \\
		KM233050.1 & EBOV & KM233053.1 & EBOV \\
		KM233057.1 & EBOV & KM233063.1 & EBOV \\
		KM233072.1 & EBOV & KM233110.1 & EBOV \\
		KM233070.1 & EBOV & KM233099.1 & EBOV \\
		KM233097.1 & EBOV & KM233109.1 & EBOV \\
		KM233096.1 & EBOV & KM233103.1 & EBOV \\
		KJ660346.2 & EBOV & KJ660347.2 & EBOV \\
		KJ660348.2 & EBOV &  &  \\
		\bottomrule
	\end{tabular}
\end{table}

\begin{table}[ht]
	\centering
	\caption{Bacterial genome accessions and corresponding family-level taxonomic assignments.}
	\label{tab:accessions}
	\begin{tabular}{llll}
		\toprule
		Accession & Family & Accession & Family \\
		\midrule
		CP001598.1 & Bacillaceae & AE016879.1 & Bacillaceae \\
		CP001215.1 & Bacillaceae & AE017225.1 & Bacillaceae \\
		CP000976.1 & Borreliaceae & CP000048.1 & Borreliaceae \\
		CP000993.1 & Borreliaceae & CP000049.1 & Borreliaceae \\
		CP000246.1 & Clostridiaceae & CP000312.1 & Clostridiaceae \\
		BA000016.3 & Clostridiaceae & CP000527.1 & Desulfovibrionaceae \\
		AE017285.1 & Desulfovibrionaceae & CP002297.1 & Desulfovibrionaceae \\
		AM260480.1 & Burkholderiaceae & CP000091.1 & Burkholderiaceae \\
		CP000578.1 & Rhodobacteraceae & CP001151.1 & Rhodobacteraceae \\
		AM295250.1 & Staphylococcaceae & AE015929.1 & Staphylococcaceae \\
		AP006716.1 & Staphylococcaceae & CP001837.1 & Staphylococcaceae \\
		AL590842.1 & Yersiniaceae & CP001585.1 & Yersiniaceae \\
		AE009952.1 & Yersiniaceae & CP001593.1 & Yersiniaceae \\
		CP001671.1 & Enterobacteriaceae & CP000468.1 & Enterobacteriaceae \\
		CP001383.1 & Enterobacteriaceae & AE005674.2 & Enterobacteriaceae \\
		\bottomrule
	\end{tabular}
\end{table}

\clearpage

\endgroup
\end{document}